\theoremstyle{plain}%
\newtheorem{theorem}{Theorem}
\newtheorem{proposition}[theorem]{Proposition}%
\newtheorem{lemma}[theorem]{Lemma}
\newtheorem{corollary}[theorem]{Corollary}
\theoremstyle{remark}%
\newtheorem{remark}{Remark}%
\theoremstyle{definition}%
\numberwithin{equation}{section}
\numberwithin{theorem}{section}
	\newcommand{\R}{\mathbb{R}}
	\newcommand{\cB}{\mathcal{B}}
	\newcommand{\cD}{\mathcal{D}}
	\newcommand{\cF}{\mathcal{F}}
	\newcommand{\cG}{\mathcal{G}}
	\newcommand{\cH}{\mathcal{H}}
	\newcommand{\cL}{\mathcal{L}}
	\newcommand{\cN}{\mathcal{N}} 
	\newcommand{\N}{\mathcal{N}}       
	\newcommand{\cR}{\mathcal{R}}
	\newcommand{\cS}{\mathcal{S}}
	\newcommand{\cX}{\mathcal{X}}
	\newcommand{\nc}{\newcommand}
	\newcommand{\Del}{\Delta}
	\nc{\bP}{\bar{P}}
	\nc{\bQ}{\bar{Q}}
	\newcommand{\Cb}{\mathbb{C}}
	\newcommand{\Rb}{\mathbb{R}}
	\newcommand{\Hc}{\mathcal{H}}
	\newcommand{\si}{\sigma}
	\newcommand{\eps}{\epsilon}
	\newcommand{\g}{\gamma}
	\newcommand{\al}{\alpha}
	\newcommand{\Si}{\Sigma}
	\DeclareMathOperator{\supp}{supp}
	\DeclareMathOperator{\dist}{dist}
	\newcommand{\grad}{\nabla}
	\newcommand{\Lap}{\Delta}
	\newcommand{\di}{\partial}
	\DeclareMathOperator{\Tr}{Tr}
	\nc{\ran}{\rangle}
	\nc{\lan}{\langle}
	\newcommand{\ra}{\rightarrow}
	\renewcommand{\Im}{\mathrm{Im}} 
	\newcommand{\tr}{\mathrm{Tr}}
	\nc{\bfone}{{\bf 1}}
	\newcommand{\p}{\partial}
	\newcommand{\n}{\nabla}
	\newcommand{\DETAILS}[1]{}
	\newcommand{\one}{\ensuremath{\mathbf{1}}}
	\newcommand{\br}[1]{\left\langle#1\right\rangle}
	\nc{\den}{\text{den}}
	\nc{\ex}{\text{xc}}
	\nc{\Ex}{\text{Xc}}
	\DeclareMathOperator{\Prob}{Prob}
	\DeclareMathOperator{\Ad}{ad}
	\newcommand{\ad}[3]{\mathrm{ad}^{#1}_{#2}(#3)}
	\newcommand{\Rem}{\mathrm{Rem}}
	\newcommand{\cp}{\mathrm{c}}
	\newcommand{\omn}[1]{\Tr\del{#1\rho_0}}
	\newcommand{\abs}[1]{\ensuremath{\left\lvert#1\right\rvert}}
	\newcommand{\norm}[1]{\ensuremath{\left\lVert#1\right\rVert}}
	\newcommand{\sbr}[1]{\left[#1\right]}
	\newcommand{\Set}[1]{\left\{#1\right\}}
	\newcommand{\md}[6]{\ensuremath{
			\ifinner
			\tfrac{\partial{^{#2}}#1}{\partial{#3^{#4}}\partial{#5^{#6}}}
			\else
			\tfrac{\partial{^{#2}}#1}{\partial{#3^{#4}}\partial{#5^{#6}}}
			\fi
	}}
	\newcommand{\del}[1]{\left(#1\right)}
	\newcommand{\thmref}[1]{Theorem~\ref{#1}}
	\newcommand{\secref}[1]{Section~\ref{#1}}
	\newcommand{\lemref}[1]{Lemma~\ref{#1}}
	\newcommand{\propref}[1]{Proposition~\ref{#1}}
	\newcommand{\remref}[1]{Remark~\ref{#1}}
	\newcommand{\figref}[1]{Figure~\ref{#1}}
	\newcommand{\corref}[1]{Corollary~\ref{#1}}
	\definecolor{green}{rgb}{0.0, 0.5, 0.5}
	\definecolor{yellow}{rgb}{0.5, 0.5, 0}
	\definecolor{lgray}{gray}{0.9}
	\definecolor{llgray}{gray}{0.95}
	\definecolor{lllgray}{gray}{0.975}
\begin{document}
		
		\title[Light cones for open quantum systems]{Light cones for open quantum systems
		}		
		
	 \author[S.~Breteaux]{S\'ebastien Breteaux}
\address{Institut Elie Cartan de Lorraine, Universit\'e de Lorraine, 57045 Metz Cedex 1, France}
 \email{sebastien.breteaux@univ-lorraine.fr}

 \author[J.~Faupin]{J\'er\'emy Faupin}
\address{Institut Elie Cartan de Lorraine, Universit\'e de Lorraine, 57045 Metz Cedex 1, France}
 \email{jeremy.faupin@univ-lorraine.fr}

		\author[M.~Lemm]{Marius Lemm}
		\address{Department of Mathematics, University of T\"ubingen, 72076 T\"ubingen, Germany}
		\email{marius.lemm@uni-tuebingen.de}
		
		\author[D.H.~Ouyang]{Dong Hao Ou Yang}
		\address{Department of Mathematics, University of Toronto, Toronto, ON M5S 2E4, Canada }
		\email{donghao.ouyang@mail.utoronto.ca}

		\author[I.~M.~Sigal]{Israel Michael Sigal}
		\address{Department of Mathematics, University of Toronto, Toronto, ON M5S 2E4, Canada }
		\email{im.sigal@utoronto.ca}

		\author[J.~Zhang]{Jingxuan Zhang
		}
		\address{Department of Mathematical Sciences, University of Copenhagen, Copenhagen 2100, Denmark}
		\email{jingxuan.zhang@math.ku.dk}

		\date{March 15, 2023}
		\subjclass[2020]{35Q40   (primary); 81P45   (secondary)}
		\keywords{Maximal propagation speed; Open quantum systems; quantum information; quantum light cones}
		\begin{abstract}
			We consider Markovian open quantum dynamics (MOQD). We show that, up to small-probability tails, the supports of quantum states evolving under such dynamics propagate with finite speed in any finite-energy subspace.

			More precisely, we prove that if the initial quantum state is localized in space, then any finite-energy part of the solution of the von Neumann-Lindblad equation is approximately localized inside an energy-dependent light cone. We also obtain an explicit upper bound for the slope of this light cone. 
			
		\end{abstract}
		
		\maketitle

		\section{Introduction}
		While non-relativistic quantum theory does not possess the strict light cone of relativistic theories, it has been shown in many contexts that its dynamics nonetheless exhibits a maximal speed bound up to small-probability leakage. By analogy, one speaks of a (system-dependent) \textit{light cone} also in these cases. Existence of such light cones has been rigorously derived in standard QM \cite{APSS,HerbstSkib,SigSof,Skib}, for non-relativistic QED models \cite{BFS}, and for nonlinear Schrödinger equations \cite{AFPS}. Famously, Lieb and Robinson \cite{LR} first derived the existence of light cones in quantum spin systems. Their eponymous Lieb-Robinson bounds have developed into an extremely active research area starting in the early 2000s \cite{H04,H07,HW,NRSS,NS1} and continues to grow in scope, e.g., with recent extensions to lattice fermions \cite{GGC,NSY1}, lattice bosons \cite{FLS1,FLS2,KSV,SHOE, SZ,WH,YL} and long-range interactions \cite{Fossetal,GGC,Tranetal}. The existence of a maximal speed bound in a quantum theory is a fundamental statement about its non-equilibrium properties which serves as the backbone of many proofs. For instance, it played an essential role in scattering theory \cite{Der,SigSof2} and, in quantum information theory Lieb-Robinson bounds were used to prove the celebrated area law for entanglement entropy \cite{H04} and bounds on quantum state transfer \cite{EW}. They are also central to  the notion of quantum phase defined via quasi-adiabatic continuation \cite{HW,NSY2}.

		In this paper, we consider quantum particles governed by the Schrödinger operator $H=-\Delta+V$ that interact with an environment. We show that the corresponding Markovian open quantum dynamics (MOQD) exhibit an energy-dependent light cone, i.e., initially localized states propagate at most with a maximal speed. Previous results about maximal speed bounds of MOQD either concerned lattice systems (where the mechanism for maximal speed is different \cite{NVZ,Pou}) or it excluded the most interesting case when the Hamiltonian $H$ is a standard Schrödinger operator \cite{BFLS}. In this paper, we resolve this question and show that coupling quantum-mechanical particles to an environment cannot lead to acceleration of any finite-energy portion. For this purpose, we develop microlocalization techniques involving functions of noncommuting operators $H$ and $x_j$. To fix ideas, we work on $L^2(\mathbb{R}^d)$ but we expect that our approach could be extended to abstract Hilbert space with abstract noncommuting self-adjoint operators $H$ and $x_j$.
				

		\medskip
		
		\subsection{Setup and main result}
		\label{ref:result}
		We study the long-time behaviour of solutions to the von Neumann-Lindblad (vNL) equation:  
		\begin{align}\label{vNLeq}
			&\frac{\partial\rho_t}{\partial t}=-i[H,\rho_t]+\frac12\sum_{j\geq 1}\big([W_{j},\rho_t W_{j}^{*}]+[W_{j}\rho_t,W_{j}^{*}]\big). 
		\end{align}
		Here $\rho_t,\,t\ge0$ is a family of density operators (i.e.~non-negative-definite operators with unit trace) on a Hilbert space $\cH$,   $H$ is the quantum Hamiltonian, a self-adjoint operator on $\mathcal{H}$,  and the $\{W_{j}\}$ are bounded operators, arising from interaction with the environment.

		We show that, for any $E$, there exists $\kappa= \kappa(E)>0$ such that, for any  initial condition $\rho_0$ localized in $X\subset\R^d$ and for any $c>\kappa$, the probability that the system in the state $\rho_t$ is localized in $\cH_E\cap X^\cp_{ct}$ is arbitrarily small, asymptotically as $t\to\infty$, where $\cH_E$ is the spectral subspace 
		\begin{equation*}
			\cH_E:=\{H\le E\}\equiv\mathrm{Ran}(\one_{(-\infty,E]}(H))
		\end{equation*}
		and $X^\cp_{ct}=\R^d\setminus X_{ct}$ with \begin{equation}\label{Xct}
			X_{ct}\equiv  \Set{x\in\Rb^d:d_X(x)\le ct}
		\end{equation}  the light cone corresponding to a smoothed out distance function $d_X(\cdot)$ defined in \eqref{dX} below. Put differently, there exists an energy-dependent light cone for \eqref{vNLeq} with slope $\kappa$.\\

		Throughout this article, we let $\mathcal{H}=L^2(\mathbb{R}^d)$, $d\ge1$. 
		We make no distinction in our notation between functions and the operators of multiplication defined by those functions. 		For an operator $A$ on $\cH$,  denote by $\mathcal{D}(A)\subset \cH$ the domain of $A$.	
		
		We now set out the main assumptions in this paper. 
		We take the Hamiltonian 
		$H$ in \eqref{vNLeq} to be the standard Schr\"odinger operator,
		\begin{equation}
			\label{H}
			H=-\Lap+V(x), \quad V:\Rb^d\to\Rb. 
		\end{equation} 
		Then, for some fixed integer $n\ge1$,  we assume
		\begin{enumerate}[label=\textbf{(H)}] 
		\item\label{H1}  There exist $\rho>0$ and $C>0$ such that 
		\begin{equation}
			\label{V-cond} 
			\abs{\di^\al V(x)}\le C \br{x}^{-\abs{\al}-\rho}\quad (x\in\Rb^d,\,0\le\abs{\al}\le n).
		\end{equation}
		Here and below, we write $\br{\cdot}=\sqrt{1+\abs{\cdot}^2}.$
	\end{enumerate}	
	
	\begin{remark}
		If $V$ satisfies \ref{H1}, then it is bounded and therefore $H$ is self-adjoint on $\cD(-\Lap)$ (see e.g. \cite{CFKS}) and bounded from below. 
	\end{remark}

	For the operators $W_j,\,j\geq 1$ in \eqref{vNLeq}, we assume, for the same integer $n\ge1$ as in \ref{H1}:
	\begin{enumerate}[label=\textbf{(W\arabic*)}]
		\item\label{W1} For all integers $j\geq 1$, $W_j\in\cB(\cH)$ and the series $\sum_{j=1}^\infty W_{j}^{*}W_{j}$  converges strongly in $\cB(\cH)$ (and consequently,  $\sum_{j=1}^\infty W_{j}^{*}W_{j}\in\cB(\cH)$);
		

		\item\label{W2} Let $C_A=\Ad_A: B\ra [A, B]$ and $p_q=-i\partial_{x_q}$. Then, for every $1\le q\le d$,		\begin{align}\label{W2-cond'}
			\sum_{j=1}^\infty\sum_{\substack{\sum(k_i+\ell_i)=n+1\\ k_i,\,\ell_i\ge0}}\|\prod_{i} \big[(\br x C_{p_q})^{k_i}C_{x_q}^{\ell_i} W_j\big]\|^2<\infty.
		\end{align}

	\end{enumerate}

	{		\begin{remark} Assumptions \ref{W1} and \ref{W2} can be ensured for example by taking the $W_j$'s to be suitable pseudodifferential operators. See also \cite[Section 1.4]{BFLS} and \cite[Section 4]{FFFS}
	\end{remark}}
	
	\begin{remark}
		Let $\cS_1$ stand for the Schatten space of trace-class operators. 
		Conditions \ref{H1} and \ref{W1} guarantee global well-posedness for \eqref{vNLeq} in the space  
		\begin{align}\label{domain_def}
			\cD&:=\{\rho\in\cS_1\mid \rho\cD(H)\subset\cD(H)\text{ and }[H,\rho]\in \cS_1\},
		\end{align}
		see below.
	\end{remark}
	
	\medskip

	For each subset $X\subset\Rb^d$, let $X^\cp:=\Rb^d\setminus X$ 	and $\chi_X^\sharp$ stand for the characteristic function of $X$.	
	The main result of this paper is the following:

	\begin{theorem}[{Main result}] \label{thm1}  
		Suppose Assumptions \ref{H1} and \ref{W1}--\ref{W2} hold. Let $X\subset \Rb^d$ be a bounded and closed subset.   Suppose $\rho_0\in\cD$ (see \eqref{domain_def}) is supported in $X$ in the sense that  \begin{equation}\label{lam-cond}
			\Tr(\chi_{X^\cp}^\sharp\rho_0)=0.
		\end{equation} 
		Then \eqref{vNLeq} has a unique solution $\rho_t\in\cD,\,t\ge0$, and for any $E\in\si(H)$ and  $c>\kappa
		$
		with $\kappa$ as in \eqref{kappa}, this solution satisfies
		\begin{equation}
			\label{1.2}
			\Tr(g(H)\chi^\sharp_{X_{ct}^\cp}g(H)\rho_t)\le  
			C_{n,E} t^{-n},
		\end{equation}
		for all  {$t> 0$} 

		and all smooth cutoff functions $g$ with $\supp(g)\subset (-\infty,E]$ and $0\le g\le1$, where $X_{ct}^\cp\equiv (X_{ct})^\cp$ and $C_{n,E}$ is a positive constant depending on $n$ and $E$.

	\end{theorem}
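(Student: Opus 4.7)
My plan is to use the \emph{propagation observable method} in the Heisenberg picture of the dual Lindblad semigroup. Fix $\kappa < \kappa' < c$ and, for $k = 0, 1, \ldots, n$, introduce the moment-type observables
$$\Phi_t^{(k)} := g(H) \,\phi_t^{(k)}\, g(H), \qquad \phi_t^{(k)}(x) := (d_X(x) - \kappa' t)_+^k,$$
together with $\beta_k(t) := \Tr(\Phi_t^{(k)} \rho_t)$. Since $(d_X - \kappa' t)_+^k \geq ((c - \kappa')t)^k$ on $X_{ct}^\cp$, a Chebyshev-type estimate yields
$$\Tr(g(H)\chi_{X_{ct}^\cp}^\sharp g(H)\rho_t) \le ((c-\kappa')t)^{-n}\, \beta_n(t),$$
reducing the theorem to the uniform bound $\beta_n(t) \le C_{n,E}$ for all $t > 0$.

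Differentiating $\beta_k$ along \eqref{vNLeq} and using the dissipator identity $\sum_j(W_j^* A W_j - \tfrac12\{W_j^*W_j, A\}) = \tfrac12\sum_j [W_j^*,[A,W_j]]$, one gets
$$\beta_k'(t) = \Tr\Bigl(\rho_t\, g(H)\Bigl(\partial_t \phi_t^{(k)} + i[H, \phi_t^{(k)}] + \tfrac12\sum_j [W_j^*,[\phi_t^{(k)},W_j]]\Bigr) g(H)\Bigr) + E_k(t),$$
where $E_k(t)$ collects the commutator errors from pushing $g(H)$ past $\phi_t^{(k)}$ and the $W_j$; these I would handle via the Helffer--Sj\"ostrand almost-analytic functional calculus, each resolvent commutator producing a spatial derivative of $\phi_t^{(k)}$ estimated via \ref{H1}. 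The crux is that $\partial_t \phi_t^{(k)} = -k\kappa'\phi_t^{(k-1)}$, whereas the microlocal maximum-velocity bound $\|p_j g(H)\| \le \sqrt{E+\|V\|_\infty}$ on $\Ran g(H)$ gives $\|g(H)[H,\phi_t^{(k)}]g(H)\| \le k\kappa\,\|g(H)\phi_t^{(k-1)}g(H)\| + o(1)$. For $\kappa' > \kappa$ these combine to produce the decisive negative contribution $-k(\kappa'-\kappa)\beta_{k-1}(t)$.

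For the Lindblad double commutator I would expand $[\phi_t^{(k)},W_j] = \sum_{|\alpha|\le n}\frac{1}{\alpha!}(\partial^\alpha \phi_t^{(k)})\, \mathrm{ad}^\alpha_{x}(W_j)/i + \Rem$, and similarly for the outer commutator with $W_j^*$. Since each spatial derivative of $\phi_t^{(k)}$ lowers its polynomial degree in $d_X$ by one, the full sum $\sum_j [W_j^*,[\phi_t^{(k)}, W_j]]$, paired against $\rho_t$, is bounded in absolute value by a linear combination of $\beta_{k-1}(t), \ldots, \beta_0(t)$ plus a bounded constant, with coefficients controlled by the $(n+1)$-fold multi-commutator norms in \ref{W2}. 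Combining everything yields the differential inequality
$$\beta_k'(t) \le -k(\kappa'-\kappa)\beta_{k-1}(t) + C\sum_{j<k}\beta_j(t) + C_{k,E},$$
and since $\beta_0(t) \le \Tr \rho_t = 1$ and \eqref{lam-cond} (together with the analogous commutator treatment of the initial time) implies $\beta_k(0) \le C_{k,E}$, a Gr\"onwall-type induction on $k$ yields $\beta_n(t) \le C_{n,E}$ uniformly, completing the proof.

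The main obstacle is making the microlocal velocity bound $\|g(H)[H,\phi_t^{(k)}]g(H)\|\le k\kappa\|g(H)\phi_t^{(k-1)} g(H)\| + o(1)$ rigorous and quantitative: because the potential $V$ is nonconstant, commuting $g(H)$ past the unbounded multiplier $\phi_t^{(k)}$ requires iterating the Helffer--Sj\"ostrand almost-analytic expansion up to $n+1$ times, with remainders carefully split between terms absorbable into the lower-order $\beta_j$'s and honest contributions to $E_k(t)$ that are controlled in turn via the moments $\beta_{k-1},\ldots,\beta_0$. This is the concrete content of the paper's announced microlocalization strategy for noncommuting $H$ and $x_j$, and is what fixes the explicit slope $\kappa$ in \eqref{kappa}.
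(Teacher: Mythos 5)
There is a genuine gap, and it sits at the heart of your scheme: the claimed uniform bound $\beta_n(t)\le C_{n,E}$ does not follow from the differential inequality you derive. Your inequality $\beta_k'(t)\le -k(\kappa'-\kappa)\beta_{k-1}(t)+C\sum_{j<k}\beta_j(t)+C_{k,E}$ necessarily contains the time-independent constant $C_{k,E}$, coming from the $(n+1)$-fold commutator remainders; in your setup these remainders are bounded operators of norm $O(1)$ with \emph{no small parameter attached}, because the observables $\phi_t^{(k)}$ are fixed polynomials in $d_X$ rather than functions rescaled by an auxiliary adiabatic parameter. Already for $k=1$ this forces $\beta_1(t)\le \beta_1(0)+Ct$ (the negative term $-(\kappa'-\kappa)\beta_0$ cannot absorb $C_{1,E}$ since $\beta_0\le 1$ and the error constants are not small), and iterating gives $\beta_n(t)\lesssim t^{n}$ at best. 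This exactly cancels the factor $((c-\kappa')t)^{-n}$ from your Chebyshev step, so the argument yields $\Tr(g\chi^\sharp_{X_{ct}^\cp}g\rho_t)=O(1)$ rather than $O(t^{-n})$. A secondary but equally serious problem is that the multiple commutators you need, e.g.\ $\mathrm{ad}^j_{\phi_t^{(k)}}(H)$, are \emph{unbounded and growing}: each commutator with $\phi_t^{(k)}$ multiplies by $\nabla\phi_t^{(k)}\sim d_X^{k-1}$, so the expansion does not close in bounded operators and the "absorb into lower moments" step is not available. (There is also an unaddressed circularity: finiteness of $\beta_k(t)=\Tr(\Phi_t^{(k)}\rho_t)$ for $t>0$ is itself part of what must be proved, since the evolution does not preserve compact spatial support and $g(H)d_X^kg(H)$ is unbounded.)

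The paper's proof avoids all three issues by building a \emph{second, adiabatic scale} $s$ into a bounded propagation observable $\chi_{ts}=\chi((d_X^E-vt)/s)$ with $\chi\in\cX$ a smooth monotone cutoff and $d_X^E=g(H)d_Xg(H)$. Then every commutator carries a factor $s^{-1}$, the leading term of the Heisenberg derivative is a genuinely nonpositive operator $-\frac{v-\kappa}{s}\chi'_{ts}$, the subleading terms $\frac{M_k}{s^k}(\xi^k)'_{ts}$ have the \emph{same recursive form} and are bootstrapped away (Theorem \ref{thm5.1} and Proposition \ref{prop:propag-est1}), and the final remainder is $O(s^{-n-1})$, which integrates to $O(ts^{-n-1})$ and becomes $O(t^{-n})$ upon setting $s=t$. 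The localization $g(H)$ is incorporated into $d_X^E$ precisely so that all multiple commutators with $H$ and $W_j$ are bounded (Section \ref{sec:mult-comm-est}), and the initial localization enters through the separate semiclassical estimate $\chi_X^\sharp\chi_{0s}^E\chi_X^\sharp=O(s^{-n})$ (Proposition \ref{cl:1}). If you want to salvage your moment idea, you would essentially have to reintroduce this adiabatic parameter, at which point you recover the paper's ASTLO construction.
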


	\begin{remark}
		For the energy-dependent speed $\kappa$ defined in \eqref{kappa}, we have the following estimate:
		\begin{align} \label{kappa-est}
			\kappa\le C(1+|E|)^{1/2} \text{ for some fixed $C>0$ and all }X\subset \Rb^d,E\in\Rb. 
		\end{align}
		Moreover, the constant $C_{n,E}$ in \eqref{1.2} grows polynomially with $E$.
	\end{remark}

	{\thmref{thm1} solves an open problem from \cite{BFLS}, namely, to derive a light cone for MOQD when the Hamiltonians is a standard Schr\"odinger operator $-\Delta +V$ (a situation not covered by the methods in \cite{BFLS}).}
	
	\thmref{thm1} is proved in  \secref{sec:main-thm-pf}. 
	\thmref{thm1} implies that ``microlocally'' the propagation speed for \eqref{vNLeq} is finite, and yields an upper bound for the maximal speed of propagation of initially localized states. Indeed, 		define the  probability
	\begin{equation}\label{1.9s}
		\Prob_{\rho_t, E} \del{Y}:=\Tr(g_E(H)\chi^\sharp_Yg_E(H)\rho_t)
	\end{equation}
	for the system  in the state $\rho_t$ to be in the part of the state (phase) space where $x\in Y$ and $ H\le E$. 
	With notation \eqref{1.9s} and, recall, $X_{ct}^\cp\equiv (X_{ct})^\cp$, the exterior of the light cone $X_{ct}$ in \eqref{Xct}, \thmref{thm1} says that 
	$$\Prob_{\rho_t, E}(X_{ct}^\cp)\le  {C_{n,E}}t^{-n}.$$
	
	 The constant $C_{n,E}$ in \eqref{1.2} depends on the difference $c-\kappa>0$ (through \eqref{propag-est3} below). For brevity of notation, we do not display the dependence on $c-\kappa$.

	{\subsection{Explicit formula for $\kappa$} }
	{In equations \eqref{gamma}-\eqref{kappa} below, we provide an explicit formula for the number $\kappa$ in \thmref{thm1}. Physically, $\kappa$ bounds the propagation speed (also called ``speed of sound'') in the energy-constrained open quantum system. Naturally, $\kappa$ depends on the system parameters and the energy cutoff.} 
		
	We first introduce some notations. For each closed set $X\subset\R^d$, we define the \textit{smoothed distance function} to $X$, $d_X\in C^\infty(\Rb^d)$ in the following way. Let $\epsilon_0>0$ be a fixed parameter (the estimate \eqref{1.2}, in particular, depends on this arbitrary parameter). Let
	\begin{equation}\label{dX}
		d_X(x)\equiv d_{X,\eps_0}(x)\left\{\begin{aligned}
			&=0,\quad &\dist_X(x)=0,\\
			&\ge 0,\quad \quad &0< \dist_X(x)<c_1 \eps_0,\\
			&=\delta_X(x)-\eps_0,\quad &\dist_X(x)\ge c_1\eps_0,
		\end{aligned}\right.
	\end{equation}
	where $\delta_X\in C^\infty(\Rb^d)$ satisfies $c_1 \dist_X (x) \le \delta_X(x) \le c_2 \dist_X (x)$ for some $c_1,c_2>0$, and
	\begin{align}
		\dist_X^{\abs{\al}-1}(x)\abs{\di^\al d_X(x)}\le C_\alpha \label{dX2}\quad (x\in\Rb^d,\,0\le \abs{\al}),
	\end{align}
	for some absolute constants $C_\alpha>0$.  	
	In one-dimension, such functions are easy to construct, see the schematic diagram \figref{fig:phi}. In any dimension, one can proceed as follows. By the extension theorem of Whitney (see e.g. \cite[Theorem 6.2.2]{Stein70}), there exists a function $\delta_X$ defined in $X^c$ such that 
	\begin{align*}
		&c_1 \dist_X (x) \le \delta_X(x) \le c_2 \dist_X (x) , \quad \text{for all }x\in X^c \\
		&\delta_X \text{ is } C^\infty \text{ in } X^c \text{ and } \dist_X^{\abs{\al}-1}(x) \partial^\alpha\delta_X(x) \le C_\alpha, \quad \text{for all }x\in X^c \text{ and } |\alpha|\ge0,
	\end{align*}
	where $c_1,c_2,C_\alpha$ are positive constants independent of $X$. Let $f_{\eps_0}:\R\to\R$ be a $C^\infty$ function such that $f_{\eps_0}(x)=0$ if $x\le{\eps_0}/2$, and $f_{\eps_0}(x)=x-\eps_0$ if $x\ge\eps_0$. We can then define
	\begin{equation*}
		d_X(x):=f_{\eps_0}(\delta_X(x))
	\end{equation*}
	and verify that it satisfies the conditions above.
	
	\begin{figure}[H]
		\centering
		\begin{tikzpicture}[ scale=4]
			\draw[-] (-1.5,0)--(1,0);
			\node[right] at (1,0) {$\Rb^d$};
			
			\draw[very thick] (-1.5,0)--(-.6,0);
			\node[below] at (-1.2,0) {$X$};
			\draw [fill] (-.6,0) circle [radius=0.02];
			\node[below] at (-.6,0) {$\di X$};
			
			
			\draw[very thick] (-.6,0) [out=0, in=-135,] to (-.1,.2) [out=45, in=-135] to (.5,.8);
			
			\draw[thick,dashed] (-.6,0)-- (.2,.8);
			
			\node[right] at (.5,.8) {$d_X(x)$};
			\node[left] at (.17,.8) {$\dist_X(x)$};
			
		\end{tikzpicture}
		\caption{Schematic diagram illustrating $d_X\equiv d_{X,\eps}$ in \eqref{dX}. }\label{fig:phi}
	\end{figure}
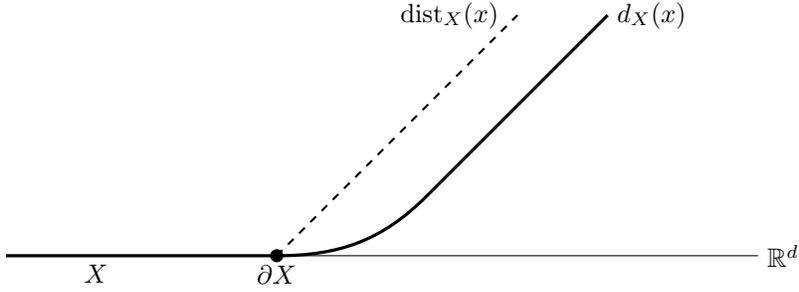

	We fix $E\in\si(H)$ and a function $g\in C^\infty(\Rb)$ satisfying $0\leq g\leq 1$ and, for some small $\eps>0$, 
	\begin{equation}\label{g-cond}
		g(\mu)\equiv 1 \text{ for }\mu\le E-\eps,\quad g(\mu)\equiv 0 \text{ for }\mu\ge E,
	\end{equation}
	and define the \textit{smooth energy cutoff} operator
	\begin{equation}\label{g}
		g:=g(H). 
	\end{equation}
	\begin{remark}\label{remG}
		Since $g(H)=(g\chi^\#_{\si(H)})(H)$, the values of $g$ outside of $\si(H)$ are irrelevant. Since, moreover, $H$ is bounded from below by \ref{H1}, one can always take $g$ to have compact support if needed.
	\end{remark}

	Considering the multiplication operator $d_X$ by the smoothed distance function $d_X(x)$, introduced in \eqref{dX} above, we define 	the spectrally localized distance function
	\begin{equation}\label{gE}
		d_X^E:=gd_Xg\quad\quad\text{ defined on } \quad \{u\in\Hc: g u\in \cD(d_X)\}.
	\end{equation} 
	Now, we define the \textit{energy-dependent velocity operator}
	\begin{align}\label{gamma}
		\g\equiv \g(X,E):=
		i[H,d_X^E]+\frac{1}{2}\sum_{j\geq 1} \big(W_j^* [d_X^E, W_j]+[W_j^*, d_X^E] W_j\big).
	\end{align}
	It is shown in \secref{sec:mult-comm-est} that $\g$ is bounded on $\cH$: 
	\begin{align} \label{kappa} 
		\kappa :=\norm{\g} <\infty,
	\end{align}
	provided assumptions \ref{H1} and \ref{W2} hold. 
	Notice  that the bound on $\kappa$ is independent of $X$, see \eqref{kappa-est}.
	Formally, the velocity operator \eqref{gamma} has a simple origin: 
	\begin{align}\label{gamma'}\g\equiv \g(X,E)=L' (d_X^E),\end{align}
	where  $L'$ is the operator acting on the space of observables $\mathcal{B}(\mathcal{H})$, which is dual to the operator $L$ defined by the r.h.s. of \eqref{vNLeq}, see \eqref{1.2.1} below.

	Under a different set of assumptions, an estimate  similar to \eqref{1.2} is shown in \cite{BFLS} with $O(t^{-n})$ remainder for any $n\ge1$. The assumptions made in \cite{BFLS}  exclude in \eqref{vNLeq} the 
	Schr\"odinger operators  \eqref{H}. 
	
	It is straightforward to show that under the conditions \ref{W1},
	\begin{equation}\label{H'}
		\text{			$ V(x)$ in \eqref{H} is $\Delta$-bounded with relative bound strictly less than $1$,}
	\end{equation}
	and for any $\rho_0\in\cD$ (see \eqref{domain_def}), Eq.~\eqref{vNLeq} has a solution in $\cD$. 
	For more detailed discussions,  see Appendix \ref{sec:exist} below and Refs.~\cite[Section 5.5]{Davies}, \cite[Appendix A]{FFFS}, \cite{OS}.
	Note that Condition \eqref{H'} holds e.g. for every $V\in L^2(\Rb^d)+L^\infty(\Rb^d)$ and  is much weaker than \ref{H1}.

	One can show further (see \cite{AlickiLendi,Davies,FFFS,IngardenKossakowski,kossa} and Appendix \ref{sec:exist}) that the operator $L$ defines a completely positive, trace-preserving, strongly continuous semigroup of contractions. In particular, for any initial state $\rho_0\in\cD$, the solution  $\rho_t,\,t\ge0,$ to \eqref{vNLeq} satisfies
	\begin{equation}\label{1.100}
		\rho_{t}\ge 0, \quad \text{ if } \quad  \rho_{0}\ge 0,  \quad  \text{ and } \quad{\tr\rho_{t}=\tr\rho_{0}}.
	\end{equation}

	Finally, we give the explicit expression of the operator $L'$ in \eqref{gamma'} and its domain. 	Let $L$ be the operator defined by the r.h.s.~of \eqref{vNLeq} on its natural domain $\cD$ (see \eqref{domain_def}), and $L'$ be the operator acting on the space of observables $\mathcal{B}(\mathcal{H})$, which is dual to $L$ with respect to the coupling $(A, \rho):= \tr(A \rho)$, i.e.,
	\begin{equation}\label{1.2.1}
		\tr(A L\rho)= \tr((L'A) \rho), 
	\end{equation}for $\rho\in\mathcal{D}(L)$ and $A\in \mathcal{D}(L')\subset\mathcal{B}(\mathcal{H})$.
	\footnote{$L'$ generates  the dual Heisenberg-Lindblad  evolution $ \partial_t A_t= L' A_t$ of quantum observables.} 
	Explicitly,  the dual vNL operator $L' $ defined in \eqref{1.2.1} is given by:
	\begin{align}\label{L'}
		& L' =L'_0+G',\  \qquad L'_0A 
		=i [H,A],\\
		\label{G'}& G'A:=\frac{1}{2}\sum_{j\geq 1}(W_j^* [A, W_j]+[W_j^*, A] W_j),
	\end{align}
	with domain 
	\begin{align}\label{domain-L'}
		&\mathcal{D}(L')\equiv\,\mathcal{D}(L'_0)\equiv\big\{A\in \mathcal{B}(\mathcal{H})\, | \, A\mathcal{D}(H)\subset\mathcal{D}(H) \text{ and }\notag\\
		&\sbr{H,A}\text{ defined on }\mathcal{D}(A)\cap \cD(H) \text{ extends to an operator on } \mathcal{D}(\mathcal{H}) \big \}.
	\end{align}

	\paragraph{\bf Notation.} 	In the remainder of this paper, $\|\cdot\|$ stands either for the norm of vectors in $\cH$, or for the norm of operators on $\cH$, which one is meant is always clear from the context.	For two bounded operators $A,\,B$, the notation \begin{equation}\label{Oh}A=O(B)\end{equation} means that $\norm{A}\le C_{n,E} \norm{B}$ for some $C_{n,E}>0$ independent of $A\,,B\,,t\,,s$. As above, we will write $$X_a:=\Set{x\in\Rb^d:d_X(x)\le a}\text{ for $a\ge0$},\quad X_{ct}^\cp\equiv (X_{ct})^\cp.$$   {In all our estimates, it is understood that, if $n=1$, the sums $\sum_{k=2}^n(\cdots)$ should be dropped. 

	\medskip

	\section{Recursive monotonicity estimate}\label{sec:RME}

	We work in this section in an abstract setting, with $H$ a self-adjoint operator on a Hilbert space $\cH$ and, for $j=1,2,\dots$, $W_j$ bounded operators in $\cH$ such that $\sum_{j\ge1}W_j^*W_j$ strongly converges in $\cH$. We consider the vNL operator
	\begin{align*}
		&L\rho=-i[H,\rho]+\frac12\sum_{j\geq 1}\big([W_{j},\rho W_{j}^{*}]+[W_{j}\rho,W_{j}^{*}]\big),
	\end{align*}
	defined on the domain \eqref{domain_def}, as well as the dual operator $L'$ defined as in \eqref{1.2.1}--\eqref{domain-L'}.
	
	We consider in addition a self-adjoint operator $\Phi$ on $\cH$, semi-bounded from below. We assume that 
	\begin{equation}\label{Phi-dom-cond-0}
		(\Phi+c)^{-1}\cD(H)\subset \cD(H),
	\end{equation}
	for some $c\ge0$ 
	and there is an integer $n\ge1$ such that, for all $k=1,\dots,n+1$,
	\begin{equation}\label{rme-cond}
		M_k:=1+\norm{\ad{k}{\Phi}{H}}^2 +\|\sum_{j\geq 1}W_j^*W_j\|+\sum_{j\ge 1}\norm{\ad{k}{\Phi}{W_j}}^2<\infty.
	\end{equation}
	Hence 
	\begin{equation}\label{eq:mu-n}
		\mu_n:=\max_{2\le k\le n+1}M_k
	\end{equation}
	is finite.

	Later on, $H$ will be the Schr\"odinger operator \eqref{H} satisfying \ref{H1}, $W_j$ will be bounded operators satisfying \ref{W1}--\ref{W2} and $\Phi$  will be taken to be the operator  $\Phi\equiv \phi^E=g \phi g$ with $g\equiv g(H)$ described in \eqref{g-cond} and some $\phi\in C^\infty(\Rb^d)$,   see Section \ref{sec:mult-comm-est}.
	
	As in \eqref{gamma}--\eqref{kappa} we set
	\begin{align} \label{kappa-0} 
		\kappa_\Phi := \norm{ i[H,\Phi]+\frac{1}{2}\sum_{j\geq 1} \big(W_j^* [\Phi, W_j]+[W_j^*, \Phi] W_j\big) }.
	\end{align}

	The main result of this section is a key differential inequality, \eqref{rme}. The proof of this inequality is \emph{the only place} where the information about equation \eqref{vNLeq} is used. 
	
	\medskip

	\subsection{ASTLO and RME} 
	We construct a class of observables, which we call  \textit{adiabatic spacetime localization observables (ASTLOs)}, which play the central role in our analysis.

	For a constant $\delta>0$ specified later on, we define a set of smooth cutoff functions
	\begin{equation}\label{F}
		\begin{aligned}
			\cX\equiv \cX_{\delta}
			:=&\Set{\chi\in C^\infty(\R)\left|
				\begin{aligned}
					&\supp \chi\subset \Rb_{\ge0}, \supp \chi'\subset (0,\delta/2)\\
					&\chi^\prime\ge 0,\,\sqrt{\chi'}\in C^\infty(\R)
				\end{aligned}\right.
			}.
		\end{aligned}
	\end{equation}
	See \figref{fig:chi} below. 
	\begin{figure}[h]
		\centering
		\begin{tikzpicture}[scale=3]
			\draw [->] (-1,0)--(1.5,0);
			\node [right] at (1.5,0) {$\mu$};
			
			\node [below] at (0,0) {$0$};
			\draw [fill] (0,0) circle [radius=0.02];
			\node [below] at (.65,0) {$\delta/2$};
			\draw [fill] (.65,0) circle [radius=0.02];
			

			\draw [very thick] (-1,0)--(.1,0) [out=5, in=-175] to (.65,1)--(1.5,1);

			\draw [->] (-.1,.5)--(.3,.5);
			\node [left] at (-.1,.5) {$\chi(\mu)$};
			
			
		\end{tikzpicture}
		\caption{Schematic diagram illustrating $\chi\in \cX$.}
		\label{fig:chi}
	\end{figure}
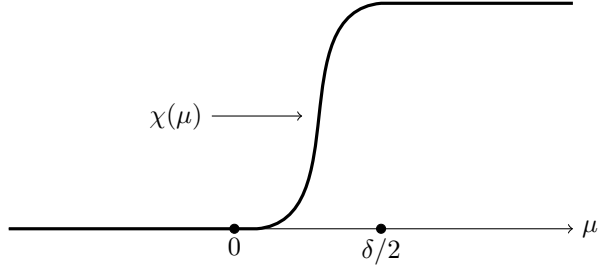
	
	We note that $\chi\ge0$ for $\chi\in\cX$, and the following two properties hold:
	\begin{enumerate}[label=(X\arabic*)]
		\item \label{X1}If $w\in  C_c^\infty$ and  $\supp w\subset (0,\delta/2)$, then the antiderivative $\int^x w^2\in \cX$. 
		\item \label{X2}If $\xi_1,\ldots,\xi_N\in\cX$, then $\xi=(\xi_1^{\frac12}+\cdots+\xi_N^{\frac12})^2$ satisfies $\xi \in \cX$ and $\xi_1+\cdots +\xi_N\le  {\sqrt{N} \xi}$. 
	\end{enumerate}

	For a function $\chi\in \cX$,  a densely defined self-adjoint operator $\Phi$,  a constant $v\in (\kappa,c)$ and $s>t\ge0$, we define a family of self-adjoint operators 
\begin{equation}\label{chi-ts}
	\chi_{ts}=\chi\del{\frac{\Phi-vt}{s}}.
\end{equation} 
Following \cite{BFLS}, we use the method of propagation observables. 
Let $\beta_t'$ be the evolution generated by the operator $L'$, i.e. ${d\over{dt}}\beta'_t(\Psi)=\beta'_t(L'\Psi)$ for all observables $\Psi$ in $\mathcal{D}(L')\subset\cB(\cH)$. 
For a differentiable family of bounded operators $\Psi_t \in \mathcal{D}(L')$, $t\ge0$, we then have the relation  
\begin{align}
	{d\over{dt}} \beta_t'(\Psi_t) =&\beta_t'(D\Psi_t),\label{2.1}\\
	D \Psi_t=&L' \Psi_t 
	+\partial_t\Psi_t.\label{2.2}
\end{align}

As in \cite{BFLS}, we call the operation $D$ the {\it Heisenberg derivative}.

Note that the condition \eqref{Phi-dom-cond-0} ensures that for all $t$, $s$, the bounded observable $\chi_{ts}$ belongs to the domain of $L'$ and also that the commutator expansion \lemref{lemA.2} can be applied. The main result of this section is the following:

\begin{theorem}[recursive monotonicity estimate]\label{thm5.1}
	Suppose that \eqref{Phi-dom-cond-0}--\eqref{rme-cond} hold. Let $\chi\in\cX$ and let $\chi_{ts}$ be the operator defined in \eqref{chi-ts}. 
	Then there exists $C=C(n,\chi)>0$ and, if $n\ge2$, $\xi^k=\xi^k(\chi)\in\cX,\,k=2,\ldots,n$, such that as self-adjoint operators,
	\begin{align}\label{rme}
		D\chi_{ts}\le -\frac{v-\kappa_\Phi}{s}\chi'_{ts}+ \sum_{k=2}^n\frac{M_k}{s^{k}} (\xi^k)'_{ts}+C\frac{\mu_{n}}{s^{n+1}},
	\end{align}
	where $\kappa_\Phi>0$ is as in \eqref{kappa-0} and $M_k$ and $\mu_n$ are defined in \eqref{rme-cond} and \eqref{eq:mu-n}.
\end{theorem}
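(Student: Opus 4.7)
My plan is to control $D\chi_{ts}=\partial_t\chi_{ts}+L'\chi_{ts}$ by expanding every commutator in $L'\chi_{ts}$ to order $n+1$ in powers of $s^{-1}$ via the commutator-expansion lemma \lemref{lemA.2}, then isolating and symmetrizing the leading $s^{-1}$ contribution so that the operator-norm bound $\kappa_\Phi$ furnishes exactly the announced constant $v-\kappa_\Phi$. The trivial time derivative produces $\partial_t\chi_{ts}=-(v/s)\chi'_{ts}$, which combines with the leading part of $L'\chi_{ts}$ to give the principal negative term of \eqref{rme}.

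For the operator part, I would apply \lemref{lemA.2} separately to $[H,\chi_{ts}]$, $[\chi_{ts},W_j]$, and $[W_j^*,\chi_{ts}]$. Each such expansion yields, for $k=1,\dots,n$, terms of the schematic form $s^{-k}\chi^{(k)}_{ts}$ paired with the iterated commutators $\ad{k}{\Phi}{H}$, $\ad{k}{\Phi}{W_j}$, $\ad{k}{\Phi}{W_j^*}$, together with a remainder of order $s^{-(n+1)}$ whose operator norm is controlled by $\mu_n$ via \eqref{rme-cond} and the strong convergence of $\sum_j W_j^*W_j$.

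The key step is to recognize that, after writing $\chi'_{ts}=(\sqrt{\chi'_{ts}})^{2}$ and commuting one square root through $H$, $W_j$, and $W_j^*$, the order-$s^{-1}$ contribution to $L'\chi_{ts}$ reduces to the symmetric sandwich
\begin{equation*}
\frac{1}{s}\sqrt{\chi'_{ts}}\,\Gamma\,\sqrt{\chi'_{ts}},\qquad \Gamma:=i[H,\Phi]+\tfrac{1}{2}\sum_{j\ge 1}\bigl(W_j^*[\Phi,W_j]+[W_j^*,\Phi]W_j\bigr),
\end{equation*}
modulo new commutators $[\sqrt{\chi'_{ts}},\cdot]$ which are themselves $O(s^{-1})$ and hence contribute only at orders $\ge s^{-2}$. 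Since $\Gamma$ is self-adjoint with $\|\Gamma\|=\kappa_\Phi$ by \eqref{kappa-0}, we have $\Gamma\le\kappa_\Phi I$, and sandwiching with a nonnegative operator preserves the inequality, yielding $s^{-1}\sqrt{\chi'_{ts}}\Gamma\sqrt{\chi'_{ts}}\le s^{-1}\kappa_\Phi\chi'_{ts}$. Combining with $\partial_t\chi_{ts}$ produces the desired $-(v-\kappa_\Phi)s^{-1}\chi'_{ts}$.

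For the intermediate orders $2\le k\le n$, the expansion yields operators bounded above by $(M_k/s^k)|\chi^{(k)}_{ts}|$ in the self-adjoint sense; however, $\chi^{(k)}$ need not have a fixed sign. Property \ref{X1} of the class $\cX$ supplies the required domination: since $\chi^{(k)}$ has compact support in $(0,\delta/2)$, one can pick $w_k\in C_c^\infty$ supported there with $w_k^2\ge|\chi^{(k)}|$ and set $\xi^k(\mu):=\int_{-\infty}^{\mu}w_k^2$, so $\xi^k\in\cX$ and $(\xi^k)'\ge|\chi^{(k)}|$ pointwise. Functional calculus of $\Phi$ preserves this pointwise bound, giving $\pm\chi^{(k)}_{ts}\le(\xi^k)'_{ts}$ as self-adjoint operators, whence the $k$-th order piece is controlled by $(M_k/s^k)(\xi^k)'_{ts}$, as required. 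The most delicate obstacle will be the bookkeeping for the Lindbladian part of $L'\chi_{ts}$: each term of $G'(\chi_{ts})$ contains two $W_j$ factors, so the commutator expansion must be applied on both sides and the outcome re-symmetrized using inequalities of the form $A^*B+B^*A\le A^*A+B^*B$. This generates mixed commutators with $\sqrt{\chi'_{ts}}$ that must each be shown to fit into either the $\xi^k$-terms or the $O(\mu_n s^{-(n+1)})$ remainder, and the hypothesis \eqref{rme-cond} is written precisely so that $\sum_{j\ge1}\|\ad{k}{\Phi}{W_j}\|^2<\infty$ allows Cauchy--Schwarz to handle these sums uniformly in $j$.
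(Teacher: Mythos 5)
Your proposal follows essentially the same route as the paper: expand $[H,\chi_{ts}]$ and $G'(\chi_{ts})$ to order $n+1$ via \lemref{lemA.2}, symmetrize the order-$s^{-1}$ contribution into $s^{-1}\sqrt{\chi'_{ts}}\,\Gamma\,\sqrt{\chi'_{ts}}\le s^{-1}\kappa_\Phi\chi'_{ts}$, dominate the intermediate orders by $(\xi^k)'_{ts}$ with $\xi^k\in\cX$, and use \eqref{rme-cond} together with $P^*Q+Q^*P\le P^*P+Q^*Q$ to sum over $j$. One caveat: the order-$k$ cross terms $s^{-k}(\chi^{(k)}_{ts}B_k+\mathrm{h.c.})$ are \emph{not} bounded by $C\,|\chi^{(k)}_{ts}|$ as you assert (since $B_k=\ad{k}{\Phi}{H}$ need not commute with $\Phi$, the left-hand side is not supported where $\chi^{(k)}_{ts}$ is); the paper first inserts a cutoff $\theta^k_s$ with $\theta^k\equiv 1$ on $\supp\chi^{(k)}$ — at the cost of a further commutator expansion whose error is absorbed into the $O(s^{-(n+1)})$ remainder — and only then applies $P^*P+Q^*Q$, so that $(\xi^k)'$ must majorize $(\chi^{(k)})^2+\|B_k\|^2(\theta^k)^2$ rather than $|\chi^{(k)}|$.
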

This theorem is proved in \secref{sec:4}.

Since the second, remainder term on the r.h.s. is of the same form as the leading, negative term, we call \eqref{rme} the \textit{recursive monotonicity estimate (RME)}. It can be bootstrapped as in \propref{prop:propag-est1} to obtain an integral inequality with $O(s^{-n})$ remainder. We write, for $r\ge0$,
\begin{align}\label{2.2'-0}
	\chi_{ts}(r):= \beta_r'(\chi_{ts})\ \qquad \text{ and }\ \quad  \chi_{ts}'(r):= \beta_r'(\chi_{ts}').
\end{align}

\begin{proposition}\label{prop:propag-est1} 
	Suppose the assumptions  of \thmref{thm5.1} hold.
	Then, for all $c>\kappa_{\Phi}$ and $\chi\in \cX$, 
	there exist $C=C(n,\chi)>0$ and  $\xi^k\in \cX$, $2\le k\le n$ (dropped for $n=1$),  such that for all   
	$0\le t< s$, 
	\begin{align}
		&\int_0^t \chi_{rs}'(r) dr  \le C\mu_n^{n} \Big(s\chi_{0s}(0) 
		+ \sum_{k=2}^n s^{-k+2} \ \xi^k_{s}(0) +   {t}s^{-n}\Big) ,
		\label{propag-est31} 
	\end{align}
	where $\mu_n$ is given by \eqref{eq:mu-n}.
\end{proposition}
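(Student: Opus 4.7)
The plan is to integrate the recursive monotonicity estimate \eqref{rme} along the dual Heisenberg evolution $\beta'_r$ and then iterate the resulting inequality $n$ times. The key ingredients are the Heisenberg-derivative identity \eqref{2.1}--\eqref{2.2} and the positivity of $\beta'_r$ (the semigroup dual to the Lindblad semigroup generated by $L$ is completely positive and unital, hence preserves positivity of observables). Applying $\beta'_r$ to \eqref{rme} with $t$ replaced by $r$, rewriting $\beta'_r(D\chi_{rs})$ as $\tfrac{d}{dr}\beta'_r(\chi_{rs})$ via \eqref{2.1}, and integrating in $r\in[0,t]$ yields, after using $\beta'_t(\chi_{ts})\ge 0$ (since $\chi\ge 0$), the \emph{base estimate}
\begin{align*}
\int_0^t \chi'_{rs}(r)\,dr\le \frac{s}{v-\kappa_\Phi}\chi_{0s}(0)+\sum_{k=2}^n\frac{M_k}{(v-\kappa_\Phi)\,s^{k-1}}\int_0^t(\xi^k)'_{rs}(r)\,dr+\frac{C\mu_n\,t}{(v-\kappa_\Phi)\,s^{n}}.
\end{align*}

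Next I would iterate this bound: since each $\xi^k\in\cX$, \thmref{thm5.1} applies verbatim to $\xi^k$ in place of $\chi$, producing further cutoffs $\xi^{k,k'}\in\cX$ on the right-hand side. After $n$ such substitutions, the accumulated terms take three forms: ``main'' contributions $(\mathrm{const})\,\mu_n^{j}\,s^{1-(k_1+\cdots+k_j-j)}\,\xi^{k_1,\dots,k_j}_{0s}(0)$ with $k_i\ge 2$ and $j\le n$; a running remainder bounded by $\mathrm{const}\,\mu_n^{n}\,t/s^n$ after summing a geometric series in $s^{-1}$ (using $k_i\ge 2$ so $s^{1-k_i}\le s^{-1}$); and a surviving layer of unresolved integrals $\mu_n^n s^{-(k_1+\cdots+k_n-n)}\int_0^t(\xi^{k_1,\dots,k_n})'_{rs}(r)\,dr$. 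Because each $k_i\ge 2$ forces $k_1+\cdots+k_n-n\ge n$, bounding the integrand trivially by $\|(\xi^{k_1,\dots,k_n})'\|\le C$ produces an $O(\mu_n^n t/s^n)$ contribution here as well.

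It then remains to regroup the main terms by the power of $s$. Exponents lying in $\{0,-1,\dots,-(n-2)\}$ furnish precisely the terms $s^{-k+2}\xi^k_s(0)$ for $k=2,\dots,n$ in the statement; at each such exponent one has a finite collection of elements of $\cX$, which I fuse into a single element of $\cX$ using property \ref{X2} (and the fact that it dominates each summand). Main terms with strictly more negative exponents can be absorbed either into an enlarged $\xi^n_s(0)$ via \ref{X2} or into the $t/s^n$ remainder, using that the relevant cutoffs have uniformly bounded operator norm.

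The main technical obstacle will be the combinatorial bookkeeping through the $n$-fold recursion: at each stage new families of cutoffs are introduced whose cardinality grows with $n$, and one must verify (i) that they remain in $\cX$ after the combinations dictated by \ref{X2}, (ii) that the compounded numerical constants telescope to $\mu_n^n$ rather than a higher power, and (iii) that the undecayed integrals at the $n$-th level indeed carry the full power $s^{-n}$ needed for absorption into the final remainder --- the last point being precisely what the constraint $k_i\ge 2$ in the RME is engineered to guarantee.
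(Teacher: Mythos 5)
Your proposal follows essentially the same route as the paper: integrate the Heisenberg-derivative identity \eqref{2.1}, use the positivity of the dual semigroup $\beta'_r$ to apply the RME \eqref{rme} under the integral, drop the nonnegative term $\chi_{ts}(t)$, multiply by $s(v-\kappa_\Phi)^{-1}$ to obtain the base inequality \eqref{propag-est3}, and then bootstrap it into the lower-order integral terms, regrouping the resulting cutoffs by powers of $s$ via property \ref{X2}. The only (harmless) difference is cosmetic: you terminate the recursion by trivially bounding the deepest layer of integrals using the contraction property of $\beta'_r$ and the constraint $k_i\ge 2$, whereas the paper resolves every integral into a boundary term; both yield the stated $O(\mu_n^n\, t s^{-n})$ remainder.
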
 
\begin{remark}
Instead of the evolution $\chi_{rs}(t)$, we could have used the expectation:
\begin{equation}\label{br}
	\lan \chi_{ts}\ran_t :=\Tr(\chi_{ts}\rho_t)
\end{equation}
of $\chi_{ts}$ in the state $\rho_t$ solving \eqref{vNLeq} and instead of \eqref{2.1}, used the relation 
\begin{align}
	{d\over{dt}}\br{\chi_{ts}}_t =&\br{D\chi_{st}}_t.\label{2.1'}
\end{align}
These two formulations are related as
\begin{align}\label{2form-rel}			\left<\chi_{ts}\right>_t =\left<\chi_{ts}(t)\right>_0.
\end{align}	
\end{remark}

\medskip

\subsection{Proof of Theorem \ref{thm5.1}}\label{sec:4}

To prove the recursive monotonicity estimate, \thmref{thm5.1}, we first need a totally symmetrized commutator expansion. Our next results, \propref{prop:H-contri} and \propref{prop:W-contri}, generalize   the commutator expansion for bounded operators,   first obtained in  \cite{SigSof}, and subsequently improved in e.g. \cite{GoJe,HunSig1, HunSigSof, Skib}.  We refer to \cite{HunSig1} for details and references.

Recall that the dual vNL operator $L'$ satisfies $L'=i[H,A]+G'A$ for all $A$ in $\cD(L')$, where $G'$ is given by \eqref{G'}.

\begin{proposition}
\label{prop:H-contri} 	
Suppose that \eqref{Phi-dom-cond-0} and \eqref{rme-cond} hold.
Let  $\chi\in\cX$ and let $\chi_{ts}$ be the operator defined by \eqref{chi-ts}. 		
Then, uniformly in $t$, for $s>0$,
\begin{equation}
	\label{H-esti}
	i[H,\chi_{ts}]=s^{-1}\sqrt{\chi_{ts}'}i[H,\Phi]\sqrt{\chi_{ts}'}+\Rem_H
\end{equation}
where the remainder term $\Rem_H$ satisfies the estimate
\begin{align}
	\pm \Rem_{H}& \leq \sum_{k=2}^{n}\frac{M_k}{s^k} (\xi^k)'_{ts}+C\frac{M_{n+1}}{s^{n+1}}
\end{align}
for some $\xi^2,...,\xi^n\in\cX$ depending only on $\chi$, with $M_k$ as in \eqref{rme-cond} and for some constant $C=C(n,\chi)>0$.
\end{proposition}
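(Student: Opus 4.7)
The plan is to apply a totally symmetrized commutator expansion to $i[H, \chi_{ts}]$ with $\chi_{ts} = \chi(A)$ where $A := s^{-1}(\Phi - vt)$, exploiting $[H, A] = s^{-1}[H, \Phi]$. I start from the Helffer--Sj\"ostrand representation
$$\chi(A) = \frac{1}{\pi}\int_{\mathbb{C}} \bar{\partial}\tilde\chi(z)\, (z-A)^{-1}\, dm(z),$$
with $\tilde\chi$ an almost analytic extension of $\chi$, and commute $H$ past the resolvent iteratively using $[H, (z-A)^{-1}] = s^{-1}(z-A)^{-1}[H,\Phi](z-A)^{-1}$. Expanding $n$ times via the identity $\ad{k+1}{\Phi}{H} = [\ad{k}{\Phi}{H}, \Phi]$ and evaluating the resulting contour integrals of $(z-A)^{-k-1}$ as multiples of $\chi^{(k)}(A)$ yields, as in the Sigal--Soffer commutator expansion \cite{SigSof,HunSig1,HunSigSof,Skib}, an identity
$$i[H, \chi_{ts}] = \sum_{k=1}^n \frac{c_k}{s^k}\, \chi^{(k)}(A)\, i\ad{k}{\Phi}{H} + R_{n+1},$$
for real coefficients $c_k$ (with $c_1 = 1$), and a norm remainder $\|R_{n+1}\| \le C(n,\chi)\, s^{-n-1}\, \|\ad{n+1}{\Phi}{H}\|$.

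Next, I symmetrize the principal ($k=1$) term. Since $\sqrt{\chi'} \in C^\infty$ by the definition of $\cX$, the functional-calculus identity
$$\chi'(A)\, i[H,\Phi] = \sqrt{\chi'(A)}\, i[H,\Phi]\, \sqrt{\chi'(A)} + \sqrt{\chi'(A)}\, \bigl[\sqrt{\chi'(A)},\, i[H,\Phi]\bigr]$$
extracts the desired principal piece $s^{-1}\sqrt{\chi'_{ts}}\, i[H,\Phi]\, \sqrt{\chi'_{ts}}$. The residual commutator $[\sqrt{\chi'(A)},\, i[H,\Phi]]$ is then re-expanded by the same Helffer--Sj\"ostrand procedure applied to $\sqrt{\chi'} \in C^\infty$, producing additional terms of order $s^{-2}$ through $s^{-n-1}$ whose coefficients involve $\ad{k}{\Phi}{H}$ for $k \ge 2$. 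Since $i[H, \chi_{ts}]$ is manifestly self-adjoint, passing to the Hermitian part bundles everything except the principal term into $\Rem_H$.

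To control $\Rem_H$ in operator inequality, each residual term has the structure $s^{-k}(f_k(A)\,\ad{k}{\Phi}{H} + \mathrm{h.c.})$, where $f_k$ is a specific smooth function built from $\chi^{(k)}$ and derivatives of $\sqrt{\chi'}$, compactly supported in $(0, \delta/2)$. The closure property \ref{X1} of $\cX$ lets me construct, depending only on $\chi$ and $k$, functions $\xi^k \in \cX$ whose derivative $(\xi^k)'$ dominates a suitable quadratic combination of $f_k$ pointwise. Writing $f_k(A) = u_k(A)\,\sqrt{(\xi^k)'(A)}$ with $u_k$ uniformly bounded and applying the weighted AM--GM operator inequality
$$\pm\bigl(AB + B^*A^*\bigr) \le \lambda AA^* + \lambda^{-1} B^*B$$
with $A = \sqrt{(\xi^k)'(A)}$, $B = u_k(A)\,\ad{k}{\Phi}{H}$, and $\lambda$ chosen proportionally to $M_k$, together with $\|\ad{k}{\Phi}{H}\|^2 \le M_k$, bounds each term by $\frac{M_k}{s^k}(\xi^k)'_{ts}$ up to scalar residues that combine with $R_{n+1}$ into $C\frac{M_{n+1}}{s^{n+1}}$ (using $M_{n+1} \ge 1$).

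The main obstacle is the careful bookkeeping in the second and third paragraphs: each symmetrization error produced by re-expanding $[\sqrt{\chi'(A)},\, i[H,\Phi]]$ must be tracked so that the final operator inequality uses $\xi^k \in \cX$ rather than $\chi^{(k)}$ itself. The class $\cX$ is engineered --- via the requirement $\sqrt{\chi'} \in C^\infty$ together with the closure properties \ref{X1}--\ref{X2} --- precisely so that iterated commutator expansion and square-root symmetrization remain within $\cX$. The Helffer--Sj\"ostrand remainder $R_{n+1}$ is controlled by standard estimates using that $\chi$ is uniformly bounded and all derivatives of $\chi$ vanish outside $(0, \delta/2)$.
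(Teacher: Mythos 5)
Your overall strategy coincides with the paper's: a Helffer--Sj\"ostrand commutator expansion of $i[H,\chi_{ts}]$ to order $n$, symmetrization of the leading term by splitting off $\sqrt{\chi'_{ts}}$, and control of the residual terms by the operator Cauchy--Schwarz inequality together with majorization inside the class $\cX$. There is, however, a genuine gap in your third paragraph. A residual term of the form $s^{-k}\bigl(f_k\,\ad{k}{\Phi}{H}+\mathrm{h.c.}\bigr)$, with $f_k$ a compactly supported function of $s^{-1}(\Phi-vt)$, cannot be bounded as you describe: in $\pm(PQ+Q^*P^*)\le\lambda PP^*+\lambda^{-1}Q^*Q$ with $Q=u_k\,\ad{k}{\Phi}{H}$, the term $Q^*Q=\ad{k}{\Phi}{H}^*\,|u_k|^2\,\ad{k}{\Phi}{H}$ is \emph{not} a localized function of $\Phi$ (the commutant factor sits on the wrong side); it is only dominated by a multiple of the identity. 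You therefore pick up a non-localized constant of size $O(s^{-k})$ for each $k=2,\dots,n$, and no choice of $\lambda$ removes it --- enlarging $\lambda$ inflates the localized piece beyond $M_k s^{-k}(\xi^k)'_{ts}$. Such a term is not admissible: the proposition only allows a constant at order $s^{-(n+1)}$, and after the integration in Proposition \ref{prop:propag-est1} a constant at order $s^{-k}$ with $k\le n$ would degrade the final decay from $t^{-n}$ to $t^{-(k-1)}$. Your assertion that these ``scalar residues combine with $R_{n+1}$ into $CM_{n+1}s^{-(n+1)}$'' is exactly the point that needs proof and, with the argument as written, is false.

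The missing ingredient is the localization step the paper performs \emph{before} applying Cauchy--Schwarz: insert a cutoff $\theta^k\in C_c^\infty((0,\delta/2))$ with $\theta^k\equiv1$ on $\supp\chi^{(k)}$, and show by a further commutator expansion of $[\theta^k_s,B_k]$, using the support disjointness $\supp\,(\theta^k)^{(l)}\cap\supp\chi^{(k)}=\varnothing$ for $l\ge1$, that $\chi_s^{(k)}B_k=\chi_s^{(k)}B_k\theta^k_s+O(s^{-(n+1-k)})$, where $B_k=i\ad{k}{\Phi}{H}$. Only then does Cauchy--Schwarz with $Q=B_k\theta^k_s$ yield $Q^*Q=\theta^k_sB_k^*B_k\theta^k_s\le\|B_k\|^2(\theta^k_s)^2$, a localized operator majorizable by $M_k(\xi^k)'_{ts}$, and the non-localized leftovers are genuinely $O(s^{-(n+1)})$. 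The same device is needed for the terms produced by your re-expansion of $[\sqrt{\chi'},\,i[H,\Phi]]$ and for the Helffer--Sj\"ostrand remainder. With this insertion your argument closes and matches the paper's proof; without it, the stated bound on $\Rem_H$ does not follow.
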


\begin{proposition}\label{prop:W-contri}
Suppose that \eqref{Phi-dom-cond-0} and \eqref{rme-cond} hold.
Let  $\chi\in\cX$ and let $\chi_{ts}$ be the operator defined by \eqref{chi-ts}. 		
Then, uniformly in $t$, for $s>0$,
\begin{align}\label{W-esti}
	G'(\chi_{ts})&=s^{-1}\sqrt{\chi'_{ts}}G'(\Phi)\sqrt{\chi_{ts}'}+\Rem_W,
\end{align}
where the remainder term $\Rem_W$ satisfies the estimate
\begin{align}
	\pm \Rem_W&\leq \sum_{k=2}^{n}\frac{M_k}{s^k}(\xi^k)'_{ts}+C\frac{\mu_n}{s^{n+1}}
\end{align}
for some $\xi^2,...,\xi^n\in\cX$ depending only on $\chi$, for some constant $C=C(n,\chi)>0$, with $M_k$ and $\mu_{n}$ as in \eqref{rme-cond}  and \eqref{eq:mu-n}.
\end{proposition}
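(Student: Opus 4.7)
The plan is to mirror the argument of \propref{prop:H-contri}, with the extra complication that $G'(\chi_{ts})=\tfrac{1}{2}\sum_{j\ge 1}\bigl(W_j^*[\chi_{ts},W_j]+[W_j^*,\chi_{ts}]W_j\bigr)$ involves the commutators $[\chi_{ts},W_j^{(*)}]$ \emph{flanked} by bounded factors $W_j^{(*)}$. First I will expand each inner commutator via the symmetric commutator expansion of \lemref{lemA.2}; next I will symmetrize the leading term so it matches $s^{-1}\sqrt{\chi'_{ts}}\,G'(\Phi)\,\sqrt{\chi'_{ts}}$; finally I will bound all corrections as a self-adjoint operator inequality using \eqref{rme-cond}.

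For the first step, \lemref{lemA.2} applied to $[\chi_{ts},W_j]=[\chi((\Phi-vt)/s),W_j]$ yields, uniformly in $t$,
\begin{equation*}
[\chi_{ts},W_j]=\frac{1}{s}\sqrt{\chi'_{ts}}\,[\Phi,W_j]\,\sqrt{\chi'_{ts}}+\sum_{k=2}^{n}s^{-k}T_j^{(k)}+R_j^{(n+1)},
\end{equation*}
where each $T_j^{(k)}$ is a symmetric expression involving $\ad{k}{\Phi}{W_j}$ and derivatives $\chi^{(\ell)}_{ts}$, $\ell\le k$, while $\|R_j^{(n+1)}\|\lesssim s^{-(n+1)}\|\ad{n+1}{\Phi}{W_j}\|$ (with an analogous expansion for $[\chi_{ts},W_j^*]$). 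Substituting into $G'(\chi_{ts})$ yields as the $k{=}1$ contribution
\begin{equation*}
\frac{1}{2s}\sum_j\Bigl(W_j^*\sqrt{\chi'_{ts}}\,[\Phi,W_j]\,\sqrt{\chi'_{ts}}+\sqrt{\chi'_{ts}}\,[W_j^*,\Phi]\,\sqrt{\chi'_{ts}}\,W_j\Bigr),
\end{equation*}
which differs from the target $s^{-1}\sqrt{\chi'_{ts}}\,G'(\Phi)\,\sqrt{\chi'_{ts}}$ only by the commutators $[W_j^*,\sqrt{\chi'_{ts}}]$ and $[\sqrt{\chi'_{ts}},W_j]$. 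I will control these by applying \lemref{lemA.2} a second time, which is legitimate since $\sqrt{\chi'}\in C^\infty$ by the definition \eqref{F} of $\cX$; the resulting commutators are of order $s^{-1}$ and hence contribute corrections to $G'(\chi_{ts})$ of order $s^{-2}$ or smaller. These nested corrections are collected into $\Rem_W$ together with the $T_j^{(k)}$ produced at the first expansion.

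To convert $\Rem_W$ into the stated self-adjoint bound, I will apply the arithmetic-geometric inequality $\pm(W_j^*X_j+X_j^*W_j)\le W_j^*W_j+X_j^*X_j$ to each $j$-summand and then invoke \eqref{rme-cond} to bound $\|\sum_j W_j^*W_j\|$ and $\sum_j\|\ad{k}{\Phi}{W_j}\|^2$ by $M_k$; any signed factor $\chi^{(k)}_{ts}$ ($k\ge 2$) will be dominated by $(\xi^k)'_{ts}$ for a suitable $\xi^k\in\cX$ chosen via property \ref{X1} (e.g., by taking the antiderivative of a square that pointwise majorises $|\chi^{(k)}|$). This yields bounds of the form $\pm(\text{order-}k\text{ piece})\le s^{-k}M_k(\xi^k)'_{ts}$, and the final order-$s^{-(n+1)}$ residue is absorbed into $C\mu_n s^{-(n+1)}$ via \eqref{eq:mu-n}. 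The main obstacle will be the bookkeeping of the nested expansions in the second step: each further use of \lemref{lemA.2} produces its own remainder, and the truncations must be chosen so that the overall residual involves only $\ad{n+1}{\Phi}{W_j}$ and never higher iterated commutators—otherwise \eqref{rme-cond} would no longer suffice. This is handled by the same telescoping/induction on $k$ that underlies \propref{prop:H-contri}, applied now summand-wise in $j$.
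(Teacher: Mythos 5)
Your overall route is the same as the paper's: expand $[\chi_{ts},W_j]$ and $[W_j^*,\chi_{ts}]$ one order at a time via \lemref{lemA.2}, symmetrize the $k=1$ term with $u_s=\sqrt{\chi'_{ts}}$ by a second application of the commutator expansion, control summability in $j$ with the inequality $\pm(P^*Q+Q^*P)\le P^*P+Q^*Q$ together with \eqref{rme-cond}, and majorize derivatives of $\chi$ by $(\xi^k)'$ for suitable $\xi^k\in\cX$. One cosmetic point: \lemref{lemA.2} gives the \emph{one-sided} expansions with $\chi_{ts}^{(k)}$ entirely to the left or right of $\ad{k}{\Phi}{W_j}$, not the symmetric form you display; the symmetric form is an output of the subsequent symmetrization, not of the lemma itself. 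Since you perform that symmetrization anyway, this is harmless.

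There is, however, one step that fails as literally written. You propose to apply the arithmetic-geometric inequality directly to the order-$k$ pieces $W_j^*\chi_{ts}^{(k)}D_{j,k}+D_{j,k}^*\chi_{ts}^{(k)}W_j$ (with $D_{j,k}=\ad{k}{\Phi}{W_j}$) and then dominate ``any signed factor $\chi^{(k)}_{ts}$'' by $(\xi^k)'_{ts}$. But after splitting such a term as $P^*Q+Q^*P$, one of the two resulting positive terms (the $Q^*Q$-type term, e.g.\ $D_{j,k}^*D_{j,k}$ or $W_j^*W_j$) carries \emph{no} factor of $\chi^{(k)}_{ts}$ at all: it is a constant operator of size $M_k$ sitting at order $s^{-k}$ with $2\le k\le n$, which is neither of the form $\frac{M_k}{s^k}(\xi^k)'_{ts}$ nor absorbable into the $C\mu_n s^{-(n+1)}$ tail. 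The paper's fix is to first conjugate each order-$k$ term by auxiliary cutoffs $\theta^k_s=\theta^k(s^{-1}(\Phi-vt))$ with $\theta^k\in C_c^\infty((0,\delta/2))$ and $\theta^k\equiv1$ on $\supp\chi^{(k)}$, using $\chi_{ts}^{(k)}=\theta^k_s\chi_{ts}^{(k)}\theta^k_s$; only then does the inequality produce localized terms $\theta^k_sW_j^*W_j\theta^k_s+\|D_{j,k}\|^2\|\chi^{(k)}\|^2(\theta^k_s)^2$, which are majorized by $M_k(\xi^k)'_{ts}$. Justifying the insertion requires showing that the cross terms $[W_j^*,\theta^k_s]\chi_{ts}^{(k)}D_{j,k}$ and $\theta^k_sW_j^*\chi_{ts}^{(k)}[\theta^k_s,D_{j,k}]$ are $O(s^{-(n+1)})$ and summable in $j$; this works because $(\theta^k)^{(l)}\chi^{(k)}=0$ for all $l\ge1$ kills every displayed order of the inner expansions, leaving only remainders controlled by $\|D_{j,n+1}\|$ and $\|D_{j,n+1-k}\|$ via \eqref{rme-cond}. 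Your closing appeal to ``the same telescoping/induction as in \propref{prop:H-contri}'' points in the right direction, but the cutoff conjugation is the essential missing ingredient and should be made explicit.
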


\begin{remark}
The estimates above are all uniform in $s,t,\Phi$ and, in particular, are valid for the operator $\phi^E=g \phi g$ such as \eqref{5.2.2}. 
\end{remark}				

\begin{remark}
We note that the error term in \thmref{thm1} arises in the symmetrization procedure above, and can be improved as the expansion continues to higher order.
\end{remark}

\begin{proof}[Proof of Proposition \ref{prop:H-contri}]
In this proof, the time $t$ is fixed and is omitted from the notation, so we write $\chi_s$ for $\chi_{ts}$.  Also, we denote $B_k\equiv i\ad{k}{\Phi}{H}$ for $k=1,...,n+1$.  In this case, since $H$ is self-adjoint, we have $B_k^{*}=(-1)^{k-1}B_k$.

1. By \eqref{Phi-dom-cond-0}--\eqref{rme-cond} and the assumption on $\chi$, the hypotheses of Lemma \ref{lemA.2} are satisfied.  Hence, by \eqref{4.Hcomm-exp}--\eqref{4.Hcomm-exp-right}, we have
\begin{align}\label{A-sym-exp}
	i[H,\chi_s]&=\frac{1}{2}\sum_{k=1}^{n}\frac{s^{-k}}{k!}\left(\chi_s^{(k)}B_k+B_k^*\chi_{s}^{(k)}\right)+\frac{1}{2}s^{-(n+1)}\left(R_{n+1}+R_{n+1}^*\right),
\end{align}
where $\|R_{n+1}\|\leq c\|B_{n+1}\|$ for some constant $c>0$ depending only on $\chi$.

\medskip

2. Next, we claim that every term on the r.h.s. of \eqref{A-sym-exp}, except for the leading term $(k=1)$, are uniformly bounded by $(\chi_1)'_{s}$ for some $\chi_1\in\cX$.  

To show this, for each $k$, we choose some smooth function $\theta^k\in C_{c}^{\infty}((0,\delta/2))$ that takes value $1$ on $\supp(\chi^{(k)})$.  Then, we claim that 
\begin{align}
	\chi_{s}^{(k)}B_k=\chi_{s}^{(k)}B_{k}\theta_{s}^{k}+O(s^{-(n+1-k)}),
\end{align}
where $\theta_s^k\equiv \theta^k(s^{-1}(\Phi-vt))$.  Indeed, using commutator expansion and the fact that $\Ad_{\Phi}^l(B_k)=B_{k+l}$, we have
\begin{align}
	\chi_{s}^{(k)}B_k&=\chi_{s}^{(k)}\theta_{s}^{k}B_k=\chi_{s}^{(k)}B_{k}\theta_{s}^{k}+\chi_{s}^{(k)}[\theta_s^{k},B_k]\nonumber\\
	&=\chi_{s}^{(k)}B_k\theta_{s}^{k}-\chi_{s}^{(k)}\sum_{l=1}^{n-k}\frac{(-1)^ls^{-l}}{l!}(\theta^k)_{s}^{(l)}B_{k+l}\nonumber\\
	&\quad\quad\quad\quad+(-1)^{n+1-k}s^{-(n+1-k)}\chi_{s}^{(k)}\Rem_{\rm right}(s),
\end{align}
where
\begin{align}
	\Rem_{\rm right}(s)&=\int d\widetilde{\theta^k}(z)R^{n+1-k}B_{n+1}R.
\end{align}
Since $\theta^k$ has compact support, $\Rem_{\rm right}(s)$ is bounded so that 
\begin{align}
	\chi_{s}^{(k)}B_k&=\chi_{s}^{(k)}B_k\theta_s^k-\chi_s^{(k)}\sum_{l=1}^{n-k}\frac{(-1)^ls^{-l}}{l!}(\theta^k)_s^{(l)}B_{k+l}+O(s^{-(n+1-k)}).
\end{align}
Next, since $\theta^k\equiv 1$ on $\supp(\chi^{(k)})$, we have $\supp((\theta^k)^{(l)})\cap\supp(\chi^{(k)})=\varnothing$ for all $l\geq 1$ so that 
\begin{align}
	\chi_{s}^{(k)}\sum_{l=1}^{n-k}\frac{(-1)^ls^{-l}}{l!}(\theta^k)_{s}^{(l)}B_{k+l}=0.
\end{align}
It follows that 
\begin{align*}
	\chi_{s}^{(k)}B_{k}&=\chi_s^{(k)}B_k\theta_{s}^k+O(s^{-(n+1-k)})
\end{align*}
so that 
\begin{align}\label{new-opera-esti}
	s^{-k}(\chi_{s}^{k}B_k+B_{k}^*\chi_{s}^{k})&=s^{-k}(\chi_{s}^{k}B_k\theta_s^{k}+\theta_{s}^{k}B_k^*\chi_{s}^{k})+O(s^{-(n+1)}).
\end{align}

Now, we apply the following operator inequality
\begin{align}\label{op-esti}
	\pm (P^*Q+Q^*P)&\leq P^*P+Q^*Q.
\end{align}
with $P=\chi_{s}^{(k)}$ and $Q=B_k\theta_s^k$ on \eqref{new-opera-esti} to obtain
\begin{align}\label{opera-ineq-Bk}
	s^{-k}(\chi_{s}^{k}B_k+B_{k}^*\chi_{s}^{k})&\leq s^{-k}\left((\chi_s^{(k)})^2+\|B_k\|^2(\theta_{s}^k)^2\right)+O(s^{-(n+1)}).
\end{align}
Since $n$ is finite, we can choose $\xi^2,...,\xi^n\in \cX$ such that $(\xi^k)'$ majorizes $(\chi^{(k)})_s^2+\|B_k\|^2(\theta^k_s)^2$ for each $k$.

\medskip

3. Now, we symmetrize the leading order term.  Let $u=(\chi')^{1/2}$.  Since $u$ is smooth by assumption, we use \eqref{HSj-rep} to expand the leading order terms and obtain
\begin{align}
	(u_s)^2B_1+B_1(u_s)^2&=2u_sB_1u_s+u_s[u_s,B_1]+[B_1,u_s]u_s\nonumber\\
	&=2u_sB_1u_s+\sum_{l=1}^{n-1}\frac{s^{-l}}{l!}\left(u_s u_s^{(l)}B_{1+l}+B_{1+l}^{*}u_{s}^{(l)}u_{s}\right)\nonumber\\
	&\quad\quad+s^{-n}(u_sR_{n}'+R_n'^{*}u_s),
\end{align}
where $\|R_n'\|\leq c'\|B_{n+1}\|$ for some constant $c'>0$ depending only on $u$.  

Again, using operator estimate \eqref{op-esti}, for each $l=1,...,n-1$, we have
\begin{align}
	s^{-l}(u_s u_s^{(l)}B_{1+l}+B_{1+l}^{*}u_s^{(l)}u_s)&\leq s^{-1}\|B_{1+l}\|^2(u_s^{(l)})^2+s^{-2l+1}(u_s)^2,
\end{align}
and for the remainder term we have
\begin{align}
	s^{-n}(u_sR_n'+R_n'^{*}u_s)&\leq s^{-1}(u_s)^{2}+s^{-2n+1}\|R_n'\|^2(\tilde{\theta}_s)^2,
\end{align}
where $\tilde{\theta}$ is again some smooth cutoff function supported in $(0,\delta/2)$ that takes value $1$ on the support of $u$ and $\tilde{\theta}_s\equiv \tilde{\theta}(s^{-1}(\Phi-vt))$.  Since $u$, $u^{(l)}$ and $\tilde{\theta}$ are supported in $(0,\delta/2)$, we can modify $\xi^2,...,\xi^n$ in such a way that $\xi^l\in\cX$ majorizes $u^2$, $\tilde{\theta}^2$ and $(u^{(l)})^{2}$ for each $l=1,...,n-1$.  

Collecting all terms except for the leading order ones into the remainder term $\Rem_{H}$, we obtain \eqref{H-esti}.
\end{proof}	
\begin{proof}[Proof of Proposition \ref{prop:W-contri}]
In this proof, we also fix $t$ and omit it from the notation.  Furthermore, we fix $j\geq 1$ and denote $D_{j,k}\equiv \ad{k}{\Phi}{W_j}$.  In particular, we obtain $\ad{k}{\Phi}{W_j^*}=(-1)^k(\ad{k}{\Phi}{W_j})^*=(-1)^{k}D_{j,k}^*$.

\medskip

1. First, using Lemma~\ref{lemA.2} 
and the boundedness of $W_j$, we have
\begin{align}\label{Wj-exp}
	[\chi_{s},W_j]&=-\sum_{k=1}^{n}\frac{s^{-k}}{k!}\chi_s^{(k)}D_{j,k}-s^{-(n+1)}R^{\rm right}_{j,n+1}
\end{align}
where $R^{\rm right}_{j,n+1}$ is given in \eqref{right-rem} and satisfies the estimate
\begin{align}
	\|R^{\rm right}_{j,n+1}\|^2&\leq C\|D_{j,n+1}\|^2, \label{R<D}
\end{align}
for some constant $C$ independent of $j$.  Similarly, we have
\begin{align}\label{Wj*-exp}
	[W_j^*,\chi_s]&=-\sum_{k=1}^{n}\frac{s^{-k}}{k!}D_{j,k}^*\chi^{(k)}_{s}-(-1)^{n+1}s^{-(n+1)}\widetilde{R}^{\rm left}_{j,n+1},
\end{align}
where $\widetilde{R}^{\rm left}_{j,n+1}=(-1)^{n+1}(R^{\rm right}_{j,n+1})^{*}$.  Combining \eqref{Wj-exp} and \eqref{Wj*-exp}, we have
\begin{align}\label{Gj}
	G_{j}'(\chi_{s})&=W_{j}^*[\chi_s,W_j]+[W_j^*,\chi_s]W_j\nonumber\\
	&=-\sum_{k=1}^{n}\frac{s^{-k}}{k!}\left(W_j^*\chi_{s}^{(k)}D_{j,k}+D_{j,k}^{*}\chi_{s}^{(k)}W_j\right)\nonumber\\
	&\quad\quad -s^{-(n+1)}\left(W_j^*R^{\rm right}_{j,n+1}+(R^{\rm right}_{j,n+1})^*W_j\right),
\end{align}
where $G_j'(\cdot)=W_j^*[\cdot, W_j]+[W_j^*,\cdot]W_j$.

\medskip 

2. We now verify that the r.h.s. of \eqref{Gj} is summable in $j\geq 1$.  We begin with the remainder terms.  Using the operator estimate \eqref{op-esti}, we obtain
\begin{align}
	\pm \left(W_j^*R^{\rm left}_{j,n+1}+(R^{\rm left}_{j,n+1})^*W_j\right)&\leq W_j^*W_j+\|R^{\rm left}_{j,n+1}\|^2,
\end{align}
which are summable in $j\geq 1$ since $\sum_jW_j^*W_j$ strongly converges in $\cH$, and since \eqref{rme-cond} and \eqref{R<D} hold.

Next, we estimate the $k$-th terms in the first two lines of \eqref{Gj}.  Let $\theta^k$ be some smooth cutoff function supported in $(0,\delta/2)$ such that $\theta^k\equiv 1$ on $\supp(\chi^{(k)})$.  It follows that $\chi_{s}^{(k)}=\theta^k_s\chi_{s}^{(k)}\theta^k_s$, where $\theta^k_s\equiv \theta^k(s^{-1}(\Phi-vt))$.  Then, we claim that 
\begin{align}
	W_j^*&\chi_s^{(k)}D_{j,k}+D_{j,k}^*\chi_s^{(k)}W_j\nonumber\\
	&=\theta^k_s \left(W_j^*\chi_s^{(k)}D_{j,k}+D_{j,k}^*\chi_s^{(k)}W_j\right) \theta^k_s+s^{-(n+1-k)}C_k\|D_{j,n+1}\|^2\label{k-esti},
\end{align}
where $C_k$ is some constant depending only on $\chi^{(k)}$.  

If \eqref{k-esti} holds, then using \eqref{op-esti}, we have
\begin{align}
	\pm &\left(W_j^*\chi_s^{(k)}D_{j,k}+D_{j,k}^*\chi_s^{(k)}W_j\right)\nonumber\\
	&\quad\quad\leq \theta^k_sW_j^*W_j\theta^k_s+\|D_{j,k}\|^2\|\chi^{(k)}\|^2(\theta^k_s)^2+C_ks^{-(n+1-k)}\|D_{j,n+1}\|^2,
\end{align}
which are also summable in $j\geq 1$ by \eqref{rme-cond}.

\medskip 

3. Now, we prove the claim \eqref{k-esti}.  By a direct calculation, we have
\begin{align}
	W_j^*\chi_{s}^{(k)}D_{j,k}&-\theta^k_s W_j^*\chi_{s}^{(k)}D_{j,k} \theta^k_s \nonumber\\
	&=[W_j^*,\theta^k_s]\chi_{s}^{(k)}D_{j,k}+\theta^k_sW_j^*\chi_{s}^{(k)}[\theta^k_s,D_{j,k}]
\end{align}
and a similar expression for $D_{j,k}^*\chi_{s}^{(k)}W_j$.  Thus, it suffices to show that $[W_j^*,\theta^k_s]$ and $[\theta^k_s,D_{j,k}]$ are $O(s^{-(n-k)})$.  

\medskip

3.1. For the first term, we use \eqref{Wj*-exp} to obtain
\begin{align}
	[W_j^*,\theta^k_s]&=-\sum_{l=1}^{n}\frac{s^{-l}}{l!}D_{j,l}^{*}(\theta^k)^{(l)}_s-s^{-(n+1)}R_{j,n+1}^*,
\end{align}
where $R_{j,n+1}$ is given by \eqref{right-rem} and satisfies the estimate $\|R_{j,n+1}\|\leq C\|D_{j,n+1}\|$.  Since $\theta^k\equiv 1$ on $\supp(\chi^{(k)})$, then we have $(\theta^k)^{(l)}_s\chi_s^{(k)}=0$ for $l\geq 1$ so that 
\begin{align}
	[W_j^*,\theta^k_s]\chi_{s}^{(k)}D_{j,k}&=-s^{(n+1)}R_{j,n+1}^*\chi^{(k)}_s D_{j,k},
\end{align}
which is $O(s^{-(n+1)})$ and summable in $j\geq 1$, by the Cauchy-Schwarz inequality and \eqref{rme-cond}.

\medskip 

3.2. For the second term, we proceed similarly, using \eqref{Wj-exp}, to obtain
\begin{align}
	[\theta^k_s,D_{j,k}]&=-\sum_{l=1}^{n-k}\frac{s^{-l}}{l!}(\theta^k)^{(l)}_sD_{j,k+l}+s^{-(n+1-k)} \widetilde{R}_{j,n+1-k},
\end{align}
where $\widetilde{R}_{j,n+1-k}$ is given by \eqref{left-rem} with $n$ replaced by $n-k$ and satisfies the estimate $\|\widetilde{R}_{j,n+1-k}\|\leq C\|D_{j,n+1-k}\|$ with $C$ only depending on $\theta^k$.  Using the same reason as above, since $\chi_{s}^{(k)}(\theta^k)_s^{(l)}=0$ for all $l\geq 1$, we conclude that
\begin{align}
	\theta^k_sW_j^*\chi_s^{(k)}[\theta^k_s,D_{j,k}]&=s^{-(n+1-k)}\theta^k_sW_j^*\chi_s^{(k)}\widetilde{R}_{j,n+1-k}.
\end{align}
This completes the proof of the claim \eqref{k-esti}.

\medskip

4. Now we choose $\xi^2,...,\xi^n\in\cX$ such that 
\begin{equation*}
	\left(\|\sum_{j\geq 1}W_j^*W_j\|+\sum_{j\geq 1}\|D_{j,k}\|^2\right)(\theta^k)^2\le M_k (\xi^k)'.
\end{equation*}
Then, by writing everything as $\Rem_W$ in \eqref{Gj} except for the leading order terms (obtained for $k=1$), we obtain, up to some terms coming from the leading order terms which will be dealt with below, the estimate
\begin{align}
	\pm \Rem_W&\leq \sum_{k=2}^{n+1}\frac{M_k}{s^k}(\xi^k)'_s+\frac{C\mu_{n}}{s^{n+1}},
\end{align}
where $C$ is a constant depending only on $\chi$ and $n$. 

\medskip 

5. Finally, we deal with the leading order terms (obtained for $k=1$) in \eqref{Gj}.  Following the same lines as in the proof for Proposition \ref{prop:H-contri}, we define $u=\sqrt{\chi'}$ and use \eqref{HSj-rep} to obtain
\begin{align}
	&W_j^*\chi_{s}'D_{j,1}+\text{h.c}.\notag\\=&u_sW_j^*D_{j,1}u_s+[W_j^*,u_s]u_sD_{j,1}+u_sW_j^*[u_s,D_{j,1}]+\text{h.c}.,
\end{align}
where $\text{h.c}.$ means the adjoint of the terms before it.  Without repeating the same calculation as above, using \eqref{HSj-rep} and \eqref{op-esti}, we can show that the commutators are summable in $j\geq 1$.  Then, we modify $\xi^k\in\cX$ to majorize $(u^{(k)})^2$ and $u^2$ as well.  This completes the proof.
\end{proof}


Now we are ready to prove Theorem \ref{thm5.1}:

\begin{proof}[Proof of Theorem \ref{thm5.1}]
Given Proposition \ref{prop:H-contri}--\ref{prop:W-contri}, we choose $\xi^2,...,\xi^n$ depending on $\chi$, in such a way that 
\begin{align}
	\Rem_H+\Rem_W&\leq \sum_{k=2}^{n}\frac{M_k}{s^{k}} (\xi^k)'_{ts}+C\frac{\mu_{n}}{s^{n+1}},
\end{align}
where 
$C$ is some constant which depends only on $n$ and $\chi$.  

It remains to calculate $\partial_t\chi_{ts}$.  Using the chain rule, we immediately obtain
\begin{align}\label{eq:deriv}
	\partial_t\chi_{ts}&=-s^{-1}v\chi_{ts}'.
\end{align}
This completes the proof.
\end{proof}

\medskip

\subsection{Proof of \propref{prop:propag-est1}}

\begin{proof}[Proof of \propref{prop:propag-est1}]
Within this proof, 	all constants $C>0$ depend only on $\chi$ and $n$.

We will use the relation \eqref{2.1}. First, we observe that, 
by Condition \eqref{Phi-dom-cond-0} and Definition \eqref{F}, for $\chi\in\cX$ and all  $0< t\le s$, the operator $\chi_{ts}$ maps $\cD(H)$ into itself. Moreover, \eqref{A-sym-exp} in the proof of Proposition \ref{prop:H-contri} shows that $[H,\chi_{ts}] \in \cB(\cH)$. Hence $\chi_{ts}\in\mathcal{D}(L')$. 

Next, for each fixed $s$, integrating the formula \eqref{2.1} with $\Psi_t\equiv \chi_{ts}$ in $t$ gives
\begin{align} \label{eq-basic}  
	\chi_{ts}(t)-\int_0^t \beta'_r(D\chi_{rs})\,dr= \chi_{0s}(0).
\end{align}
The positive-preserving property of the flow \eqref{vNLeq} (see \eqref{1.100}) extends by duality to $\beta'_r$\,, so that we can
apply the  inequality \eqref{rme} to the second term on the l.h.s. of \eqref{eq-basic} to obtain 
\begin{align} \label{propag-est2} 
	&\chi_{ts}(t)+(v-\kappa_{\Phi})s^{-1}\int_0^t\chi'_{rs}(r)\, dr\notag\\
	& \le \chi_{0s}(0)+C \mu_n \del{ \sum_{k=2}^ns^{-k}\int_0^t (\xi^k)'_{rs}(r)\,dr + t s^{-(n+1)}},
\end{align}
where we recall that the second term in the r.h.s. is dropped for $n=1$.

Since $\kappa_{\Phi} < v$ and $t\le s$, \eqref{propag-est2} implies, after dropping $\chi_{ts}(t)\ge0$, which is due to the positive-preserving property of the flow \eqref{vNLeq} (see \eqref{1.100}), and multiplying by $s(v-\kappa_{\Phi})^{-1}\ge0$, that
\begin{align}
	\int_0^t \chi'_{rs}(r)\, dr 
	\le C \mu_n
	\del{s\chi_{0s}(0) +   \sum_{k=2}^ns^{-k+1}\int_0^t (\xi^k)'_{rs}(r)\,dr+ t s^{-n}}. \label{propag-est3} 
\end{align}

3. If $n=1$, then \eqref{propag-est3} gives \eqref{propag-est31}. If $n\ge2$, applying \eqref{propag-est3} to the term $\int_0^t (\xi^{k})'_{rs}(r)\,dr$ and using the property \ref{X2}, we obtain
\begin{align}
	\int_0^t\chi'_{rs}(r)\, dr \, 
	\le C \mu_n^{2} \Big( \, &
	s \chi_{0s}(0)+ \xi^2_{0s}(0) + \sum_{k=3}^n s^{-k+2} \int_0^t (\eta^k)'_{rs}(r)\,dr +  t s^{-n}\Big), \label{propag-est33} 
\end{align}
where the third term in the r.h.s. is dropped for $n=2$, and $\eta^k=\eta^k(\xi^2,\xi^k)\in\cX, k=3,\ldots, n$. Bootstrapping this procedure, we arrive at \eqref{propag-est31}.
\end{proof}

\medskip

\section{Proof of \thmref{thm1}} 
\label{sec:main-thm-pf} 

We formulate the technical relations mentioned in \thmref{thm1}.  Given a smooth, non-negative cutoff functions $g$  with $\supp(g)\subset (-\infty,E]$ (see also \remref{remG})and a smooth function $\chi$ from the space \eqref{F}, we choose smooth cutoff functions $\tilde{g}$ and $\tilde \chi$ such that $\supp(\tilde{g})\subset \{g\equiv 1\}$ and  $\supp(\tilde{\chi}')\subset (\delta,+\infty)=\{\chi\equiv1\}$, so that
\begin{align}
	&\bar{\chi}(\mu)\tilde{\chi}(\mu)=0,\label{tildechi-cond'}\\
	&\bar{g}(\mu)\tilde{g}(\mu)=0.\label{tildeg-cond'}
\end{align}
see Figs.~\ref{fig:tildechi}--\ref{fig:tildeg}.

	\begin{figure}[t]
	\centering
	\begin{tikzpicture}[scale=2.5]
		\draw [->] (-1,0)--(2.15,0);
		\node [right] at (2.15,0) {$\mu$};
		\node [below] at (0,0) {$0$};
		\draw [fill] (0,0) circle [radius=0.02];
		\node [below] at (.65,0) {$\delta/2$};
		\draw [fill] (.65,0) circle [radius=0.02];
		\node [below] at (1.3,0) {$\delta$};
		\draw [fill] (1.3,0) circle [radius=0.02];
		\draw [very thick] (-1,0)--(.1,0) [out=5, in=-175] to (.65,1)--(2.15,1);
		\draw [very thick,dashed] (-1,0)--(.75,0) [out=5, in=-175] to (1.3,1)--(2.15,1);						
		\draw [->] (-.1,.5)--(.3,.5);
		\node [left] at (-.1,.5) {$\chi(\mu)$};
		\draw [->] (1.5, .5)--(1.1,.5);
		\node [right] at (1.5,.5) {$\tilde \chi(\mu)$};						
	\end{tikzpicture}
	\caption{Schematic diagram illustrating $\tilde \chi$ satisfying \eqref{tildechi-cond'}.}\label{fig:tildechi}
\end{figure}
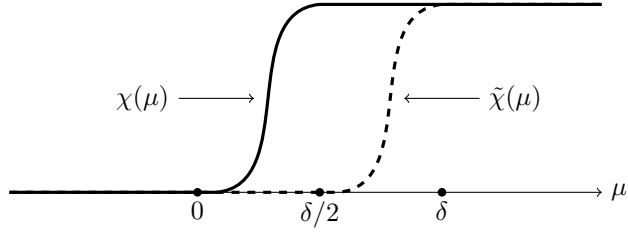

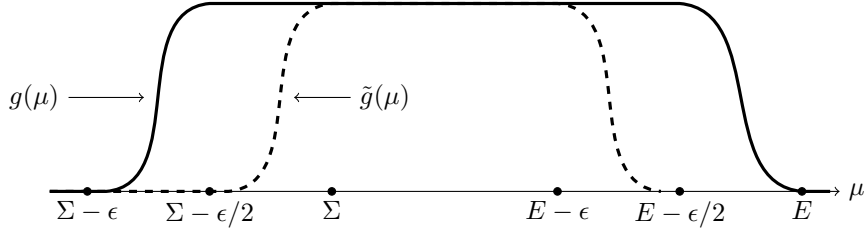
\begin{figure}[t]
	\centering
	\begin{tikzpicture}[scale=2.5]
		\draw [->] (-.2,0)--(4,0);
		\node [right] at (4,0) {$\mu$};
		\node [below] at (0,0) {$\Si-\eps$};
		\draw [fill] (0,0) circle [radius=0.02];
		\node [below] at (.65,0) {$\Si-\eps/2$};
		\draw [fill] (.65,0) circle [radius=0.02];
		\node [below] at (1.3,0) {$\Si$};
		\draw [fill] (2.5,0) circle [radius=0.02];
		\node [below] at (2.5,0) {$E-\eps $};
		\draw [fill] (3.15,0) circle [radius=0.02];
		\node [below] at (3.15,0) {$E-\eps/2 $};
		\draw [fill] (3.8,0) circle [radius=0.02];
		\node [below] at (3.8,0) {$E $};
		\draw [fill] (1.3,0) circle [radius=0.02];
		\draw [very thick] (-.2,0)--(.1,0) [out=5, in=-175] to (.65,1)--(3.15,1) [out=-5, in= 175] to (3.8,0)--(3.95,0) ;
		\draw [very thick,dashed] (-.2,0)--(.75,0) [out=5, in=-175] to (1.3,1)--(2.5,1)[out=-5, in= 175] to(3.05,0);						
		\draw [->] (-.1,.5)--(.3,.5);
		\node [left] at (-.1,.5) {$g(\mu)$};
		\draw [->] (1.4, .5)--(1.1,.5);
		\node [right] at (1.4,.5) {$\tilde g(\mu)$};						
	\end{tikzpicture}
	\caption{Schematic diagram illustrating $\tilde g$ satisfying \eqref{tildeg-cond'}. Here $\Si:=\inf\si(H)$ (see \remref{remG}).}\label{fig:tildeg}
\end{figure}

We also specify the self-adjoint operator $\Phi$ in \thmref{thm5.1} and definition \eqref{chi-ts} as  
\begin{equation}\label{5.2.2}
\Phi:=d_X^E=g(H)d_Xg(H), 
\end{equation}
where, recall, $X\subset \Rb^d$ is a bounded subset with smooth boundary and  $d_X\in C^\infty(\Rb^d)$ is the smoothed distance function to $X$ given in \eqref{dX} for some $\eps_0>0$ and satisfies \eqref{dX2}.

To shorten notations, we introduce the following notations:
\begin{align}\label{chitsdef}
\chi_{ts}^E:=\chi((d_X^E-vt)/s),\quad \chi_{ts}:=\chi((d_X -vt)/s).
\end{align}
Now, for any $\chi\in\cX$ and $\tilde g,\,\tilde \chi$ as above, we claim that	
\begin{align}
\label{claim1}
&	\chi_X^\sharp\chi_{0s}^E\chi_X^\sharp=O(s^{-n}),\\
&			\chi_{ts}^E\geq \tilde{g}\tilde{\chi}_{ts}\tilde{g}+O(s^{-n}),\label{claim2}
\end{align}
where we recall that $\chi_X^\sharp$ stands for the characteristic function of $X$. We discuss these claims in Section \ref{sec:claims-deriv}.

Recall that $\beta_t'$ denotes the evolution generated by the operator $L'$ and that $\chi_{ts}^E(t):= \beta_t'(\chi_{ts}^E)$, $(\chi')^E_{ts}(t):= \beta_t'((\chi')_{ts}^E)$. 
We are now ready to prove \thmref{thm1}.

\begin{proof}[Proof of \thmref{thm1}]

We want to apply Proposition \ref{prop:propag-est1} to $H=-\Delta+V(x)$ and $W_j$ satisfying \ref{H1}--\ref{W2}, with $\Phi$ given by \eqref{5.2.2}. Hence we need to verify that the abstract conditions \eqref{Phi-dom-cond-0}--\eqref{rme-cond} are satisfied.

First, we fix any $c>0$ and justify that $(d_X^E+c)^{-1}$ maps $\cD(H)$ into itself. Recalling that $d_X^E=g(H)d_Xg(H)$ with $\mathrm{supp}(g)\subset (-\infty,E]$, we have  
\begin{align*}
	(d_X^E+c)^{-1} &= \chi_{(-\infty,E]}^\#(H)
	(d_X^E+c)^{-1}+\chi_{(E,\infty)}^\#(H)(d_X^E+c)^{-1}\\
	&=\chi_{(-\infty,E]}^\#(H)(d_X^E+c)^{-1}+c^{-1}\chi_{(E,\infty)}^\#(H).
\end{align*}
The first term is a bounded operator from $\cH$ to $\cD(H)$ while the second term obviously preserves $\cD(H)$. This shows that $(d_X^E+c)^{-1}$ maps $\cD(H)$ into itself


Next, condition \eqref{rme-cond} is verified in Section \ref{sec:mult-comm-est}, see Corollary \ref{cor:4.3}.  Therefore Proposition \ref{prop:propag-est1} with $\Phi=d_X^E$ applies. 

{Now we take $\chi\in\cX$ with $\chi(\mu)\equiv 1$ for $\mu\ge\delta/2$.}	Retaining the first term in the l.h.s. of \eqref{propag-est2} in the proof of Proposition \ref{prop:propag-est1} and dropping the second one, which is non-negative since $\chi'\ge0$ and $v>\kappa$, we obtain
\begin{align*} 
\chi_{ts}^E(t)  \le \,\chi_{0s}^E(0)+C_{n,E}\del{ \sum_{k=2}^ns^{-k+1}\int_0^t ((\xi^k)')^E_{rs}(r) dr + t s^{-(n+1)}}.		
\end{align*}
Here we used that the constant $\mu_n=\max_{2\le k\le n+1}M_k$ appearing in the r.h.s. of $\eqref{propag-est2}$ is bounded by $C_{n,E}$ for some positive constant depending on $n$ and $E$.
Applying \eqref{propag-est31} 
to the second term on the r.h.s.,

we deduce that, with the notation as in \eqref{Oh},		
\begin{align} \label{propag-est4} 
\chi_{ts}^E(t)  \le 	\chi_{0s}^E(0)+O(s^{-1}\xi_{0s}^E(0))+O(s^{-n}), \end{align}
for some $\xi\in\cX$  and all  $s> t$. Taking expectation w.r.t.~$\rho_0$ on both sides of \eqref{propag-est4} and recalling that $\chi_{ts}(t):= \beta_t'(\chi_{ts})$, we find
\begin{equation}\label{3.40}
\omn{\beta'_t(\chi_{ts}^E)}  \le	\omn{\del{\chi_{0s}^E+O(s^{-1}\xi_{0s}^E)}}+O(s^{-n}).
\end{equation}

By the localization assumption \eqref{lam-cond} on the initial state, we have  $\rho_0=\chi_X^\sharp\rho_0\chi_X^\sharp$. By this fact, 
we find 
\begin{equation}\label{3.41}
\omn{\del{\chi_{0s}^E+O(s^{-1}\xi_{0s}^E)}}= \omn{\chi_X^\sharp \del{\chi_{0s}^E+O(s^{-1}\xi_{0s}^E)}\chi_X^\sharp}= O(s^{-n}).
\end{equation}
The relation \eqref{claim2} implies \begin{align}\label{3.41-0}
\chi_{ts}^E&\geq \tilde{g}\tilde{\chi}_{ts}\tilde{g}+O(s^{-n}),
\end{align}
where we recall that $\tilde{g}$ is a smooth non-negative cutoff function with $\supp(\tilde{g})\subset\{g\equiv 1\}$ and $\tilde{\chi}$ is a smooth function such that $\tilde \chi\equiv 1 $ on $ (\delta,+\infty)$. 
It follows that, by applying the dual evolution $\beta_t'$, 
\begin{equation}
\label{hatchi-ts-est} \beta_t'(\tilde{g}\tilde{\chi}_{ts}\tilde{g})  \le \beta'_t(\chi_{ts}^E)+O(s^{-n}).
\end{equation}
Plugging the estimates \eqref{3.41}, \eqref{3.41-0} and \eqref{hatchi-ts-est} to \eqref{3.40} yields
\begin{align}\label{3.80}
\Tr(\tilde{g}\tilde{\chi}_{ts}\tilde{g}\beta_t(\rho_0)) &= O(s^{-n}).
\end{align}

Finally, recalling the definition \eqref{dX}, we find,  for all $v\in(\kappa,c)$, 
\begin{equation}\label{3.81}
\chi^\sharp_{X_{ct}^\cp}=\theta^+(d_{X_{ct}})=\theta^+(d_X-ct)\le\tilde{\chi}_{ts},
\end{equation}
where $\theta^+$ is the Heaviside function,	provided $\delta=c-v$ and $s=t$. See \figref{fig:f}.

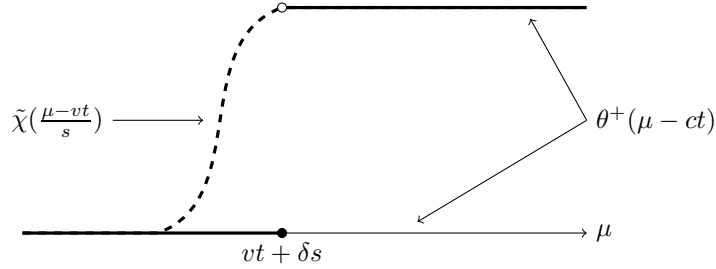
\begin{figure}[H]
\centering
\begin{tikzpicture}[scale=3]
	\draw [->] (-.5,0)--(2,0);
	\node [right] at (2,0) {$\mu$};
	
	\node [below] at (.65,0) {$vt+\delta s$};
	\draw [fill] (.65,0) circle [radius=0.02];
	

	\draw [very thick] (-.5,0)--(.65,0);
	\draw [very thick] (.65,1)--(2,1);
	\draw [dashed, very thick] (-.5,0)--(.1,0) [out=20, in=-160] to (.65,1)--(2,1);
	\filldraw [fill=white] (.65,1) circle [radius=0.02];

	\draw [->] (-.1,.5)--(.3,.5);
	\node [left] at (-.1,.5) {$\tilde \chi(\tfrac{\mu-vt}{s})$};
	
	\draw [->] (2,.5)--(1.75,.95);
	\draw [->] (2,.5)--(1.25,.05);
	\node [right] at (2,.5) {$\theta^+(\mu-ct)$};
	
\end{tikzpicture}
\caption{Schematic diagram illustrating estimate \eqref{3.81}.}
\label{fig:f}
\end{figure}

Hence we conclude estimate \eqref{1.2} from \eqref{3.80}--\eqref{3.81}. This completes the proof of \thmref{thm1}. 
\end{proof}

\medskip

\section{Estimates of multiple commutators} 
\label{sec:mult-comm-est}
In this section, we establish some key estimates for multiple commutators of the form $\ad{k}{A^E}{B}$. More precisely, we show that the operators $H=-\Delta+V(x)$ and $W_j$ satisfying \ref{H1}--\ref{W2}, with $\Phi$ given by \eqref{5.2.2}, verify that the abstract conditions \eqref{rme-cond} used to prove the recursive monotonicity estimate in Section \ref{sec:RME}.

First, we introduce some notation. For an integer $k$ and  a function $f\in C^{n+1}(\Rb^d)$, we write 
\begin{equation}\label{ak-def}
f\in \cS^k
\end{equation}
if there exists $C=C(n, f)>0$ such that for all multi-indices $\al$ with $0\le \abs{\al}\le n+1$  and $x\in\Rb^d$, 
\begin{equation}
\label{ak}
\abs{\p^\al f(x)}\le C\br{x}^{-k-|\al|}.
\end{equation}
For any multi-index $\beta$ with order $0\le|\beta|\le n+1$,  $f\in \cS^k$ and $g\in \cS^l$, it follows immediately from the definition and Leibnitz's rule that 
\begin{align}
\label{ak1}
\di^\beta f\in \cS^{k+|\beta|},\quad fg\in \cS^{k+l},
\end{align}
(with the obvious observation that $\di^\beta f\in C^{n+1-|\beta|}$ if $f\in C^{n+1}$). To simplify notation,	for a fixed operator $A$ on $\cH$, define $$C_A: B\mapsto \Ad_A (B)\equiv [A,B]$$ on the set of linear operators on $\cH$.  We also omit the subindices in $x_j$ and $p_j$. Restoring these subindices is straightforward.

\medskip
{Results in this section are valid for functions $\phi\in C^\infty, \phi\ge0$ satisfying}
\begin{equation}
\label{phi-cond}
\langle x\rangle^{\abs{\al}-1}\abs{\di^\al\phi(x)}\le M \quad  (x\in\Rb^d,\, {0\le \abs{\al}\le n+1}),
\end{equation}
for some absolute constant $M>0$.

{In particular, the smoothed-out distance function $d_X$ verifies \eqref{phi-cond}.} Later on, we choose $\phi(x)$ to be a smoothed-out distance function from $x$ to $X$, see \eqref{dX}.

The main result of this section are the following two propositions: 

\begin{proposition}\label{prop:mult-comm-estH}
Let $n\ge1$. Suppose $H$ satisfies \ref{H1} and let $\phi$ be as above.
Let $\phi^E:= g \phi g$, where $g$ is defined in \eqref{g-cond}-\eqref{g}.	Then	 there exists $C=C(n,M,E)>0$ such that, for all 
$E\in\Rb$,
\begin{align}
\label{3.15'}
\norm{\ad{k}{\phi^E}{H}} \le C\quad (k=1,\ldots,n+1).
\end{align}

\end{proposition}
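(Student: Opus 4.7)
The plan is to proceed by induction on $k$, exploiting the identity $[g(H),H]=0$, which lets the outer $g$'s in $\phi^E = g\phi g$ pass freely through $H$ and serves only to trap the unboundedness of the momentum operator $p$. For the base case $k=1$, we directly compute
\begin{equation*}
[\phi^E,H] = [g\phi g, H] = g[\phi,H]g = g\bigl(2i\nabla\phi\cdot p + \Delta\phi\bigr)g,
\end{equation*}
using that $V$ commutes with $\phi$. By \eqref{phi-cond} both $|\nabla\phi|$ and $|\Delta\phi|$ are bounded, while $p_j^2 \le H - V$ on $\Ran(g(H))\subset\{H\le E\}$ together with the boundedness of $V$ (remark after \ref{H1}) yields $\|p_j\,g(H)\| \le (E+\|V\|_\infty)^{1/2}$. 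This gives $\|[\phi^E,H]\| \le C(1+|E|)^{1/2}$.

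A crucial simplification for the inductive step is the closed form of the iterated commutators of $\phi$ with $H$ itself: direct calculation shows
\begin{equation*}
\ad{1}{\phi}{H} = 2i\nabla\phi\cdot p + \Delta\phi, \qquad \ad{2}{\phi}{H} = -2|\nabla\phi|^2, \qquad \ad{j}{\phi}{H} = 0 \;\text{ for } j\ge 3.
\end{equation*}
I would then maintain as inductive invariant the structural claim that $\ad{k}{\phi^E}{H}$ is a finite sum of norm-bounded operators of the schematic form $g(H)^{a_0} M_0 g(H)^{a_1} M_1 \cdots M_{r-1} g(H)^{a_r}$, where each $M_i$ is either a bounded multiplication operator built from derivatives of $\phi$ of order $\le n+1$, a bounded operator of the form $\ad{\ell}{\phi}{g^m(H)}$ (treated in the next step), or contains at most one factor of $p$ flanked immediately by some $g(H)^{a_i}$. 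Expanding $[\phi^E,\cdot] = [g\phi g,\cdot]$ via the Leibniz rule distributes the commutator across the product and produces two kinds of new terms: (i) commutators of $\phi$ with the $M_i$'s and with $H$, controlled by the explicit formulae above together with the product rule, and (ii) commutators of $\phi$ with powers $g(H)^{a_i}$.

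The main technical burden, and the step I expect to be the chief obstacle, is controlling type (ii), namely the iterated commutators $\ad{j}{\phi}{g^m(H)}$ for $1\le j\le n+1$. I would handle these via the Helffer--Sj\"ostrand functional calculus: with $\tilde G_m$ an almost analytic extension of $g^m$,
\begin{equation*}
\ad{1}{\phi}{g^m(H)} = \frac{1}{\pi}\int_\C \bar\partial\tilde G_m(z)\,(z-H)^{-1}[H,\phi](z-H)^{-1}\,dA(z),
\end{equation*}
and iterating via $[\phi,(z-H)^{-1}] = (z-H)^{-1}[\phi,H](z-H)^{-1}$ expresses $\ad{j}{\phi}{g^m(H)}$ as an integral of products of resolvents interleaved with the already-controlled operators $\ad{\ell}{\phi}{H}$, $\ell\le j$. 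The decay $|\bar\partial\tilde G_m(z)|\le C_N |\mathrm{Im}\,z|^N$ (with $N$ chosen large), combined with $\|(z-H)^{-1}\|\le |\mathrm{Im}\,z|^{-1}$ and $\|p(z-H)^{-1}\| \lesssim (1+|z|)^{1/2}/|\mathrm{Im}\,z|$, makes these integrals norm-convergent. The remaining work is bookkeeping: one must verify that the structural invariant above is preserved by $[\phi^E,\cdot]$---in particular, that no $p$-factor ever detaches from its sandwiching $g(H)$'s and that the number of derivatives of $\phi$ appearing in any single term does not exceed $n+1$. This closes the induction and yields \eqref{3.15'}.
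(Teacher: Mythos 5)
Your overall architecture --- induction on $k$, the Leibniz rule for the derivation $C_{\phi^E}$, the explicit formulas $\ad{1}{\phi}{H}=\Delta\phi+2\nabla\phi\cdot\nabla$ (up to sign), $\ad{2}{\phi}{H}=-2|\nabla\phi|^2$, $\ad{j}{\phi}{H}=0$ for $j\ge3$, and Helffer--Sj\"ostrand control of commutators with functions of $H$ --- matches the paper's proof in outline, and your base case is correct. But there is a genuine gap in the inductive step, and you have located the ``chief obstacle'' in the wrong place. The commutators $\ad{j}{\phi}{g^m(H)}$ are in fact the routine part: each insertion of $[\phi,H](z-H)^{-1}$ is bounded, so those integrals converge exactly as you say.

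The real difficulty is that expanding $C_{g\phi g}$ across a product unavoidably produces terms in which the \emph{unbounded} multiplication operator $\phi$ itself (growing like $\br{x}$ by \eqref{phi-cond}) is left stranded against a resolvent chain; note $C_{g\phi g}(A)=g\phi[g,A]+g[\phi,A]g+[g,A]\phi g$, not just $g[\phi,A]g$. Writing $\phi^E=\phi g^2+[g,\phi]g$, so that $C_{\phi^E}=\phi\,C_{g^2}+C_\phi(\cdot)g^2+C_{[g,\phi]g}$ as in \eqref{com-repr}, one already meets at $k=2$ the term $\phi\,C_{g^2}(p)=\int d\widetilde{g^2}\,\phi R_H\,(i\nabla V)\,R_H$ (and, similarly, $\phi\,\partial_j\phi\,[g^2,p_j]$ coming from $\nabla\phi\cdot p$ in $\ad{1}{\phi^E}{H}$). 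Neither $\phi R_H$ nor $\phi g(H)$ is bounded, so these terms cannot be absorbed into your structural invariant, whose building blocks are bounded multiplication operators, powers of $g$, the $\ad{\ell}{\phi}{g^m}$'s, and a single sandwiched $p$. The paper's proof handles them by commuting the weight $\phi$ through the resolvents until it meets $\nabla V$, and then using the decay $|\partial^\al V|\le C\br{x}^{-|\al|-\rho}$ from \ref{H1} to conclude that $\phi\,\nabla V$ is bounded (see \eqref{comms2}); the classes $\cS^k$ exist precisely to track this weight-versus-decay bookkeeping at every order. Your proposal uses only the boundedness of $V$, never the decay of its derivatives, and without that input this step fails --- so the invariant as stated is not preserved by $C_{\phi^E}$ beyond $k=1$.
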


\begin{proof}

1. In the following, we denote the resolvent $(z-A)^{-1}$ of the operator $A$ by $R_{A}(z)$ and $R_{A}$ if the argument is not important.  For measures, if it is clear from the context, we will also drop the arguments for simplicity.

\medskip

2. The proof is based on the mapping property of certain derivations.  Before we proceed, we define a class of operators
\begin{equation}\label{2.10}
\cF^{(1)}:=\Set{\text{ polynomials of operators of the form }B^{(1)}},
\end{equation}
where
\begin{multline}
\label{multi-comm-typ}
B^{(1)}=\int d\mu(z_1,\ldots,z_\nu)\del{\prod_{j=1}^\nu  R_{H}(z_{j})^{m_j^1}}  \del{\prod_{q=1}^{N}\prod_{r=1}^\nu a_{k_q}p^{\ell_q}R_{H}(z_{r})^{m_{r}^{q}}}, \\ 
\sum_{j=1}^\nu m_j^1\ge1,\, 0\le \ell_q \le \min(1,\sum_{r=1}^\nu m_r^{q}),\ k_q\ge 0,\ \forall q=2,\ldots N,
\end{multline}
where $\mu$ is some finite measure on $\Cb^\nu,\,\nu\ge2$, $N$ is some finite integer, and $a_k$ stands for a  {generic} function belonging to $\cS^k$ (see \eqref{ak-def}). Since $\ell_q \le \sum_{r=1}^\nu m_r^{q}$ and $k_{q}\geq 0$ for each $q$, the second factor in the integrand of \eqref{multi-comm-typ} is bounded, and therefore 
$$\cF^{(1)}\subset \cB(\cH).$$
Our goal is to show $\ad{k}{\phi^E }{H}$ lies in $\cF^{(1)}$ for all $1\le k\le n+1$ by induction, whence \eqref{3.15'} follows.

\medskip

3.  For the base case $k=1$, since $[g,H]=0$, we find by Leibnitz's rule that 
\begin{align}\label{3.7}
\ad{1}{\phi^E }{H}&=g\ad{1}{\phi  }{H} g.
\end{align}
Using formula \eqref{HSj-rep} for each $g$, we can rewrite \eqref{3.7}  using Fubini's theorem as 
\begin{equation}\label{2.11}
\ad{1}{\phi^E }{H}= \iint d\widetilde{g}(z_{1})\otimes d\widetilde{g}(z_{2})R_{H}(z_{1}) \ad{1}{\phi}{H} R_{H}(z_{2}). 
\end{equation}
By \remref{remG}, we can modify $g$ to have compact support.
Thus, we can choose the measure $d\widetilde{g}\otimes d\widetilde g$ to have compact support in $\Cb^{2}$ (see \eqref{tfDef} and Appds.~\ref{secRemEst}--\ref{4.sec:commut} for details). 

Next, we compute 					
\begin{equation}\label{3.5''}
\ad{1}{\phi}{H}=\Lap \phi+2\grad \phi\cdot \grad,
\end{equation} 
so that $\ad{1}{\phi}{H}$ is a linear combination of terms of the forms $a_1$ or $a_0p$ with $a_j\in\cS^j$, by assumption \eqref{phi-cond}.
Plugging this into \eqref{2.11} shows that $\ad{1}{\phi^E }{H}\in \cF^{(1)}$, which completes the proof of the base case. 

4.	Now, assuming 	$\ad{k}{\phi^E}{H}\in\cF^{(1)}$, we will prove $\ad{k+1}{\phi^E}{H}\in\cF^{(1)}.$  It is immediately clear that the induction step is equivalent to showing 
\begin{equation}\label{2.12}
C_{\phi^E}(\cF^{(1)})\subset\cF^{(1)}.
\end{equation}
To establish \eqref{2.12}, we use the crucial fact that the map $C_A$ is a derivation, i.e. a linear operator satisfying the Leibnitz rule. In particular, with $A= \phi^E=g\phi g=\phi g^{2}+[g,\phi]g$, we have 
\begin{equation}\label{com-repr}
C_{\phi^E}=\phi C_{g^2}+C_\phi (\cdot)g^2 + C_{\sbr{g,\phi}g}.
\end{equation}
Also, we note some easy commutator relations 
\begin{align}\label{comms-prep}
&C_{A}R_{H}=R_{H} (C_{A}H)R_{H}\quad {\text{ for all operators }A \text{ s.t. } R_H:\cD(A)\to\cD(A)},\\
&C_{H}p=i\n V,\ C_{\phi} p=i\n \phi,\	C_{\phi}H=- C_{H}\phi=\Del \phi+2\n \phi\cdot \n. \label{comms-prep'}
\end{align}

We will show that each of the three maps in \eqref{com-repr} maps $\cF^{(1)}$ into itself using the relations \eqref{comms-prep}--\eqref{comms-prep'}.

\medskip

4.1 First, we show $\phi C_{g^{2}}(\cF^{(1)})\subset\cF^{(1)}$.  Since $\phi C_{g^2}(R_{H})=0$, it suffices,  by the induction hypothesis, formula \eqref{multi-comm-typ} and Leibnitz's rule, to evaluate the operators 
\begin{equation}\label{2.14}
\phi C_{g^2}(p),\quad \phi C_{g^2}(a_{k}).
\end{equation}

Using \eqref{comms-prep}--\eqref{comms-prep'}, together with the relation
$C_{g^2}A=-\int d\widetilde{g^2}R_{H}(C_{H}A)R_{H}$ and the fact that $\n V\in \cS^{1}$ by Hypothesis \ref{H1}, we compute, using \eqref{comms-prep'}
\begin{align}
\label{comms2}\phi C_{g^2} (p)=&\int d\widetilde{g^2} \phi R_{H} (i\grad V)R_{H}\notag\\ 
=&\int d\widetilde{g^2}R_{H}a_0 R_{H}+\int d\widetilde{g^2}R_{H} (a_{1}+a_{0} p) R_{H}a_1 R_{H},
\end{align}
where in the second equality we commuted $\phi$ through $R_H$ and used again \eqref{comms-prep'} together with \eqref{phi-cond}. Similarly,
\begin{align}
\phi C_{g^2} (a_{k})=&\int d\widetilde{g^2}\phi R_{H}(\Del a_{k}+2\n a_{k}\cdot \grad ) R_{H} \notag\\
\label{comms3}=&\int d\widetilde{g^2} R_{H}(a_{k+1}+ a_{k} p) R_{H} \nonumber\\
&\quad\quad+\int d\widetilde{g^{2}}R_{H}(a_{1}+a_{0} p) R_{H}(a_{k+2}+ a_{k+1} p) R_{H},
\end{align} 
which are indeed of the desired form in order to deduce that $\phi C_{g^{2}}(\cF^{(1)})\subset\cF^{(1)}$.

\medskip

4.2 Next, we show $C_{\phi}(\cF^{(1)})g^{2}\subset\cF^{(1)}$.  Since $C_{\phi}(a_{k})=0$ for all $k$, it suffices,  by induction hypothesis, formula \eqref{multi-comm-typ} and Leibnitz's rule, to evaluate the following operators 
\begin{equation}\label{2.13}
C_{\phi}(p),\quad C_{\phi}(R_{H}),
\end{equation}
where, recall, $R_{H}$ stands for the resolvent of $H$.  Using  the relations \eqref{comms-prep}--\eqref{comms-prep'}, we compute
\begin{align}
C_\phi (p)=&\grad \phi\in \mathcal{S}^0,			\label{2.16}\\
C_{\phi}(R_{H})&= R_{H}(C_{\phi}H)R_{H}=R_{H}(\Del \phi+2\n \phi\cdot \grad)R_{H}\nonumber\\
&=R_{H}(a_{1}+a_{0} p)R_{H},\label{2.17}
\end{align}
which, inserted into \eqref{multi-comm-typ}, allows us to conclude that $C_{\phi}(\cF^{(1)})g^{2}\subset\cF^{(1)}$.

\medskip

4.3	 Finally, we  show $C_{[g,\phi]g}(\cF^{(1)})\subset\cF^{(1)}$. By the induction hypothesis and the Leibnitz rule, it suffices to show that $\sbr{g,\phi}g$ is of the form \eqref{multi-comm-typ}.	To this end we use \eqref{HSj-rep} so that
\begin{equation}\label{3.5'}
\begin{aligned}
	\sbr{g,\phi}g&=\del{\int d \widetilde{g}(z_{1})\sbr{R_{H}(z),\phi}}\del{\int d \widetilde{g}(z_{2})R_{H}(z_{2})}\\
	&= -\del{\int d \widetilde{g}(z_{1})R_{H}(z_{1})\del{\ad{1}{\phi}{H}}R_{H}(z_{1})}\del{\int d \widetilde{g}(z_{2})R_{H}(z_{2})}.
\end{aligned}
\end{equation}
Since $C_{\phi}(H)=a_1+a_0p$ from \eqref{comms-prep'}, Eq.~\eqref{3.5'} shows that  $C_{\sbr{g,\phi}g}(\cF^{(1)})\subset \cF^{(1)}$. 

This completes the induction.
\end{proof}

\begin{proposition}\label{prop:mult-comm-estW} 
Suppose Assumption \ref{W2} holds and let  $\phi\in C^\infty(\Rb^d)$ satisfy condition  \eqref{phi-cond}. 
Let $\phi^E=g \phi g$ where $g$ is defined in \eqref{g-cond}-\eqref{g}. 
Then, the following estimates hold: 
\begin{equation}\label{commWj-est}
\sum_j\norm{\ad{k}{\phi^E}{W_j}}^{2} <\infty\quad (k=0,\ldots, n+1).
\end{equation}
\end{proposition}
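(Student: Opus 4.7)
The plan is to follow the inductive strategy of \propref{prop:mult-comm-estH}, adapted so that every term in the expansion of $\ad{k}{\phi^E}{W_j}$ exhibits an inner $W_j$-commutator of the form controlled by \ref{W2}. Specifically, I would introduce a class $\cF^{(2)}$ analogous to $\cF^{(1)}$ in \eqref{2.10}, whose elements are finite linear combinations of operators of the form
\[
\int d\mu(\underline z)\, B_1(\underline z)\cdot \cC_j \cdot B_2(\underline z),
\]
where $\mu$ has compact support, $B_1, B_2$ are bounded products of resolvents $R_H(z_i)$, symbols $a_k\in \cS^k$ and momentum factors $p_q$ (as in \eqref{multi-comm-typ}), and $\cC_j=\prod_i[(\br x C_{p_q})^{k_i}C_{x_q}^{\ell_i}W_j]$ is an inner iterated $W_j$-commutator of the form appearing in \ref{W2}, with total weight $\sum_i(k_i+\ell_i)\le n+1$. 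Each such element is bounded by a constant times $\|\cC_j\|$ uniformly in $j$.

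The claim is that $\ad{k}{\phi^E}{W_j}\in\cF^{(2)}$ for $1\le k\le n+1$, with the inner commutator of weight at most $k$; the case $k=0$ reduces to $\sum_j\|W_j\|^2<\infty$, which we take as part of the setup. The induction follows the structure of \propref{prop:mult-comm-estH}: decompose $C_{\phi^E}=\phi C_{g^2}+C_{\phi}(\cdot)g^{2}+C_{[g,\phi]g}$ as in \eqref{com-repr} and show each piece preserves $\cF^{(2)}$. The pieces $\phi C_{g^2}$ and $C_{[g,\phi]g}$ are handled by Helffer--Sj\"ostrand expansion of $g(H)$ together with $[H,A]=-\sum_q(p_q[p_q,A]+[p_q,A]p_q)+[V,A]$; Jacobi's identity transports these operations through $\cC_j$, producing at most one additional $C_{p_q}$ or $C_{x_q}$ applied to $W_j$, with a $\br x$-weight furnished by the symbol factors $a_k$ arising from \ref{H1} as in \eqref{comms2}--\eqref{comms3}.

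The main obstacle is the middle piece $C_{\phi}(\cdot)g^{2}$ acting on the inner commutator $\cC_j$: since $\phi$ is a smooth symbol of order one (satisfying \eqref{phi-cond}) rather than a polynomial in $x$, the commutator $[\phi(x),\cC_j]$ does not directly reduce to a $C_{x_q}$-type commutator. I would use the integral representation $\phi(x)-\phi(y)=\int_0^1\nabla\phi((1-t)x+ty)\cdot(x-y)\,dt$ together with the symbol bounds $|\partial^\alpha\phi|\le M\br{x}^{1-|\alpha|}$ to express $[\phi(x),\cC_j]$ through $C_{x_q}\cC_j$ modulo bounded corrections absorbable into $B_1,B_2$, with the $\br x$-weights matching those in \ref{W2}; the recursion on $\partial^\alpha\phi$ terminates after finitely many steps since higher derivatives decay. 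Once this induction closes, \eqref{commWj-est} follows by taking norms, squaring, applying Cauchy--Schwarz on the finite combinatorial sum, and summing over $j$ using \ref{W2}.
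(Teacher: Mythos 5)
Your overall architecture is the paper's: the same induction in $k$, the same decomposition $C_{\phi^E}=\phi C_{g^2}+C_{\phi}(\cdot)g^{2}+C_{[g,\phi]g}$ from \eqref{com-repr}, the same target class of operators built from uniformly bounded outer factors sandwiching an inner iterated $W_j$-commutator of the type controlled by \ref{W2}, and the same conclusion by Cauchy--Schwarz and summation over $j$. The paper's only notable organizational difference is that it carries $\phi$ itself as the weight (classes $\cG^{(2)}_m$, $\cG^{(3)}_m$ built from $(\phi C_p)^rC_x^s$) and converts to $\br{x}$-weights at the end by writing $\phi=b\br{x}$ with $b\in\cS^0$ and commuting the $b$'s to the left, whereas you work with $\br{x}$-weights from the start; that is immaterial.

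The one step where your proposal has a genuine soft spot is the treatment of $[\phi(x),\cC_j]$ (and likewise $[V,\cC_j]$). The identity $\phi(x)-\phi(y)=\int_0^1\nabla\phi((1-t)x+ty)\cdot(x-y)\,dt$ produces, at the kernel level, the factor $m_q(x,y)=\int_0^1\partial_q\phi((1-t)x+ty)\,dt$ multiplying the kernel of $C_{x_q}\cC_j$. This is a bounded \emph{two-variable} symbol, not a left- or right-multiplication operator, so it cannot be ``absorbed into $B_1,B_2$'' as claimed: multiplying the kernel of a bounded operator by a bounded function of $(x,y)$ does not in general yield a bounded operator (this is a Schur-multiplier issue), and in any case the result must land back in your class so that further commutators can be taken. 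To close this step one needs one-sided correction factors, which is exactly what the Helffer--Sj\"ostrand commutator expansion delivers: the paper writes $C_{\phi}(A)=\int d\widetilde{\phi}(z)\,R_{x}(z)\,[C_{x}A]\,R_{x}(z)$ as in \eqref{2.40} (and similarly for $V$ in \eqref{com-phi-V-W}, using $\phi(x-i)^{-1}\in\cS^0$ to restore the weight), i.e.\ \lemref{lemA.2} applied with $\Phi=x_q$; iterating your Taylor expansion to higher order does not remove the problem, since the remainder kernel is again a two-variable multiplier. A minor further point: the $k=0$ case is \emph{not} part of the setup --- strong convergence of $\sum_jW_j^*W_j$ in \ref{W1} gives $\|\sum_jW_j^*W_j\|<\infty$ but not $\sum_j\|W_j\|^2<\infty$; this term is, however, not needed for \eqref{rme-cond}, which only uses $k\ge1$ together with $\|\sum_jW_j^*W_j\|$.
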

\begin{proof} 

Within this proof we fix some $j$ and write $W\equiv W_j$.   We will use the same strategy and adapt the same notations in the proof of \propref{prop:mult-comm-estH} to establish mapping property for the derivation $C_{\phi^E}$.  For each $k=1,\ldots, n+1$, we define the classes of operators on $\cB(\cH)$
\begin{align}\label{2.19}
\cG^{(2)}_{m}&:=\{\cL_{A}\cR_{A'}B_{rs}^{(2)}\mid A,A'\in\cF^{(1)}\cup\{\textbf{1}\},\nonumber\\
&\quad\quad\quad\quad\quad\quad\quad B^{(2)}_{rs}\equiv (\phi C_{p})^{r}C_{x}^{s} \text{ with }r,s\geq 0\text{ and }r+s=m\}\nonumber\\
\cF^{(2)}_k&:=\Big\{\text{ polynomials of elements in }\cG^{(2)}_{m}(W)\text{ for }1\leq m\leq k\Big\}.
\end{align}
Here $\cL,\cR$ are left- and right-multiplication operator in $\cB(\cH)$, respectively, $\cF^{(1)}$ is defined in \eqref{2.10}, and $\cG_{m}^{(2)}(W)$ means operators in $\cG_{m}^{(2)}$ acting on $W$.

\medskip

1. Our first claim is that
\begin{equation}\label{2.20}
\ad{k}{\phi^E}{W}\in \cF^{(2)}_k
\end{equation}
for every $k=1,\ldots,n+1$.  We prove the this claim by induction in $k$. For $k=1$, we first compute
\begin{align}\label{com-H-W}
C_{H}W&=pC_{p}W+(C_{p}W)p+C_{V}W\nonumber\\
&=pC_{p}W+(C_{p}W)p+\int d\widetilde{V}(z)R_{x}(z)[C_{x}W]R_{x}(z)
\end{align}
and
\begin{align}\label{com-phi-V-W}
\phi C_{V}W&=\int d\widetilde{V}(z)\phi(x)R_{x}(z)[C_{x}W]R_{x}(z)\nonumber\\
&=a_{0}\int d\widetilde{V}(z)(\textbf{1}-(z-i)R_{x}(z))[C_{x}W]R_{x}(z),
\end{align}
using the identity $R_{x}(z)=(x-i)^{-1}[\textbf{1}-(z-i)R_{x}(z)]$ and noting that $\phi (x-i)^{-1}\in \cS^{0}$. Note that the integral in \eqref{com-phi-V-W} is convergent, as follows from the fact that $V\in\mathcal{S}^\rho$ for some $\rho>0$ (see Hypothesis \ref{H1}) together with the properties of the almost analytic extension $\widetilde V$ described in Appendix \ref{secRemEst}. 

Here and below, to simplify the proof we take $d=1$. For $d\ge1$, we use the  Helffer-Sj\"ostrand   representation \eqref{HSj-rep} for several variables to write 
$$V(x_1,\ldots,x_d)=\int  d\widetilde{V}(z_1,\ldots, z_d)(z-x_1)^{-1}\ldots(z-x_d)^{-1},$$
which yields through Leibnitz rule that
\begin{align*}
\phi C_{V}W&=\int d\widetilde{V}(z)\phi(x)R_{x_1}(z)[C_{x_1}W]R_{x_1}(z)R_{x_2}(z)\cdots R_{x_d}(z)+\ldots\\
&\qquad + \int d\widetilde{V}(z)\phi(x)R_{x_1}(z)\cdots R_{x_d}(z)[C_{x_d}W]R_{x_d}(z). 
\end{align*}
One can handle each of the $d$ terms on the r.h.s. exactly as in \eqref{com-phi-V-W} and then sum over the results.

Eqs. \eqref{com-H-W}--\eqref{com-phi-V-W} show that $\phi C_{H}W\in\cF_{1}^{(2)}$.  Now, using \eqref{com-repr}--\eqref{comms-prep'} and that fact that $g^{2},[g,\phi]g\in \cF^{(1)}$, as shown in Proposition \ref{prop:mult-comm-estH}, we have
\begin{align}
\phi C_{g^{2}}(W)&=\int d\widetilde{g}(z)R_{H}(z)\phi C_{H}(W)R_{H}(z)\nonumber\\
&\quad\quad+\int d\widetilde{g}(z)R_{H}(z)(a_{1}+a_{0}p)R_{H}(z)[C_{H}W]R_{H}(z)\\
C_{\phi}(W)g^{2}&=\int d\widetilde{\phi}(z)R_{x}(z)[C_{x}W]R_{x}(z)g^{2}(H),\label{2.40}\\
C_{[g,\phi]g}W&=[g,\phi]g W-W[g,\phi]g,
\end{align}
so that $C_{\phi^{E}}W\in\cF_{1}^{(1)}$.  This completes the proof for the base case.

\medskip

2. Now,  assuming \eqref{2.20} holds for $k=m$, we prove it for $k=m+1$.  Since $\ad{m+1}{\phi^E}{W_j}=C_{\phi^E}(\ad{m}{\phi^E}{W_j})$, by inductive assumption, it suffices to show that $C_{\phi^{E}}(AB_{m}A')\in\cF_{m}^{(2)}$ for all $AB_{m}A'\in \cG_{m}^{(2)}$.  By Leibnitz rule, we have
\begin{align}\label{ind-Westi}
C_{\phi^{E}}(AB_{m}A')&=(C_{\phi^{E}}A)B_{m}A'+A(C_{\phi^{E}}B_{m})A'+AB_{m}(C_{\phi^{E}}A').
\end{align}
The first and the last term on the r.h.s. of \eqref{ind-Westi} is taken cared by Proposition \ref{prop:mult-comm-estH}.  We now have to compute the second term.  To this end, we define another set of operators
\begin{align}\label{2.26}
\cG^{(3)}_{m}&:=\{\cL_{A}\cR_{A'}B_{rs}^{(3)}\mid A,A'\in\cF^{(1)}\cup\{\textbf{1}\},B^{(3)}_{rs}\equiv (\phi^{\ell} C_{p})^{r}C_{x}^{s}\nonumber\\
&\quad\quad\quad\quad\quad\quad\quad \text{ with }\ell\in \{0,1\}, r,s\geq 0\text{ and }r+s=m\}\nonumber\\
\cF^{(2)}_k&:=\Big\{\text{polynomials of elements in }\cG^{(3)}_{m}(W)\text{ for }1\leq m\leq k\Big\}.
\end{align}
We remark that the operator product $(\phi^{\ell}C_{p})^{r}$ means that products of the form $(\phi C_{p})^{r_{1}}(C_{p})^{r_{2}}...(\phi C_{p})^{r_{2n-1}}(C_{p})^{r_{2n}}$ for any $r_{1},...,r_{2n}\geq 0$ and $r_{1}+...+r_{2n}=r$.

Write $\phi=b\br x$ with $b(x):=\phi(x)/\br{x}\in \cS^0$ by \eqref{phi-cond} with $\al=0$.  We successively commute the bounded operators $b$'s to the left. Then condition \eqref{W2-cond'} implies the same estimate but with $\phi$ in place of $\br{x}$, i.e.
\begin{align}\label{W2-cond}
\sum_{j=1}^\infty\sum_{\substack{\sum(k_i+\ell_i)=n+1\\ k_i,\,\ell_i\ge0}}\|\prod_{i} \big[(\phi C_{p_q})^{k_i}C_{x_q}^{\ell_i} W_j\big]\|^2<\infty.
\end{align}
By \eqref{W2-cond} and the fact that $\cF^{(1)}\subset \cB(\cH)$, it follows that 
$$\cF^{(3)}_{n+1}\subset \cB(\cH).$$
We now claim that for $k=0,1\ldots$ and every $B^{(2)}_k\in\cG^{(2)}_k$, there exist $$A,\,A'\in\cF^{(1)}\cup\Set{\1},\quad B_k^{(3)}(W)\in\cF^{(3)}$$ 
such that 
\begin{equation}\label{2.28}
B_k^{(2)}(W)=A B_k^{(3)}(W) A'.
\end{equation}
This relation implies $\cF^{(2)}\subset \cF^{(3)}$. 
This, together with \eqref{2.20} and the inclusion $\cF^{(3)}_{n+1}\subset \cB(\cH)$, leads to \eqref{commWj-est}.

\medskip

2.1 Again, we prove \eqref{2.28} by induction.  For $k=1$, it is trivial from the definition.

\medskip

2.2 Next,  assuming \eqref{2.28} holds for $k=m$, we prove it for $k=m+1$.  Again, by Proposition \ref{prop:mult-comm-estH} and by the induction hypothesis and the Leibnitz rule, it suffices therefore to show that, for any $B^{(3)}_{m}=(\phi^{\ell}C_{p})^{r}C_{x}^{s}$ for some $\ell\in\{0,1\}$ and $r,s\geq 0$ such that $r+s=m$, 
\begin{equation}\label{2.31}
\phi C_{p}(B_{m}^{(2)}(W)),C_{x}(B_{m}^{(2)}(W))\in\cF^{(3)}_{m+1}.
\end{equation}
For the former term, it is trivial.  For the latter case, we use the fact that
\begin{align}
C_{p}C_{x}=C_{x}C_{p},\quad\quad C_{x}(\phi C_{p})=\phi C_{p}C_{x}
\end{align}
so that $C_{x}(B_{m}^{(3)}W)=B_{m}^{(3)}C_{x}W=(\phi^{\ell}C_{p})^{r}C_{x}^{s+1}W$.  This completes the induction.

\medskip

3.  Now we return back to our previous induction proof.  Since every operator in $\cG_{m}^{(2)}(W)$ can be expanded as a finite sum of terms in $\cF^{(3)}_{m}$, it suffices to calculate $C_{\phi^{E}}(B^{(3)}_{m}(W))$ for some $B^{(3)}_{m}=(\phi^{\ell}C_{p})^{r}C_{x}^{s}\in\cG^{(3)}_{m}$.  As in the calculation for the base case, it suffices to compute the terms $\phi C_{H}(B_{m}^{(3)}(W))$ and $C_{x}(B_{m}^{(3)}(W))$.  The latter term is contained in $\cF^{(3)}_{m+1}$ trivially.  For the former term, we have
\begin{align}\label{comm-H-B3}
\phi C_{H}(B_{m}^{(3)}(W))&=\phi pC_{p}(B_{m}^{(3)}(W))+\phi C_{p}(B_{m}^{(3)}(W))p+\phi C_{V}(B^{(3)}_{m}(W))\nonumber\\
&=p(\phi C_{p}B^{(3)}_{m}(W))+(\phi C_{p}B^{(3)}_{m}(W))p\nonumber\\
&\quad\quad\quad+(C_{p}\phi) B^{(3)}_{m}(W)+\phi C_{V}(B^{(3)}_m(W)).
\end{align}
Obviously the first three terms in the last line of \eqref{comm-H-B3} belong in $\cF^{(3)}_{m+1}$.  For the last term, we have 
\begin{align}\label{comm3}
\phi C_{V} B^{(3)}_m&=a_{0}\int d\widetilde{\phi}(z)[\one - (z-i)R_{x}(z)](C_{x} B^{(3)}_{m}(W))R_{x}(z) \notag\\
&= a_{0}(C_{x} B^{(3)}_{m}(W))a_0-\int d\widetilde{\phi}(z)(z-i)   R_{x}(z)(C_{x} B^{(3)}_{m}(W))R_{x}(z) .
\end{align}  
This completes the proof.
\end{proof}

\begin{corollary}\label{cor:4.3}
Suppose that $H=-\Delta+V(x)$ and $W_j$ satisfy \ref{H1}--\ref{W2}. Then, with $\Phi$ given by \eqref{5.2.2},  condition \eqref{rme-cond} holds.
\end{corollary}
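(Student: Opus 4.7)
The plan is to verify the four contributions to $M_k$ in \eqref{rme-cond} one at a time, using the two propositions of this section together with the hypotheses \ref{W1} and the properties of the smoothed distance function $d_X$.

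First, I would check that $\phi = d_X$ satisfies the hypothesis \eqref{phi-cond} needed to invoke Propositions \ref{prop:mult-comm-estH} and \ref{prop:mult-comm-estW}. This is immediate from the construction of $d_X$, specifically from the bound \eqref{dX2}, which states $\dist_X^{|\alpha|-1}(x)|\partial^\alpha d_X(x)| \le C_\alpha$; since $\dist_X(x) \le \langle x\rangle$ up to an additive constant absorbed by the smoothing, one obtains $\langle x\rangle^{|\alpha|-1}|\partial^\alpha d_X(x)| \le M$ for $0\le |\alpha|\le n+1$, as required in \eqref{phi-cond}.

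Next, I would bound each of the four pieces of $M_k$. The constant $1$ is trivially finite. The term $\|\ad{k}{\Phi}{H}\|^2$ with $\Phi = d_X^E = g\,d_X\,g$ is bounded uniformly in $k = 1, \ldots, n+1$ by Proposition \ref{prop:mult-comm-estH} applied with $\phi = d_X$, yielding $\|\ad{k}{\Phi}{H}\| \le C(n, M, E)$. The strong convergence of $\sum_j W_j^* W_j$ to a bounded operator is exactly Assumption \ref{W1}, so $\|\sum_j W_j^* W_j\| < \infty$. Finally, the sum $\sum_{j\ge 1} \|\ad{k}{\Phi}{W_j}\|^2$ for $k=1,\ldots,n+1$ is controlled by Proposition \ref{prop:mult-comm-estW} applied with $\phi = d_X$, whose hypothesis is precisely Assumption \ref{W2}. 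Adding these four finite quantities gives $M_k < \infty$ for each $k$ in the relevant range, which is \eqref{rme-cond}.

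Since the real content has already been done in Propositions \ref{prop:mult-comm-estH} and \ref{prop:mult-comm-estW}, no genuine obstacle remains; the proof is essentially a bookkeeping step that matches the abstract conditions of Section \ref{sec:RME} to the concrete setup of Schr\"odinger $H$ and the Lindblad operators $W_j$. The only small point worth spelling out is the verification that $d_X$ fits the template \eqref{phi-cond}, which follows at once from the Whitney-type construction recalled after \eqref{dX2}.
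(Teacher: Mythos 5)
Your proposal is correct and follows exactly the paper's route: the paper's proof of Corollary~\ref{cor:4.3} is the one-line observation that $d_X$ satisfies \eqref{phi-cond}, after which Propositions~\ref{prop:mult-comm-estH} and \ref{prop:mult-comm-estW}, together with \ref{W1}, control each piece of $M_k$ in \eqref{rme-cond}. One small caveat: your justification of \eqref{phi-cond} invokes $\dist_X(x)\le\langle x\rangle$ in the wrong direction --- what is actually needed is that $d_X$ vanishes identically in a neighbourhood of $X$ (so there is nothing to bound where $\dist_X$ is small) and that $\dist_X(x)\ge c\,\langle x\rangle$ outside a compact set because $X$ is bounded, whence \eqref{dX2} yields \eqref{phi-cond}; this slip does not affect the conclusion, which the paper likewise asserts without detailed proof.
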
	
\begin{proof}
Since $d_X(x)$ satisfies condition \eqref{phi-cond}, it suffices to apply Propositions \ref{prop:mult-comm-estH}--\ref{prop:mult-comm-estW}.
\end{proof}

\medskip

		\section{Proof of  claims \eqref{claim1}--\eqref{claim2} }\label{sec:claims-deriv}

\subsection{Proof of Claim  \eqref{claim1}}\label{secPfCl1}
Recall that $\chi_X^\sharp,\,X\subset \Rb^d$, denotes the characteristic functions of $X$. Recall also that the set of smooth cutoff functions $\mathcal{X}$ is defined in \eqref{F} and that $d_X^E=gd_Xg$ with $g=g^E(H)$ (see \eqref{g-cond}--\eqref{gE}) and $d_X$ the smooth distance function defined in \eqref{dX}. We reproduce Claim \eqref{claim1} below:
\begin{proposition}\label{cl:1}	For  every $\chi\in\cX$ and $\chi_{0s}=\chi(s^{-1}d_X^E)$ (see \eqref{chitsdef}), 		
	\begin{align}
		\label{claim1'} \chi_X^\sharp\chi_{0s}\chi_X^\sharp&=O(s^{-n}).
	\end{align}
\end{proposition}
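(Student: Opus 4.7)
The plan is to exploit two structural facts: (i) $\chi\in\cX$ vanishes to infinite order at the origin, since $\chi\in C^\infty(\R)$ with $\supp\chi\subset[0,\infty)$ forces $\chi^{(k)}(0)=0$ for all $k\ge 0$; and (ii) $d_X\chi_X^\sharp=\chi_X^\sharp d_X=0$ as multiplication operators, because $d_X\equiv 0$ on $X$.

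By (i), I factor $\chi(\mu)=\mu^n h_n(\mu)$ with $h_n\in C^\infty_b(\R)$, whence $\chi(\mu)\le\|h_n\|_\infty\mu^n$ for $\mu\ge 0$. Moreover, $d_X^E=g d_X g\ge 0$ as a sandwich of the non-negative operator $d_X$ by the self-adjoint $g=g(H)$, so $\sigma(d_X^E)\subset[0,\infty)$ and the spectral theorem yields the operator inequality $0\le\chi(s^{-1}d_X^E)\le\|h_n\|_\infty s^{-n}(d_X^E)^n$. Sandwiching by $\chi_X^\sharp$ preserves order, hence
\begin{equation*}
\|\chi_X^\sharp\chi_{0s}\chi_X^\sharp\|\le\|h_n\|_\infty\,s^{-n}\,\|\chi_X^\sharp(d_X^E)^n\chi_X^\sharp\|,
\end{equation*}
so the proposition reduces to showing that $\chi_X^\sharp(d_X^E)^n\chi_X^\sharp$ is a bounded operator.

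To prove this boundedness, I invoke (ii). Expanding $(d_X^E)^n = g d_X g^2 d_X\cdots g^2 d_X g$ and applying the boundary identities $\chi_X^\sharp g d_X=\chi_X^\sharp[g,d_X]$ on the left and $d_X g\chi_X^\sharp=[d_X,g]\chi_X^\sharp$ on the right, one obtains
\begin{equation*}
\chi_X^\sharp(d_X^E)^n\chi_X^\sharp=\chi_X^\sharp\,[g,d_X]\,g^2 d_X g^2\cdots g^2 d_X g^2\,[d_X,g]\,\chi_X^\sharp,
\end{equation*}
with $n-2$ remaining interior $d_X$-factors. I then push each interior $d_X$ to the right past the intervening $g^2$-factors via $d_X g^2 = g^2 d_X + [d_X,g^2]$; when a displaced $d_X$ eventually reaches the rightmost $\chi_X^\sharp$ it is annihilated by (ii), leaving only iterated commutators $\ad{k}{d_X}{g}$ and products of $g$'s. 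Each $\ad{k}{d_X}{g}$ is bounded for every $k\ge 1$: by iterated Helffer--Sj\"ostrand it reduces to integrals involving $(z-H)^{-1}$ and $\ad{j}{d_X}{H}$ for $j\le k$, and a direct computation yields $\ad{1}{d_X}{H}=\Delta d_X+2\nabla d_X\cdot\nabla$ (bounded $H^1\to L^2$), $\ad{2}{d_X}{H}=-2|\nabla d_X|^2$ (bounded multiplication), and $\ad{k}{d_X}{H}=0$ for $k\ge 3$, all under \ref{H1} and \eqref{dX2}.

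The main obstacle is organizing the combinatorics for general $n$: pushing all $n-2$ interior $d_X$'s generates a sum with a growing number of terms, each of which must remain a product of bounded operators. This bookkeeping can be carried out along the lines of Section~\ref{sec:mult-comm-est}, in particular the mapping-property framework of the class $\cF^{(1)}$ introduced in the proof of \propref{prop:mult-comm-estH}, to ensure that every term produced by the commutator expansion stays within a class of bounded operators, with a constant independent of $s$.
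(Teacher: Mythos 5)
Your argument is correct in substance, but it is a genuinely different route from the paper's main proof of Proposition \ref{cl:1}, which writes $\chi_{0s}=\int R^E\,d\tilde\chi(z)$ via the Helffer--Sj\"ostrand representation, expands $R^E=(d_X/s-z-b/s)^{-1}$ in powers of $Rb/s$ with $b=d_X-d_X^E$, and controls the resulting terms through Lemmas \ref{lem:(Rb/s)k-exp} and \ref{lemRbnExp}. Your route --- peel $n$ powers of $d_X^E$ out of $\chi$ using that $\chi$ vanishes to high order at $0$ (in fact $\chi\equiv 0$ near $0$, since $\supp\chi'$ is a compact subset of $(0,\delta/2)$), and then show $\chi_X^\sharp(d_X^E)^n\chi_X^\sharp=O(1)$ --- is essentially the \emph{alternative} proof the authors themselves sketch in the remark immediately after their proof. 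The difference is in how the boundedness of $(d_X^E)^n\chi_X^\sharp$ is obtained: the remark inserts weights $\langle x\rangle^{\pm i}$ and uses that $\langle x\rangle^i g^2(H)\langle x\rangle^{-i}$, $d_X\langle x\rangle^{-1}$ and $\langle x\rangle^n\chi_X^\sharp$ are bounded ($X$ being bounded), whereas you use the annihilation $d_X\chi_X^\sharp=0$ together with boundedness of the iterated commutators $\ad{k}{d_X}{g}$. Both work; the weighted-resolvent version is slightly shorter, your commutator version has the advantage of not using boundedness of $X$, and the authors keep the longer resolvent-expansion proof because its machinery is reused for the proof of \eqref{claim2}.

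Two points to tighten. First, an interior $d_X$ pushed to the right does not meet $\chi_X^\sharp$ directly: it first meets the factor $[d_X,g]$, so one more commutation is needed, $d_X[d_X,g]\chi_X^\sharp=[d_X,g]\,d_X\chi_X^\sharp+\ad{2}{d_X}{g}\chi_X^\sharp=\ad{2}{d_X}{g}\chi_X^\sharp$, which is harmless since $\ad{2}{d_X}{g}$ is bounded. Second, because $d_X^E$ is unbounded, the inequality $0\le\chi(s^{-1}d_X^E)\le\|h_n\|_\infty s^{-n}(d_X^E)^n$ is only a quadratic-form inequality on $\cD((d_X^E)^{n/2})$, and sandwiching with $\chi_X^\sharp$ presupposes that $\chi_X^\sharp$ maps into that form domain; this is supplied a posteriori by your boundedness of $(d_X^E)^n\chi_X^\sharp$, but the logic is cleaner if you instead write $\|\chi_{0s}\chi_X^\sharp\|\le(c_\delta s)^{-n}\|(d_X^E)^n\chi_X^\sharp\|$ using $\supp\chi\subset[c_\delta,\infty)$, as in the paper's remark, rather than the Taylor factorization $\chi(\mu)=\mu^n h_n(\mu)$.
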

\begin{remark}
	This is a semiclassical estimate which physically says that a quantum particle that is essentially localized in phase space inside an energy ball and outside of $X$ (by way of $d_X^E$) is also localized outside of $X$ in position space up to small errors. A technical challenge here is that the operator $d_X$ is unbounded.
\end{remark}

\begin{proof}[Proof of \propref{cl:1}]
	In the remainder of this proof, we use the following notations: For $z\in \Cb$ with $\Im(z)\ne0$,  $d$ as in \eqref{dX}, and $g$ as in \eqref{g},   \[\begin{aligned}
		d\equiv& d_X,\qquad d^E\equiv d^E_X=gd_Xg,\qquad R=(d/s-z)^{-1},\qquad R^E=(d^E/s-z)^{-1},\\ b=&d-d^E,\qquad \chi^E=\chi(d^E/s),\qquad 
		\chi=\chi(d/s).
	\end{aligned}
	\]

	We begin with
	\begin{lemma}\label{lem:Rb-bdd}
		The operator $Rb$ is bounded.
	\end{lemma}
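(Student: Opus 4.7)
The plan is to decompose $b = d - gdg$ algebraically in a way that isolates the unbounded factor of $d$, and then to transfer that $d$ next to $R$ (which commutes with $d$, since both are functions of the multiplication operator $d_X$) where the resolvent identity $R(d/s) = 1 + zR$ converts it into a bounded expression. Explicitly, I would expand $gdg = g^2 d + g[d,g]$ and write
\begin{equation*}
b = (1-g^2)d - g[d,g],\qquad\text{so}\qquad Rb = R(1-g^2)d - Rg[d,g].
\end{equation*}
It then suffices to bound each of the two summands separately.

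The first auxiliary fact I need is that $[d, g(H)]$ extends to a bounded operator on $\cH$. By \remref{remG} we may take $g$ of compact support, so the Helffer--Sj\"ostrand formula gives
\begin{equation*}
[d, g(H)] = \frac{1}{\pi}\int \bar{\partial}\tilde{g}(z)\,(z-H)^{-1}[d,H](z-H)^{-1}\,dL(z),
\end{equation*}
and a direct calculation (using $[d,V]=0$) yields $[d,H] = 2\nabla d\cdot\nabla + \Delta d$. By \eqref{dX2}, $\nabla d$ and $\Delta d$ are bounded multiplication operators, while $(z-H)^{-1}p$ is controlled by standard Schr\"odinger resolvent estimates (since $p$ is $H^{1/2}$-bounded under \ref{H1}); together with the almost-analytic extension bounds $|\bar\partial\tilde g(z)| \le C_N |\Im z|^N$ on $\supp \tilde g$, this makes the integral norm-convergent. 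Consequently $[d,g] \in \cB(\cH)$, and so is $[d, g^2] = g[d,g] + [d,g]g$, which in turn bounds the second summand $Rg[d,g]$ immediately because $R$ and $g$ are bounded.

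For the first summand $R(1-g^2)d$, the key point is that $R$ and $d$ commute (both are multiplications by functions of $x$), so
\begin{equation*}
Rd = dR = s(d/s - z)R + szR = s(1+zR),
\end{equation*}
which is bounded. To push $1-g^2$ past $R$, I would use
\begin{equation*}
[R, g^2] = R\,[g^2, R^{-1}]\,R = s^{-1}R\,[g^2, d]\,R,
\end{equation*}
which is bounded by the previous paragraph. Therefore
\begin{equation*}
R(1-g^2)d = (1-g^2)Rd - [R, g^2]d = (1-g^2)\cdot s(1+zR) - s^{-1}R[g^2,d]\,Rd,
\end{equation*}
and both terms on the right are bounded (using $Rd$ bounded once again in the second).

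The main technical obstacle is the boundedness of $[d, g(H)]$: here $d$ is unbounded and $[d,H]$ contains a factor of $\nabla$, so one must carefully distribute the two resolvents $(z-H)^{-1}$ on either side of $[d,H]$ to absorb the momentum factor in one resolvent while using the other to produce the $|\Im z|^{-1}$ decay required for integrability against $\bar\partial\tilde g$. Everything else is bookkeeping built on the commutation $[R,d]=0$ and the elementary identity $Rd = s(1+zR)$.
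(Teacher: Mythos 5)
Your proof is correct and follows essentially the same route as the paper's: both arguments reduce the statement to the boundedness of $[d,g(H)]$ (obtained via the Helffer--Sj\"ostrand representation together with $[d,H]=\Delta d+2\nabla d\cdot\nabla$ and resolvent bounds that absorb the gradient factor) combined with the elementary identity $Rd=s(1+zR)$. The only difference is bookkeeping: the paper writes $Rb=Rd-Rd^E$ and $Rd^E=Rdg^2+R[g,d]g$, which places the $g$'s to the right of $R$ from the start and thereby avoids your extra commutation of $g^2$ past $R$.
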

	\begin{proof}
		Since $b=d-d^{E}$ and $Rd$ is bounded as the multiplication operator by a bounded function, it suffices to show that $Rd^{E}$ is bounded.  For the latter, we have, by \eqref{gE}, 
		\begin{align}
			Rd^{E}&=Rgdg=Rdg^{2}+R [g,d]g.
		\end{align}
		Since $g$ is bounded and $Rd=s(1+zR)$ so that
		$\|Rd\|\le s(1+\abs{z}\abs{\Im(z)}^{-1})$,
		it remains to show that $[g,d]$ is bounded.  Using the HS representation \eqref{HSj-rep} with $k=0$ and formula \eqref{3.5''}, we have	
		\begin{align}
			[g,d]=& \int d \widetilde{g}(z)\sbr{(z-H)^{-1},d}\notag
			\\=&-{\int d \widetilde{g}(z)(z-H)^{-1}{\ad{1}{d}{H}}(z-H)^{-1}}\notag
			\\
			=&\int d \widetilde{g}(z) (z-H)^{-1}\del{\grad \cdot(\grad d)+ \grad d\cdot \grad}(z-H)^{-1}
			\label{gdcomm}.
		\end{align}
		
		Next we multiply by $i$ and use the operator Cauchy-Schwarz inequality
		\[
		\begin{aligned}
			&i\grad \cdot(\grad d)+\grad d\cdot i\grad
			\\\leq& -\br{E}^{-1/2}\Delta+\br{E}^{1/2}\abs{\nabla d}^2
			\\\leq& \frac{H}{\br{E}^{1/2}}+\|V\|_\infty+1+\br{E}^{1/2}\abs{\nabla d}^2 =: B_{H,E}.
		\end{aligned}
		\]
		By \eqref{dX2}, we have $|\nabla d|\leq C$. This, together with condition \eqref{V-cond} on $V$ and the  HS representation \eqref{HSj-rep} with $k=1$, shows that
	
			\begin{align}
				&\big\| (z-H)^{-1}\del{\grad \cdot(\grad d)+ \grad d\cdot \grad}(z-H)^{-1} \big\| \\
				&\le \big\| B_{H,E}^{\frac12} (\bar z - H )^{-1} \big\|\big\| B_{H,E}^{\frac12} ( z - H )^{-1}\big\|\\
				&\le C\big ( \langle E \rangle^{-\frac12} |z|+ \langle E \rangle^{\frac12} |\mathrm{Im}(z)|^{-2} \big ).
			\end{align}					
			Using the properties of the almost analytic extension $\tilde g$ (in particular the fact that it is compactly supported, see \eqref{tfDef} and \remref{remG}), this shows that the integral in \eqref{gdcomm} is norm convergent, which completes the proof.
						\end{proof}
		
	Now, using the Helffer-Sj\"ostrand representation  \eqref{HSj-rep} and omitting the measure, we write
	\begin{align}\label{3.11}
		\chi^E=\int R^E.
	\end{align}
	Using that the operator $Rb$ is bounded and expanding $R^E=(d^E/s-z)^{-1}=(d/s-z-b/s)^{-1}$ in powers of $R b/s$ up to the order $n-1$, we obtain
	\begin{align}
		R^E&=(d/s-z-b/s)^{-1}=\sum_{k=0}^{n-1}s^{-k}(Rb)^kR+s^{-n}(Rb)^nR^E.
	\end{align}
	Plugging this expansion into \eqref{3.11} yields
	\begin{align}\label{Ak-part-exp}
		\chi^E&=\sum_{k=0}^{n-1} \chi_k+s^{-n}\Rem_1,
	\end{align}
	where
	\begin{align}\label{Ak-part-exp2}
		\chi_k&=\int (Rb/s)^kR\quad\text{and}\quad \Rem_1=\int (Rb)^{n}R^E.
	\end{align}

	Our goal is to move the $R$'s in the first integrand to the right. Using the relations $Rb=bR+[R, b]$ and $[R,b]=-s^{-1}R \ad{}{d}{b}R$, we would like to   obtain an expansion of the form
	\begin{align}\label{Rbk-exp-wrong}
		(Rb)^k R &=\sum_{ l}s^{-i_l} \tilde B_{l}R^{l +1}+s^{-n}\tilde M_{k},
	\end{align}
	where the operators   $ \tilde B_{l}$ and $\tilde M_{k}$ are  polynomials of operators $\ad{k}{d}{b}, k=0, 1, \dots ,$  (and $R$ for $\tilde M_{k}$), and then use  $\int R^{l+1}=(-1)^{l+1}\chi^{(l)}$ (see \eqref{HSj-rep}) and  $\chi^{(l)}\chi_X^\#=0$ for all $ l\geq 0$. 			The problem here is that the operators $\ad{k}{d}{b} $  are not bounded, so  $ \tilde B_{l}$ and $\tilde M_{k}$ are not guaranteed to be bounded operators.	
	Hence, we proceed differently.	
	
	We transform the product $(Rb/s)^k$ as follows. We  use the relation 
	\begin{align}\label{b-d}
		b=&gd\bar{g}+\bar{g}d=d h-\ad{}{d}{\bar g}g,\\
		\label{h}	\text{where }&\ \bar{g}=1-g \text{ and } h:=\bar{g}(1+g),
	\end{align}
	and the definition $R=(d/s-z)^{-1}$  to write 
	\begin{align}\label{Rb-transf}
		Rb/s &=d_sR h+Rc_s,\text{ where } \\
		d_s&:=d/s,\quad \label{Rb-transf2}c:=\Ad_{d}(g)g,\quad c_s=c/s. 
	\end{align}

	Notice that the operators $c_s,\,h$ and $d_sR$ are bounded and  
	\begin{align}\label{R1-expr}
		d_sR=\one +z R.
	\end{align}
	The last two relations imply 
	\begin{align}\label{Rb-transf3}
		Rb/s&=h+Rc_s+zRh.
	\end{align}
	Our goal is to move the $R$'s to the extreme right to obtain the following:
	
	\begin{lemma}\label{lem:(Rb/s)k-exp}
		The operator $(Rb/s)^{k}$ has the following expansion:
		\begin{align}\label{Rbk-exp}
			(R b/s)^{k} =h^k+\sum_{q=0}^k\sum_{l=0}^{n-1}s^{-l} B_{q,l}R^{l+1}p_{q,l}(z)+s^{-n}\sum_{q=0}^kM_{q,n} p_{q,n}(z),
		\end{align}
		where
		\begin{enumerate}
			\item[(a)] 
				$k=1,...,n-1$,

				\item[(b)] the operators $ B_{q,l}$ are  polynomials of bounded operators  $\ad{m}{d}{h}$ and $\Ad_{d}^{m}(c_s)$, with $0\leq m\leq l$, 
				\item[(c)] the operators $M_{q,n}$  are polynomials of bounded operators $R$, $\ad{m}{d}{h}$ and $\Ad_{d}^{m}(c_s)$, with $0\le m\leq n$ and
				\begin{align}\label{degMbdd}
					\deg_R(M_{q,n}):=\text{powers of $R$ in $M_{q,n}$}\in[n+1,n+k].
				\end{align}				
				
				\item[(d)] $p_{q,l}(z)$ are polynomials in $z$ of the degree $\le q$. 
			\end{enumerate}
		\end{lemma}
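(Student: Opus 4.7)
The plan is to argue by induction on $k$, building on the decomposition $Rb/s = h + zRh + Rc_s$ from \eqref{Rb-transf3} together with the iterated commutator expansion
\begin{equation*}
R X \;=\; \sum_{l=0}^{n-1} (-s^{-1})^l \ad{l}{d}{X}\, R^{l+1} \;+\; (-s^{-1})^n R\, \ad{n}{d}{X}\, R^n,
\end{equation*}
which follows by iterating $[R, X] = -s^{-1} R\, \ad{1}{d}{X}\, R$ (itself a consequence of $R = (d/s - z)^{-1}$). For the base case $k=1$, I would apply this expansion directly to the two summands $zRh$ and $Rc_s$ of $(Rb/s)^1 - h$: the former yields contributions $B_{1,l} R^{l+1} p_{1,l}(z)$ with $B_{1,l} = (-1)^l \ad{l}{d}{h}$ and $p_{1,l}(z) = z$, the latter yields $B_{0,l} R^{l+1} p_{0,l}(z)$ with $B_{0,l} = (-1)^l \ad{l}{d}{c_s}$ and $p_{0,l}(z) = 1$, and the two remainders $(-s^{-1})^n z R\, \ad{n}{d}{h}\, R^n$ and $(-s^{-1})^n R\, \ad{n}{d}{c_s}\, R^n$ both have $\deg_R = n+1 \in [n+1, n+1]$, matching (a)--(d) for $k=1$.

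For the inductive step, I would write $(Rb/s)^{k+1} = (Rb/s)^k\,(Rb/s)$, substitute the inductive expansion into the first factor and $h + zRh + Rc_s$ into the second, and distribute. The product $h^k \cdot h = h^{k+1}$ supplies the new $R$-free term. Every other product contains at least one $R$; in each, the task is to shepherd every $R$ to the extreme right by two moves: applying the iterated commutator expansion to push a stray $R$ through a bounded factor $h$ or $\ad{m}{d}{X}$, and concatenating adjacent factors $R^{l+1}\cdot R = R^{l+2}$. Each commutator step costs one factor $s^{-1}$ while replacing $\ad{m}{d}{X}$ by $\ad{m+1}{d}{X}$ and introducing one more $R$, so contributions with cumulative $s^{-1}$-power $\le n-1$ and trailing $R^{l+1}$ with $l \le n-1$ feed into the explicit double sum, while all deeper contributions are absorbed into the $s^{-n}$ remainder. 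Three invariants are tracked through the induction: (i) factors of $z$ are produced only by right-multiplications involving $zRh$, so the cumulative $z$-count matches the index $q$, giving $\deg p_{q,l} \le q$; (ii) each new $B_{q,l}$ is a product of older ones with further $\ad{m}{d}{h}$ and $\ad{m}{d}{c_s}$ ($0 \le m \le l$), preserving the form required by (b); (iii) the remainder $M_{q,n}$ gains at most one additional $R$ per induction step, yielding $\deg_R(M_{q,n}) \in [n+1, n+k]$.

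The main obstacle I anticipate is the simultaneous bookkeeping of the three indices $(q, l, \deg_R)$: each expansion step couples an $s^{-1}$-penalty to an $\mathrm{ad}_d$-increment and possibly a further $R$, and one must verify that no product escapes the stated grading—in particular, that $R^{l+1}\cdot h$ commutations never generate spurious factors of $z$ or push $l$ past $n-1$ without triggering the remainder. Boundedness of $\ad{m}{d}{h}$, $\ad{m}{d}{c_s}$, and of the remainders $M_{q,n}$ (so that the error term is genuinely $O(s^{-n})$ in operator norm) would be established by Helffer-Sj\"ostrand arguments analogous to the proof of Lemma \ref{lem:Rb-bdd}, using the compact support of $\tilde g$ from \remref{remG} together with the derivative bounds $\dist_X^{|\alpha|-1}|\partial^\alpha d_X| \le C_\alpha$ from \eqref{dX2}.
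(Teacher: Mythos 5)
Your proposal is correct and follows essentially the same route as the paper: induction on $k$ using the decomposition $Rb/s=h+zRh+Rc_s$ from \eqref{Rb-transf3}, the iterated commutator expansion \eqref{comm-exp} to push resolvents to the extreme right, and the same three-fold bookkeeping of the $z$-degree $q$, the $s^{-1}$-order $l$, and $\deg_R$ of the remainders. The only cosmetic difference is that you append the new factor $(Rb/s)$ on the right, so the term ending in $h$ still needs commuting through $R^{l+1}$, whereas the paper multiplies on the left, making $h(Rb/s)^k$ already of the required form and leaving only a single leading $R$ to commute in the other two terms.
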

		We call the operators described in (b) as \textit{$l$-operators}.  Note that if $B_l$ is an $l$-operator, then it is also an $(l+m)$-operator for $m\geq 1$.
		\begin{remark}
			The negative powers of $s$ come from the commutator relation
			\begin{align}\label{R1h-comm} 
				[R, B]=-s^{-1}R \Ad_d(B)R,
			\end{align}		
			valid for any bounded operator $B$ and $\Im(z)\ne0$.
		\end{remark}

		\begin{proof}[Proof of Lemma \ref{lem:(Rb/s)k-exp}]
			We prove  \eqref{Rbk-exp} by induction on $k$.  
			
			For the base case $k=1$, we use the commutator expansion
			\begin{align}\label{comm-exp}
				RB=\sum_{r=0}^{p-1}(-1)^rs^{-r}\Ad_{d}^r(B) R^{r+1}+(-1)^ps^{-p}R\Ad_{d}^p(B) R^{p},
			\end{align}
			valid for any bounded operators $B$ and integer $p\ge1$. Applying \eqref{comm-exp} to $B=h$ and $c_s$ (see \eqref{Rb-transf}), we find
			\begin{align}
				Rb/s=&h+Rc_s+zRh \notag
				\\=&h+\sum_{r=0}^{n-1}(-1)^rs^{-r}\Ad_{d}^r(c_s) R^{r+1}+(-1)^ns^{-n}R\Ad_{d}^n(c_s) R^{n}\notag
				\\&+z\del{\sum_{r=0}^{n-1}(-1)^rs^{-r}\Ad_{d}^r(h) R^{r+1}+(-1)^ns^{-n}R\Ad_{d}^n(h) R^{n}}.
			\end{align}
			This is of the form \eqref{Rbk-exp} with
			\begin{align}
				&B_{0,r}:=(-1)^r\Ad_{d}^r(c_s), \quad M_{0,n}:=(-1)^nR\Ad_{d}^n(c_s) R^{n},\\
				&B_{1,r}:=(-1)^r\Ad_{d}^r(h), \quad M_{1,n}:=(-1)^nR\Ad_{d}^n(h) R^{n},
			\end{align}
			where
			\begin{align}
				\deg_R(M_{0,n})=\deg_R(M_{1,n})=n+1
			\end{align}
			satisfies \eqref{degMbdd}.
			
			Now we assume \eqref{Rbk-exp} for a given $k\geq 1$ and prove it for $k\rightarrow k+1$.  We use \eqref{Rb-transf3}  to write 
			\begin{align}\label{Rbk-ABC}
				(Rb/s)^{k+1}&=(zRh+Rc_s+h)^{k+1}\nonumber\\
				&=zRh(Rb/s)^{k}+Rc_s(Rb/s)^{k}+h(Rb/s)^{k}\nonumber\\
				&=:A+B+C.
			\end{align}
			Using the induction hypothesis, we see that the third term on the r.h.s. of \eqref{Rbk-ABC} is already in the desired form (notice that the term $h^{k+1}$ in \eqref{Rbk-exp}  comes from this contribution). The first two terms on the r.h.s. of \eqref{Rbk-ABC} are treated similarly, so we only consider the first term.  
			
			We transform the term $A$ in line \eqref{Rbk-ABC} as
			\begin{align}
				A=&A_1+A_2+A_3,\end{align}
			where \begin{align}A_1:=&zRh^{k+1},\label{Rbk-A}
				\\A_2:=&zRh\sum_{q=0}^k\sum_{l=0}^{n-1}s^{-l} B_{q,l}R^{l+1}p_{q,l}(z),
				\\A_3:=&	s^{-n}zRh\sum_{q=0}^kM_{q,n}p_{q,n}(z).		
			\end{align}
			The term $A_1$ can be handled using expansion \eqref{comm-exp} as
			\begin{align}
				A_1=z\del{\sum_{l=0}^{n-1}(-1)^ls^{-l}\Ad_{d}^l(h^{k+1}) R^{l+1}+(-1)^ns^{-n}R\Ad_{d}^n(h^{k+1}) R^{n}}.
			\end{align}
			By Leibniz's rule, for each $l$, $\Ad_{d}^l(h^{k+1})$ is an $l$-operator as defined in part (b) of Lemma \ref{lem:(Rb/s)k-exp}, and so $A_1$ is of the form \eqref{Rbk-exp} with
			\begin{align}
				&B^{(1)}_{1,l}:=(-1)^ls^{-l}\Ad_{d}^l(h^{k+1})   ,\quad p^{(1)}_{q,l}:=\delta_{1q}z, \\ &M^{(1)}_{1,n}:=(-1)^ns^{-n}R\Ad_{d}^n(h^{k+1}) R^{n}\text{ satisfying } \deg_R(M^{(1)}_{1,n})=n+1.
			\end{align}

			The term $A_3$ can be written as
			\begin{align}
				A_3= \sum_{q=0}^k(RhM_{q,n})(zp_{q,n}(z))=\sum_{q=1}^{k+1}M^{(2)}_{q,n }p^{(2)}_{q,n },
			\end{align}
			where 
			\begin{align}
				M^{(2)}_{q ,n }:=RhM_{q-1,n},\quad p^{(2)}_{q ,n }:=zp_{q-1,n}(z),
			\end{align}
			with notations as in parts (c)-(d) of Lemma \ref{lem:(Rb/s)k-exp}. Since $\deg_R M_{q,n}\le n+k$, we have
			\begin{align}
				\deg _R 	M^{(2)}_{q ,n } \in [n+2,n+k+1],
			\end{align}		
			which satisfies the bound \eqref{degMbdd} with $k\to k+1$.	
			Thus $A_3$ is of the form \eqref{Rbk-exp}.

			To bring the term $A_2$ into the desired form, we commute $R$'s in \eqref{Rbk-A} to the right using expansion \eqref{comm-exp}. For each $q=0,\ldots, k$, we consider the sum
			\begin{align}\label{A3q}
				A_2(q):=\sum_{l=0}^{n-1}s^{-l} zRhB_{q,l}R^{l+1}p_{q,l}(z),
			\end{align}
			so that \begin{align}\label{A3exp}
				A_2=\sum_{q=0}^k A_2(q).
			\end{align}
			Let $B'_{q,l}=hB_{q,l}$.  Using \eqref{comm-exp}, we have, for each $l=0,\ldots, n-1$,  
			\begin{align}\label{Rbk-A3}
				RhB_{q,l}R^{l}&=RB'_{q,l}R^{l}\\
				&=\sum_{r=0}^{n-l-1}(-1)^rs^{-r}\Ad_{d}^{r}(B'_{q,l})R^{l+r+1}+(-1)^{n-l}s^{-(n-l)}R\Ad_{d}^{n-l}(B'_{q,l})R^{n}.\nonumber
			\end{align}
			Using Leibniz rule for commutators and the structure of $B_{q,l}$, we conclude that the operators $\Ad_{d}^{r}(B'_{q,l})$ are polynomials of $\Ad_{d}^{m}(h)$ and $\Ad_{d}^{m}(c_s)$, $m=0,1,...,l+r$, and therefore are $(l+r)$-operators as defined above.  So, setting $B_{q,l+r}''=(-1)^r\Ad_{d}^{r}(B_{q,l}')$, expansion \eqref{Rbk-A3} becomes
			\begin{align}\label{Rbk-A2}
				RhB_{q,l}R^{l}	=
				\sum_{r=0}^{n-l-1}s^{-r}B''_{q,l+r}R^{l+r+1}+s^{-(n-l)}RB''_{q,n}R^{n}.
			\end{align}
			Substituting \eqref{Rbk-A2} into \eqref{A3q} and setting $p_{q+1,l}'(z):=zp_{q,l}(z)$ for $l=0,\ldots,n-1$, we obtain
			\begin{align}
				A_2(q) =&\sum_{l=0}^{n-1}\sum_{r=0}^{n-l-1}s^{-(l+r)}	 B''_{q,l+r}R^{l+r+1}p'_{q+1,l}(z)\label{Rbk-A4}
				\\&+s^{-n}\sum_{l=0}^{n-1}RB_{q,n}''R^{n}p'_{q+1,l}(z).\notag
			\end{align}
			Changing the summation index $(l+r,l)\to (l',r')$, the r.h.s. in line \eqref{Rbk-A4} can be written as
			\begin{align}\label{coi}
				\sum_{l=0}^{n-1}\sum_{r=0}^{n-l-1}s^{-(l+r)}	 B''_{q,l+r}R^{l+r+1}p'_{q+1,l}(z)=\sum_{l'=0}^{n-1}\sum_{r'=0}^{l'}s^{-l'}	 B''_{q,l'}R^{l'+1}p'_{q+1,r'}(z).
			\end{align}
			Setting $p''_{q+1,n}:=\sum_{l=0}^{n-1}p'_{q+1,l}(z)$ in \eqref{Rbk-A4} and $p''_{q+1,l'}:=\sum_{r'=0}^{l'}p'_{q+1,r'}$  for each $l'=0,\ldots,n-1$ in \eqref{coi}, we conclude that
			\begin{align}
				A_2(q) =&\sum_{l=0}^{n-1}s^{-l}	 B''_{q,l}R^{l+1}p''_{q+1,l}(z)\notag
				\\&+s^{-n}RB_{q,n}''R^{n}p''_{q+1,n}(z).\label{536}	
			\end{align}
			Plugging \eqref{536} into \eqref{A3exp} yields
			\begin{align}
				A_2=&\sum_{q=0}^k\sum_{l=0}^{n-1}s^{-l}	 B_{q,l}''R^{l+1}p''_{q+1,l}(z)\notag
				\\&+s^{-n}\sum_{q=0}^kRB_{q,n}''R^{n}p''_{q+1,n}(z)
			\end{align}			
			Shifting the dummy index $q\to q+1$ and setting \begin{align}
				&B^{(3)}_{q,l}:=B''_{q-1,l+1},\quad p^{(3)}_{q,n}(z):=p''_{q+1,n}(z),  \\&M^{(3)}_{q,n}:=RB''_{q-1,n}R^n\text{ with }\deg_R(M^{(3)}_{q,n})=n+1,
			\end{align}	
			we conclude that $A_2$ is of the form \eqref{Rbk-exp}.

			This completes the proof of \lemref{lem:(Rb/s)k-exp}. 
		\end{proof}
		
		\begin{corollary}\label{corChik}
			For any $\chi\in C^\infty(\Rb)$ with compactly supported derivative and $\chi_k=\int (Rb/s)^kR\,d\tilde\chi(z)$,
			\begin{align}\label{chik-chiX}
				\chi_k =&h^k\chi(d_s)+\sum_{q=0}^k\sum_{l=0}^{n-1}s^{-l} B_{q,l}(\chi p_{q,l})^{(l+1)}(d_s)+s^{-n}\Rem_{2,k},
			\end{align}	
			where $B_{q,l}$ are as in Lemma \ref{lem:(Rb/s)k-exp}  and $\Rem_{2,k}=O(1)$. 	
		\end{corollary}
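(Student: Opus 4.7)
The plan is to substitute the expansion from Lemma \ref{lem:(Rb/s)k-exp} into the Helffer-Sj\"ostrand integral defining $\chi_k$ and convert each resulting contour integral into an explicit function of $d_s$. Concretely, I would right-multiply the identity \eqref{Rbk-exp} by $R$ and integrate against $d\tilde\chi(z)$, which splits $\chi_k$ into three pieces:
\begin{equation*}
\chi_k = h^k\!\!\int\! R\,d\tilde\chi + \sum_{q=0}^{k}\sum_{l=0}^{n-1} s^{-l} B_{q,l}\!\!\int\! R^{l+2} p_{q,l}(z)\,d\tilde\chi(z) + s^{-n}\sum_{q=0}^{k}\!\!\int\! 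M_{q,n}R\,p_{q,n}(z)\,d\tilde\chi(z).
\end{equation*}
The first integral equals $\chi(d_s)$ by the Helffer-Sj\"ostrand representation \eqref{HSj-rep}, contributing the leading $h^k\chi(d_s)$ term.

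For the middle sum I would use the identity $R(z)^{l+2}=\frac{(-1)^{l+1}}{(l+1)!}\partial_z^{l+1}R(z)$ together with the fact that each $p_{q,l}(z)$ is a polynomial, hence entire, so that $p_{q,l}(z)\tilde\chi(z)$ provides an almost analytic extension of $\chi p_{q,l}$ in the sense $\bar{\partial}(p_{q,l}\tilde\chi)=p_{q,l}\bar{\partial}\tilde\chi$. Integrating by parts $l+1$ times in $z$ and reapplying \eqref{HSj-rep} to the smooth function $\chi p_{q,l}$ yields
\begin{equation*}
\int R^{l+2}\,p_{q,l}(z)\,d\tilde\chi(z) = \tfrac{1}{(l+1)!}(\chi p_{q,l})^{(l+1)}(d_s) + O(1),
\end{equation*}
where the $O(1)$ error arises from the non-uniqueness of almost analytic extensions. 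The combinatorial factor $\tfrac{1}{(l+1)!}$ is absorbed into a relabelled $B_{q,l}$ without changing its structural description in Lemma \ref{lem:(Rb/s)k-exp}(b), and the $O(1)$ error is placed into $s^{-n}\Rem_{2,k}$. For the genuine remainder I would use Lemma \ref{lem:(Rb/s)k-exp}(c): each $M_{q,n}$ is a polynomial in $R$ of degree between $n+1$ and $n+k$ with bounded operator coefficients, so $\|M_{q,n}R\|\le C|\Im z|^{-(n+k+1)}$ on $\{\Im z\ne 0\}$; since by construction $|\bar{\partial}\tilde\chi(z)|\le C_N|\Im z|^N$ for any $N$ (see Appendix \ref{secRemEst}), fixing $N>n+k+1$ makes these integrals norm-convergent with a bound independent of $s\ge 1$, giving $\Rem_{2,k}=O(1)$.

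The main obstacle is the careful bookkeeping of the almost-analytic errors: $p_{q,l}\tilde\chi$ and the canonical almost analytic extension of $\chi p_{q,l}$ agree only on $\R$, and their $\bar{\partial}$-derivatives differ by terms vanishing to high order in $|\Im z|$. The cleanest organisation is to fix, once and for all, an order of almost analyticity larger than the maximal power of $R$ that appears in any term of the expansion; the resulting error integrals then share the same $R$-power-counting structure as the genuine remainder and can be consolidated into $\Rem_{2,k}$ without further analysis. No additional ideas beyond the proof of Lemma \ref{lem:(Rb/s)k-exp} and standard Helffer-Sj\"ostrand calculus are required.
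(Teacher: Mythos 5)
Your route is the same as the paper's: substitute the expansion \eqref{Rbk-exp} of Lemma~\ref{lem:(Rb/s)k-exp} into $\chi_k=\int (Rb/s)^kR\,d\tilde\chi(z)$, evaluate the first two groups of terms by the Helffer--Sj\"ostrand derivative formula \eqref{HSj-rep} (with the polynomial $p_{q,l}(z)$ folded into the measure as in \eqref{measForm}), and control the remainder integral via the degree bound \eqref{degMbdd} together with \corref{corRempEst}.

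One step, however, does not work as written. You assert $\int R^{l+2}p_{q,l}(z)\,d\tilde\chi(z)=\tfrac{1}{(l+1)!}(\chi p_{q,l})^{(l+1)}(d_s)+O(1)$, attributing the $O(1)$ discrepancy to the non-uniqueness of almost analytic extensions, and then ``place'' that error into $s^{-n}\Rem_{2,k}$. This is not legitimate: the term in question enters \eqref{chik-chiX} with prefactor $s^{-l}$, where $l$ can be as small as $0$, and an operator that is merely bounded uniformly in $s$ cannot be rewritten as $s^{-n}$ times an $O(1)$ operator. If such an error genuinely existed, the expansion would fail. Fortunately it does not: by the Cauchy--Pompeiu formula, $\int \di_{\bar z}F(z)\,(z-A)^{-1}\,dx\,dy$ depends only on $F|_{\Rb}$ when $A$ is self-adjoint, so any two almost analytic extensions of $\chi p_{q,l}$ of sufficiently high order yield the same operator, and the identity $\int R^{l+2}p_{q,l}(z)\,d\tilde\chi(z)=c_l\,(\chi p_{q,l})^{(l+1)}(d_s)$ holds \emph{exactly} (with an explicit numerical constant $c_l$ absorbed into $B_{q,l}$); this is precisely how the paper combines \eqref{measForm} with \eqref{HSj-rep}. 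With that correction --- and with the order $\nu$ of almost analyticity fixed at least $2n+1$ so that \corref{corRempEst} applies with $(p,l)=(n+1,n),\dots,(2n+1,n)$, as you essentially propose --- your argument coincides with the paper's proof.
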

		\begin{proof}
			We have by the Heffler-Sj\"orstrand representation \eqref{HSj-rep} that   $\int R^{l+1}p_{l}(z)=(-1)^{l+1}(\chi p_{l})^{(l)}(d_s)$ (see \eqref{HSj-rep}).
			
			This, together with the definition $\chi_k =\int (Rb/s)^kR$ and expansion \eqref{Rbk-exp}, implies 
			\begin{align}\label{chik-chiX'}
				\chi_k =&h^k\chi(d_s)+\sum_{q=0}^k\sum_{l=0}^{n-1}s^{-l} B_{q,l}(\chi p_{q,l})^{(l+1)}(d_s)\notag\\&+s^{-n}\sum_{q=0}^k\int M_{q,n}Rp_{q,n}(z)\,d\tilde\chi(z).
			\end{align}		
			Thus it remains to show the integral on line \eqref{chik-chiX'} is $O(1)$.
			
			Using the estimate $\norm{R}\le \abs{\Im(z)}^{-1}$  and the degree bound \eqref{degMbdd} and that $k\le n-1$, 
			we    have \begin{align}\label{MnEst}
				\norm{ M_{q,n}}\le C\sum _{j=n}^{2n}\abs{\Im(z)}^{-(j+1)}\text{for  all $q$}.
			\end{align} 
			Since $p_{q,n}$ has degree at most $n$ and $\tilde \chi$ has compactly supported derivatives, we find  by expression \eqref{MnEst} and  \corref{corRempEst} with $(p,l)=(n+1,n),\ldots, (2n+1,n)$ 					that

			\begin{align}\label{MRemest}	
				\norm{\int M_{q,n}R p_{q,n} \,d\widetilde {\chi}(z)}\le C\int  \sum_{j=n}^{2n}\abs{\Im(z)}^{-(j+2)}\,\abs{p_{q,n}(z)} \,d\widetilde {\chi}(z)\le C.
			\end{align}
			Summing \eqref{MRemest} over $q$ shows that the integral on line \eqref{chik-chiX'} is $O(1)$. This completes the proof of \corref{corChik}.
		\end{proof}

		Since  $\chi^{(l)}(d_s)\chi_X^\#=0$ for all $ l\geq 0$, expansion \eqref{chik-chiX} gives  				
		\begin{align}\label{chikEst}
			\chi_k\chi_X^\#&=s^{-n}\Rem_{2,k}\chi_X^\sharp=O(s^{-n}) .
		\end{align}		
		Next, we deal with the $\Rem_1$ term in \eqref{Ak-part-exp}. We use the splitting
		\begin{align}\label{Rb-transf'}
			Rb  &=Rc+R_2 h,\quad   c:=\Ad_{d}(g)g, \quad R_2:=dR,		
		\end{align}
		which follows from \eqref{Rb-transf}. We prove:
		\begin{lemma}\label{lemRbnExp}
			For $k\ge1$,	the operator $(Rb)^{k}$ has the following expansion:
			\begin{align}\label{RbnExp}
				(R b)^{k} =&\sum_{l=0}^k R_2^l N_{k-l},
			\end{align}
			where the operators $N_j$  are polynomials of bounded operators $R$, $\ad{m}{d}{h}$ and $\Ad_{d}^{m}(c)$  with $0\le m\leq k-1$ and 
			\begin{align}\label{degNbdd}
				\deg_R(N_j):=\text{powers of $R$ in $N_j$}\in[j ,j+2k].
			\end{align}	
		\end{lemma}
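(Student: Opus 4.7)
The plan is to prove \lemref{lemRbnExp} by induction on $k$, mimicking the inductive structure of \lemref{lem:(Rb/s)k-exp} but now tracking the explicit factors of $R_2 = dR$ rather than $s^{-l}$ powers. The base case is immediate from the defining decomposition \eqref{Rb-transf'}: $(Rb)^1 = Rc + R_2 h$ matches \eqref{RbnExp} with $N_0 = h$ (of $R$-degree $0 \in [0,2]$) and $N_1 = Rc$ (of $R$-degree $1 \in [1,3]$), and both operators are built from $c = \Ad_d^0(c)$ and $h = \Ad_d^0(h)$.

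For the inductive step, assume the expansion holds at level $k$ and write
\begin{align*}
(Rb)^{k+1} = \sum_{l=0}^{k} R_2^l N_{k-l}(Rc) + \sum_{l=0}^{k} R_2^l (N_{k-l} R_2) h.
\end{align*}
The first sum is handled by absorbing $Rc$ into $N_{k-l}$: set $\tilde N_{k+1-l} := N_{k-l} Rc$, which remains a polynomial in $R$, $\ad{m}{d}{h}$, $\Ad_d^m(c)$, $m \le k-1$, and has $R$-degree in $[k-l+1, \,3k-l+1] \subset [k+1-l,\,(k+1-l)+2(k+1)]$. The second sum requires moving $R_2$ to the left through $N_{k-l}$ using the identity
\begin{align*}
[B, R_2] = [B,d]R + d[B,R] = -\Ad_d(B) R + s^{-1} R_2 \Ad_d(B) R,
\end{align*}
which follows from $[R,d]=0$ (since $R$ is a function of $d$) and the standard commutator relation \eqref{R1h-comm}. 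Applying Leibniz's rule factor by factor through $N_{k-l}$, one obtains a decomposition $N_{k-l} R_2 = R_2 M + M'$, where $M, M'$ are polynomials in $R$, $\ad{m}{d}{h}$, $\Ad_d^m(c)$ with $m \le k$ (the $\Ad_d$-index rises by at most one because each commutator with $d$ raises it by one). Multiplication by $h$ preserves the form. Re-indexing $R_2^{l+1} M h$ into level $l' = l+1$ and $R_2^l M' h$ into level $l' = l$, and combining with the contributions from the first sum, yields the claimed expansion at level $k+1$.

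It remains to verify the $R$-degree bound $\deg_R(N'_j) \in [j,\,j + 2(k+1)]$. From the first sum, multiplication by $Rc$ adds exactly one $R$. From the second sum, each application of $[B, R_2]$ introduces at most one extra $R$ (via the $\Ad_d(B) R$ factors), and no $R$'s from the $R_2$ itself when it is already commuted to the left; iterated Leibniz across the polynomial $N_{k-l}$ contributes a bounded number of such factors. Combining with the inductive bound $\deg_R(N_{k-l}) \in [k-l,\,k-l+2k]$, one finds the range at level $k+1$ fits into $[j,\,j+2k+2]$, as required.

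The main obstacle is the degree bookkeeping in the commutator step: one must check that pushing $R_2$ past $N_{k-l}$ raises the $R$-degree by at most $2$ per step (so that the increment of $2k \mapsto 2k+2$ in the upper range suffices), and simultaneously that the $\Ad_d^m$-index stays within $[0,k]$ when lifted from $[0,k-1]$. A useful structural fact that streamlines this is the identity $R_2 = s(1 + zR)$ obtained from $(d/s - z)R = 1$, which shows that $R_2$ is itself polynomial of degree one in $R$ with scalar coefficients; however, keeping $R_2 = dR$ unexpanded, as in the statement, is precisely what will allow the degree-counting argument to propagate cleanly into the subsequent estimate of the remainder $\Rem_1 = \int (Rb)^n R^E$ in \eqref{Ak-part-exp2}.
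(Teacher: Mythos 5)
Your argument is correct and follows the same basic route as the paper's proof: induction on $k$ starting from the splitting $Rb=Rc+R_2h$ of \eqref{Rb-transf'}, using $\Ad_{d}(R)=0$ and the commutator identity \eqref{R1h-comm} to push all factors of $R_2$ to the far left while tracking the $\Ad_d$-order and the $R$-degree. The one genuine difference is the side on which the new factor is appended: you expand $(Rb)^{k+1}=(Rb)^k(Rb)$, so only a single new $R_2$ must be commuted leftward through the polynomial $N_{k-l}$, which is done in one stroke via $NR_2=R_2\bigl(N+s^{-1}\Ad_d(N)R\bigr)-\Ad_d(N)R$, with $\Ad_d(N)$ then evaluated by Leibniz. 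The paper instead writes $(Rb)^{k+1}=(Rc+R_2h)(Rb)^k$ and must move the whole block $R_2^{l}$ from the induction hypothesis past $c$ and $h$, which requires the iterated identities \eqref{R2comml}--\eqref{BAlExp} and the attendant binomial bookkeeping; your variant is leaner. One caution about your phrase ``factor by factor'': the commutation with $R_2$ must be performed globally on the monomial $N$ (as your displayed identity for $[B,R_2]$ does), with only the resulting $\Ad_d(N)$ expanded by Leibniz. If one instead literally iterates the $R_2$-commutation past each factor of $N$ in turn, the correction terms of the form $s^{-1}R_2\Ad_d(P_i)R$ spawn further commutations, and the $R$-degree can grow by the number of factors of $N$ rather than by one, which would overrun the window $[j,j+2(k+1)]$. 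With the global commutation your degree accounting is sound, and the harmless scalar coefficients $s^{-1}$ appearing in $M$ also occur in the paper's own operators $N_j$ (cf.\ \eqref{1563}, \eqref{1568}).
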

		\begin{proof}
			We prove this by induction on $k=1,2,\ldots$. For the base case $k=1$, 	we use expansion \eqref{Rb-transf'}, which is of the form \eqref{RbnExp} with $N_1=Rc$ and  $N_0=h$, satisfying degree bound \eqref{degNbdd}.
			
			Suppose now \eqref{RbnExp} holds with some $k\ge1$, and we prove it for $k\to k+1$.
			Using \eqref{Rb-transf'} and the induction hypothesis, we write
			\begin{align}
				(Rb)^{k+1}=&(Rc+R_2h)(Rb)^k\notag 
				\\=& \sum_{l=0}^k R c R_2^l N_{k-l} + \sum_{l=0}^k R_2h  R_2^l N_{k-l}\notag
				\\=:&A+B.\label{Rbn1Exp}
			\end{align}
			The goal now is to commute the bounded operator $R_2$ successively to the left. Using the relation
			\begin{align}\label{R2form}
				R_2=s(1+zR)
			\end{align}
			and identity \eqref{R1h-comm}, we find
			\begin{align}\label{R2Bcomm}
				\Ad_{R_2}(D)=(s^{-1}R_2-1)\Ad_d(D)R , 
			\end{align}
			for any operator $D$ allowed by the domain consideration. Iterating identity \eqref{R2Bcomm} for $p\ge1$ times shows that there exist  absolute constants $c_0,\ldots, c_p$ s.t.
			\begin{align}\label{R2comml}
				\ad{p}{R_2}{D}=\sum_{q=0}^p c_qs^{-q}R_2^q \ad{p}{d}{D}R^p.
			\end{align}
			Moreover, for any bounded operators $D,\,E$ and integers $l\ge1$, we have
			\begin{align}
				\label{BAlExp}
				DE^l= E^lD+ \sum_{p=1}^l (-1)^p\binom{l}{p}E^{l-p}\ad{p}{E}{D}.
			\end{align}
			Applying \eqref{R2comml}--\eqref{BAlExp} to term $A$ in \eqref{Rbn1Exp} with $D=c$ and $E=R_2$,  and using that $[R_2,R]=0$, we find
			\begin{align}
				A\equiv &RcN_k+\sum_{l=1}^k R c R_2^l N_{k-l}\notag
				\\=&R c  N_{k } + \sum_{l=1}^kR_2^l R c  N_{k-l}  \notag
				\\&+\sum_{l=1}^k\sum_{p=1}^l (-1)^p\binom{l}{p}R_2^{l-p}R\ \ad{p}{R_2}{c}N_{k-l}\notag 
				\\=&  R c  N_{k } + \sum_{l=1}^kR_2^l R c  N_{k-l}   \notag 
				\\&+\sum_{l=1}^k\sum_{p=1}^l\sum_{q=0}^p (-1)^pc_qs^{-q}\binom{l}{p}R_2^{l-p+q} R\  \ad{p}{d}{c}R^pN_{k-l}.  \label{1552}
			\end{align}
			Regrouping \eqref{1552} according to the power in $R_2$ shows that
			\begin{align}
				A=&\sum_{l=0}^{k } R_2^l N^{(1)}_{k+1-l},\label{1561}
				\\N^{(1)}_{k+1}:=& RcN_k,\label{1562}
				\\N^{(1)}_{k+1-l}  =&  R c  N_{k-l}\label{1563}\\&+\sum_{l'= l}^k\sum_{\substack{p=1,\ldots,l'\\q=0,\ldots,p,\\
						q-p=l-l'}}
				(-1)^pc_qs^{-q}\binom{l'}{p}  R\  \ad{p}{d}{c}R^pN_{k-l'} , \quad l=1,\ldots, k .\notag
			\end{align}
			Since $\deg_R N_j\in [j, j+2k]$, we derive from expressions \eqref{1562}--\eqref{1563} that
			\begin{align}
				\deg_R({N^{(1)}_j} )\in  [j+1,j+2k+1], \quad j = 0,\ldots k.
			\end{align}
			
			Similarly, applying \eqref{R2comml}--\eqref{BAlExp} to term $B$ in \eqref{Rbn1Exp} with $D=h$ and $E=R_2$ yields
			\begin{align}
				B\equiv & R_2h N_{k}+ \sum_{l=1}^k R_2h  R_2^l N_{k-l}\notag
				\\=&R_2h  N_{k } + \sum_{l=1}^kR_2^{l+1} h  N_{k-l}  \notag
				\\&+\sum_{l=1}^k\sum_{p=1}^l (-1)^p\binom{l}{p}R_2^{l-p+1}\ \ad{p}{R_2}{h}N_{k-l}\notag 
				\\=&  R_2 h  N_{k } + \sum_{l=1}^kR_2^{l+1}h  N_{k-l}   \notag 
				\\&+\sum_{l=1}^k\sum_{p=1}^l\sum_{q=0}^p (-1)^pc_qs^{-q}\binom{l}{p}R_2^{l-p+q+1} \  \ad{p}{d}{h}R^pN_{k-l}.  \label{1553}
			\end{align} 
			Regrouping \eqref{1553} according to the power in $R_2$ shows that
			\begin{align}
				B=&\sum_{l=1}^{k+1 } R_2^l N^{(2)}_{k+1-l},\label{1566}
				\\N^{(2)}_{k}:=& hN_k,\label{1572}
				\\N^{(2)}_{k+1-l}  =& h  N_{k+1-l}\label{1568}\\&+\sum_{l'= l}^{k+1}\sum_{\substack{p=1,\ldots,l'\\q=0,\ldots,p,\\
						q-p=l-l'}}
				(-1)^pc_qs^{-q}\binom{l'-1}{p}   \  \ad{p}{d}{h}R^pN_{k+1-l'} ,  \notag
			\end{align}
			with $l=2,\ldots, k+1$ and		\begin{align}
				\deg_R({N^{(2)}_j} )\in  [j,j+2k+1], \quad j = 1,\ldots k+1.
			\end{align}
			Combining expansions \eqref{1561}, \eqref{1566} in line \eqref{Rbn1Exp} yields
			\begin{align}
				(Rb)^{k+1}= N^{(1)}_{k+1}+\sum_{l=1}^kR_2^l(N^{(1)}_{k+1-l}+N^{(2)}_{k+1-l}) + R_2^{k+1}N^{(2)} _{0},
			\end{align}which is of the form \eqref{RbnExp}  with $k\to k+1$. 
			This completes the induction and the proof of \lemref{lemRbnExp}. 
		\end{proof}
		
		Next, we have the following lemma
		\begin{lemma}\label{lemPhiRem1}
			Let $\Rem_1$ be as in \eqref{Ak-part-exp2}.   	If $K$ is any bounded operator with $\mathrm{ran} \, d \subset \ker K$ then 
			\begin{align}\label{chiRem1Est}
				K\Rem_1=O(\norm{K}).
			\end{align} 
		\end{lemma}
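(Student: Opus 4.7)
The strategy is to exploit the expansion from \lemref{lemRbnExp} and the range condition on $K$ to kill all but one term in $K\Rem_1$.

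\smallskip

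\emph{Step 1 (Expansion).} By \lemref{lemRbnExp} applied with $k=n$,
\begin{align*}
(Rb)^n = N_n + \sum_{l=1}^n R_2^l\, N_{n-l},
\end{align*}
where $R_2 = dR$ and each $N_j$ is a polynomial in bounded operators of the form $R$, $\ad{m}{d}{h}$, and $\Ad_d^m(c)$ with $0 \le m \le n-1$, and $\deg_R(N_n) \in [n, 3n]$.

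\smallskip

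\emph{Step 2 (Absorption via range condition).} The key observation is that for each $l \ge 1$, writing $R_2^l = dR \cdot R_2^{l-1}$, one has $R_2^l v = d \cdot (R R_2^{l-1} v) \in \mathrm{ran}(d)$ for every $v \in \cH$, since $R : \cH \to \cD(d)$ (as shown in the proof of \lemref{lem:Rb-bdd}). Hence, under the assumption $\mathrm{ran}(d)\subset \ker K$,
\begin{align*}
K R_2^l = 0 \quad \text{for all } l \ge 1.
\end{align*}
This is the only place the range condition enters. Note that although algebraically $R_2 = s(\one + zR)$ contains no explicit $d$, the identity $KR_2=0$ still holds as an operator identity and is equivalent to $K(\one+zR)=0$.

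\smallskip

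\emph{Step 3 (Residual term).} From Steps 1--2, $K(Rb)^n = K N_n$, so
\begin{align*}
K\Rem_1 = K \int N_n R^E \, d\tilde\chi(z).
\end{align*}
Using $\|R\|, \|R^E\| \le |\Im z|^{-1}$, the degree bound $\deg_R(N_n) \in [n,3n]$, and the boundedness of $\ad{m}{d}{h}$ and $\Ad_d^m(c)$ for $0 \le m \le n-1$ (which follow from Helffer--Sj\"ostrand calculus and Assumption \ref{H1}, as in \lemref{lem:Rb-bdd} and \secref{sec:mult-comm-est}), we obtain
\begin{align*}
\|N_n R^E\| \le C(n,E) \sum_{j=n+1}^{3n+1} |\Im z|^{-j}.
\end{align*}

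\smallskip

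\emph{Step 4 (Integral bound).} Invoking \corref{corRempEst} (equivalently, the rapid decay of $\bar\partial \tilde\chi$ near the real axis provided by the almost analytic extension, see \appref{secRemEst}), the integral $\int |\Im z|^{-j} \, d\tilde\chi(z)$ converges for each $j \le 3n+1$. Combining with Step 3, $\|\int N_n R^E \, d\tilde\chi(z)\| \le C(n,E,\chi)$, yielding
\begin{align*}
\|K \Rem_1\| \le C(n,E,\chi)\, \|K\|,
\end{align*}
which is the desired estimate \eqref{chiRem1Est}.

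\smallskip

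The main technical obstacle is Step 3: verifying that the iterated commutators $\ad{m}{d}{h}$ and $\Ad_d^m(c)$ are bounded with norms controlled by constants depending only on $n$ and $E$. Since $d = d_X$ is an unbounded multiplication operator and $h$, $c$ are built from the bounded function $g(H)$ of $H$, this requires the same Helffer--Sj\"ostrand commutator analysis as in Step 1 of the proof of \lemref{lem:Rb-bdd} (see also \propref{prop:mult-comm-estH}), together with the derivative decay \eqref{dX2} of $d_X$ and Assumption \ref{H1} on $V$.
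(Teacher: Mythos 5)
Your proposal is correct and follows essentially the same route as the paper's proof: expand $(Rb)^n=\sum_{l=0}^n R_2^l N_{n-l}$ via \lemref{lemRbnExp}, kill every term with $l\ge1$ using $KR_2=(Kd)R=0$, and bound the surviving $\int K N_n R^E$ by the degree bound \eqref{degMbdd}/\eqref{degNbdd} together with the remainder estimate \eqref{rempEst}. The only cosmetic differences are your explicit remark reconciling $KR_2=0$ with the identity $R_2=s(\one+zR)$ and your slightly more detailed justification that $R_2^l v\in\mathrm{ran}(d)$, both of which are consistent with the paper's one-line argument.
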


		\begin{proof} 
			We use expansion \eqref{RbnExp}. Since $\mathrm{ran} \, d \subset \ker K$,  we have  $K R_2=(K d) R=0$  by definition  \eqref{Rb-transf'}. Thus only the leading term in \eqref{RbnExp} survives left multiplication by $K$, yielding
			\begin{align}\label{pfchiRem1Est}
				K\Rem_1=&\int  K (Rb)^{n}R^E
				=\int K N_n R^E. 
			\end{align}
			By the definition of $N_n$ (see \lemref{lemRbnExp}),  we have 
			\begin{align}
				\norm{ N_n}\le C\sum _{j=n}^{3n}\abs{\Im(z)}^{-j}.\label{NnEst}
			\end{align} Thus, by \eqref{pfchiRem1Est},
			\begin{align}
				\norm{K\Rem_1}\le \norm{K}\sum _{j=n}^{3n}\int \abs{\Im(z)}^{-(j+1)}.
			\end{align}
			This, together with estimate \eqref{rempEst} with $(p,l)=(n,0),\ldots, (3n,0)$ (recall $n\ge1$ to begin with), implies the desired result, \eqref{chiRem1Est}.
		\end{proof}

		Applying \eqref{chiRem1Est} with $K= \chi_X^\sharp$, whose kernel contains $\mathrm{ran} \, d$ due to \eqref{dX}, we obtain
		\begin{align}\label {chiRem1Est'}
			\chi_X^\sharp\Rem_1=O(1).
		\end{align}
		Finally, plugging \eqref{chikEst} and  \eqref{chiRem1Est'} back to expansion \eqref{Ak-part-exp} yields the desired estimate \eqref{claim1'}.
		This completes the proof of \eqref{claim1'}. 
	\end{proof}

		\begin{remark}
			We mention the following alternative proof of Proposition \ref{cl:1}. Recalling that $\chi_{0s}=\chi(s^{-1}d_X^E)$ with $\chi$ supported on $[c_\delta,\infty)$ for some positive $c_\delta$, we write
			\begin{align*}
				\big\| \chi_{0s}\chi_X^\sharp\big\|=\big\|\chi_{0s}\big(d_X^E\big)^{-n}\big(d_X^E\big)^n\chi_X^\sharp\big\|\le(c_\delta s)^{-n}\big\|\big(d_X^E\big)^n\chi_X^\sharp\big\|.
			\end{align*}
			Now, with the convention $\prod_{i=2}^{n} A_i = A_2\dots A_{n}$, we have
			\begin{align*}
				\big(d_X^E\big)^n&=g(H)d_X\Big(\prod_{i=2}^{n} g^2(H)d_X\Big)g(H)\\
				&=g(H)d_X\langle x\rangle^{-1}\langle x\rangle\Big(\prod_{i=2}^{n}g^2(H)\langle x\rangle^{-i+1} d_X \langle x\rangle^{-1}\langle x\rangle^{i}\Big)g(H)\langle x\rangle^{-n}\langle x\rangle^n.
			\end{align*}
			A standard induction argument shows that $\langle x\rangle^ig^2(H)\langle x\rangle^{-i}$ is a bounded operator for any positive integer $i$ (since $H$ is the Schr\"odinger operator $H=-\Delta+V$), and likewise with $g$ instead of $g^2$. Since in addition $d_X \langle x\rangle^{-1}$ is bounded, we deduce that
			\begin{align*}
				\big\|\big(d_X^E\big)^n\chi_X^\sharp\big\|\le C_n\big\|\big \langle x\rangle^n\chi_X^\sharp\big\| \le C'_n,
			\end{align*}
			since $X$ is bounded. This establishes Proposition \ref{cl:1}. (Note that if $X$ is unbounded, the same holds, replacing $\langle x\rangle$ by $\langle d_X\rangle$ in the argument above.)
			
			The proof we gave {in \secref{secPfCl1}} has the advantage of being more robust. Moreover the arguments we used are also crucial in our proof of \eqref{claim2} given in the next section.
		\end{remark}
	
	\subsection{Proof of claim \eqref{claim2}}	\label{adap}

	Recall $\chi
	$, 
	$\tilde{g}$, and $\tilde \chi$ are smooth cutoff functions  such that $\supp(\tilde{g})\subset \{g= 1\}$ and  $\supp(\tilde{\chi}')\subset (\delta,+\infty)=\{\chi=1\}$ (see  Figs.~\ref{fig:tildechi}--\ref{fig:tildeg}).					Let $\bar{g}=1-g$ and $\bar{\chi}=1-\chi$.  It follows that 
	\begin{align}
		&\bar{g}(\mu)\tilde{g}(\mu)=0,\label{tildeg-cond}\\&\bar{\chi}(\mu)\tilde{\chi}(\mu)=0.\label{tildechi-cond}
	\end{align}
	
	In the remainder of this section, we use the following notations: For $s,\,v,\,t$ as in \eqref{chitsdef} and $z\in\Cb,\,\Im(z)\ne0$,
	\[
	\begin{aligned}
		d_t\equiv& d_X-vt,\qquad d^E_t\equiv d^E_X-vt=gd_Xg-vt,\\
		R\equiv& (d_t/s-z)^{-1},\qquad R^E\equiv(d^E_t/s-z)^{-1},
	\end{aligned}\]
	and 
	\begin{align}\label{xiDef}
		&\xi(\mu):=\sqrt{\chi(\mu)},\quad \bar{\xi}(\mu) =1-\xi(\mu), 
		\\	&\phi=\phi(d_t/s),\quad\phi^E=\phi(d^E_t/s) \text{ for }\phi\in \cX,
		\\&g=g(H),\quad \tilde g=\tilde g(H) \text{ for $g,\,\tilde g$ from \eqref{tildeg-cond}}.
	\end{align}
	Using these notations, we reproduce Claim \ref{claim2} as follows:
	\begin{proposition}\label{cl:2}	For every $\chi\in\cX$ and  $\tilde g,\,\tilde \chi$ as in \eqref{tildeg-cond}--\eqref{tildechi-cond},	\begin{align}\label{s2}
			\chi^E &\geq \tilde{g}\tilde{\chi} \tilde{g}+O(s^{-n}).
		\end{align}
	\end{proposition}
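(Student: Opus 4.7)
The plan is to deduce the claim from two simpler estimates combined via functional calculus and Helffer--Sj\"ostrand expansion, in broad parallel with the proof of Proposition~\ref{cl:1}. Since $\supp(\tilde\chi)\subset\{\chi\equiv 1\}$ and $0\le\tilde\chi\le\chi\le 1$ pointwise, functional calculus for the self-adjoint operator $d_t^E/s$ immediately yields $\chi(d_t^E/s)\ge\tilde\chi(d_t^E/s)$. It therefore suffices to prove $\tilde\chi(d_t^E/s)\ge\tilde g\,\tilde\chi(d_t/s)\,\tilde g + O(s^{-n})$, which I split as
\[
\tilde\chi(d_t^E/s)-\tilde g\,\tilde\chi(d_t/s)\,\tilde g = \bigl[\tilde\chi(d_t^E/s)-\tilde g\,\tilde\chi(d_t^E/s)\,\tilde g\bigr]+\tilde g\bigl[\tilde\chi(d_t^E/s)-\tilde\chi(d_t/s)\bigr]\tilde g,
\]
and handle the two summands separately.

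For the second summand, one establishes the estimate $\tilde g[\tilde\chi(d_t^E/s)-\tilde\chi(d_t/s)]\tilde g = O(s^{-n})$ by adapting the argument of Proposition~\ref{cl:1}. Writing the difference via Helffer--Sj\"ostrand as $\int d\widetilde{\tilde\chi}(w)\tilde g[R^E(w)-R(w)]\tilde g$ and iterating the resolvent identity $R^E-R=R(b/s)R^E$ with $b:=d_X-g d_X g$, one produces the $n$-fold expansion $R^E-R=\sum_{k=1}^{n-1}(-1)^{k+1}s^{-k}(Rb)^k R+(-1)^{n+1}s^{-n}(Rb)^n R^E$. Sandwich by $\tilde g$ and invoke Lemma~\ref{lem:(Rb/s)k-exp} together with Corollary~\ref{corChik}: the leading contributions of $(Rb/s)^k R$ involve $h^k=(1-g^2)^k$, while the subleading operators $B_{q,l}$ are polynomials in $\Ad_d^m(h)$ and $\Ad_d^m(c)$ with $c=\Ad_d(g)g$. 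The crucial support identity $g\tilde g=\tilde g$ forces $\tilde g\,h=h\,\tilde g=0$ together with $\tilde g\,c\,\tilde g=0$, so the dominant pieces vanish under sandwiching, and the $s^{-n}$ remainder of Corollary~\ref{corChik} controls the rest.

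For the first summand, I apply the IMS identity with the partition of unity $\tilde g^2+\ell^2=1$, where $\ell:=\sqrt{1-\tilde g^2}$ is a smooth bounded function of $H$:
\[
\tilde\chi(d_t^E/s)-\tilde g\,\tilde\chi(d_t^E/s)\,\tilde g = \ell\,\tilde\chi(d_t^E/s)\,\ell + \tfrac12\bigl([\tilde g,[\tilde g,\tilde\chi(d_t^E/s)]]+[\ell,[\ell,\tilde\chi(d_t^E/s)]]\bigr).
\]
The term $\ell\,\tilde\chi(d_t^E/s)\,\ell\ge 0$ is manifestly non-negative and is dropped. A single Helffer--Sj\"ostrand expansion (applied both to $\tilde g(H), \ell(H)$ and to $\tilde\chi(d_t^E/s)$) together with Proposition~\ref{prop:mult-comm-estH} shows $[\tilde g,\tilde\chi(d_t^E/s)]=O(s^{-1})$, so that the double commutators above are $O(s^{-2})$ at leading order. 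To upgrade this to $O(s^{-n})$ I iterate the IMS procedure through nested cutoffs of progressively tighter supports, each layer gaining an additional factor $s^{-1}$ via the uniform multiple-commutator bounds of Section~\ref{sec:mult-comm-est}.

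The principal technical difficulty is this last step: a single IMS only gives an $O(s^{-2})$ remainder, whereas the theorem demands $O(s^{-n})$ for arbitrary $n$. Closing the gap requires either iterated IMS through a finite sequence of nested cutoffs, or a direct $n$-fold nested commutator expansion of $\tilde g$ against $\tilde\chi(d_t^E/s)$ via Helffer--Sj\"ostrand, while preserving the operator inequality at every stage. In either approach the uniform multiple-commutator estimates of Proposition~\ref{prop:mult-comm-estH} are the essential quantitative input ensuring each nested commutator contributes an additional factor of $s^{-1}$, so that the iteration terminates after $n$ steps with the stated accuracy.
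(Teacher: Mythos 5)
There is a genuine gap, and in fact the route you chose runs into trouble in \emph{both} of your summands. For the first summand, $\tilde\chi(d^E_t/s)-\tilde g\,\tilde\chi(d^E_t/s)\,\tilde g$, the IMS identity gives a lower bound only up to the double commutators, and you yourself note that this yields at best $O(s^{-2})$; but the situation is worse than you state. Each commutator of $\tilde g(H)$ with $\tilde\chi(d^E_t/s)$ gains one power of $s^{-1}$ from expanding the outer function $\tilde\chi$, while the inner object $[\tilde g,d^E]=g[\tilde g,d_X]g$ is a fixed bounded operator that does \emph{not} commute with $\tilde g$; consequently $[\tilde g,[\tilde g,\tilde\chi(d^E_t/s)]]$ contains the term $\sim s^{-1}\int d\widetilde{\tilde\chi}(z)\,R^E\,[\tilde g,[\tilde g,d^E]]\,R^E$, which is generically only $O(s^{-1})$, not $O(s^{-2})$. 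There is no visible mechanism by which iterating IMS through nested cutoffs produces additional powers of $s^{-1}$, so the ``principal technical difficulty'' you flag is not a missing detail but an obstruction to the approach. For the second summand, $\tilde g[\tilde\chi(d^E_t/s)-\tilde\chi(d_t/s)]\tilde g$, the support identities you invoke ($\tilde g h=0$ with $h=\bar g(1+g)$, and the structure of $c=\Ad_d(g)g$) kill only the leading $l=0$ pieces of the expansion in Corollary~\ref{corChik}. The operators $B_{q,l}$ for $1\le l\le n-1$ are polynomials in $\ad{m}{d}{h}$ and $\Ad_d^m(c_s)$ and are \emph{not} annihilated by $\tilde g$ (e.g.\ $\tilde g\,[d,\bar g]\neq0$), while the accompanying factors $\tilde\chi^{(l+1)}(d_t/s)$ do not vanish; these terms therefore survive at orders $s^{-1},\dots,s^{-(n-1)}$.

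The paper circumvents both problems with a square-root trick that your decomposition forgoes: writing $\xi=\sqrt{\chi}$ and $\bar\xi=1-\xi$, one has $\chi^E=(\xi^E)^2\ge \xi^E\,\tilde g\tilde\chi\tilde g\,\xi^E$ since $\|\tilde g\tilde\chi\tilde g\|\le1$, which reduces the whole claim to the single one-sided estimate $\bar\xi^E\,\tilde g\tilde\chi=O(s^{-n})$. There the full gain of $n$ powers comes from the disjointness of supports of the \emph{outer} functions, $\bar\xi^{(l)}\tilde\chi^{(m)}=0$ for all $l,m\ge0$ (which kills every term $h^k\bar\xi(d_t/s)$ and $B_{q,l}(\bar\xi p_{q,l})^{(l+1)}(d_t/s)$ of Corollary~\ref{corChik} after commuting $\tilde g$ past $\tilde\chi$), together with $h\tilde g=0$ to control the resolvent remainder (Lemma~\ref{lemform5}). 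If you want to salvage your plan, you should replace the reduction to $\tilde\chi(d^E_t/s)\ge\tilde g\tilde\chi(d_t/s)\tilde g+O(s^{-n})$ by this sandwiching step; as written, the intermediate inequality you aim for is not established and does not appear reachable by IMS.
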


		\begin{proof}
			Since $\|\tilde g\tilde{\chi}\tilde g\|\leq 1$, we have
			\begin{align}\label{s2.2'}
				\chi^E&\geq \xi^E\tilde g\tilde{\chi}\tilde g\xi^E=\tilde g\tilde{\chi}\tilde g-\bar{\xi}^E\tilde g\tilde{\chi}\tilde g-\tilde g\tilde{\chi}\tilde g\bar{\xi}^E+\bar{\xi}^E\tilde g\tilde{\chi}\tilde g\bar{\xi}^E.
			\end{align}We now claim \begin{equation}\label{scl3'} 
				\bar{\xi}^E\tilde g\tilde{\chi}=O(s^{-n}). 
			\end{equation}
			If \eqref{scl3'} holds, then the last three terms on the r.h.s. of \eqref{s2.2'} are $O(s^{-n})$ and we are done.

			Since the operator $b\equiv d-d^E= d_t-d^E_t$ as in the proof of \propref{cl:1}, proceeding as in \eqref{Ak-part-exp}--\eqref{Ak-part-exp2}, we find the expansion
			\begin{align}\label{barxiExp1}
				\bar\xi ^E&=\sum_{k=0}^{n-1} \bar \xi_k+s^{-n}\Rem_1,
			\end{align}
			where
			\begin{align} \label{barxikDef}
				\bar\xi_k&=\int (Rb/s)^kR\,d\widetilde{\bar\xi}(z)\quad\text{and}\quad \Rem_1=\int (Rb)^{n}R^E\,d\widetilde{\bar\xi}(z), 
			\end{align}
			where  	$ \widetilde{\bar\xi}(z)$ is an almost analytic extension of the function $\bar\xi(\mu)$. (Below we will omit the measure $d\widetilde{\bar\xi}(z)$ when no confusion arises.) By expansion \eqref{barxiExp1}, Claim \eqref{scl3'} is equivalent to the relations
			\begin{align}
				&\bar \xi_k \tilde g\tilde \chi = O(s^{-n}),\label{scl3.1}
				\\&\Rem_1 \tilde g\tilde \chi = O(1). \label{scl3.2}
			\end{align}
			
			We first prove \eqref{scl3.1}.					We write the l.h.s. of \eqref{scl3.1} as
			\begin{align}\label{552}
				\bar\xi_k\tilde g\tilde{\chi}=\bar\xi_k\tilde{\chi}\tilde g+\bar\xi_k[\tilde g,\tilde{\chi}].
			\end{align}
			Since $\ad{k}{d_t/s}{\tilde g}=s^{-k}\ad{k}{d}{\tilde g}$ is bounded for $0\le k \le n$, we have by expansion  \eqref{4.Hcomm-exp-right} that
			\begin{align}\label{592}
				[\tilde g,\tilde \chi]=\sum_{k=1}^{n-1}(-1)^k\frac{s^{-k}}{k!}\tilde\chi^{(k)}(d_t/s)\ad{k}{d}{\tilde g}+s^{-n}\Rem_3,
			\end{align}			
			where $\Rem_3=O(1)$. Plugging \eqref{592} into \eqref{552} yields
			\begin{align}
				\notag\bar\xi_k\tilde g\tilde{\chi}=&\bar\xi_k\tilde{\chi}\tilde g+ \sum_{k=1}^{n-1}\frac{s^{-k}}{k!}\bar\xi_k\tilde\chi^{(k)}(d_t/s)\ad{k}{d}{\tilde g}+s^{-n}\bar\xi_k\Rem_3
				\\=:&A+B+C.\label{554}
			\end{align}
			We apply \corref{corChik} to the function $\bar\xi $ to obtain the expansion
			\begin{align}\label{barxiExp2}
				\bar{\xi}_k =&h^k\bar\xi(d_t/s)+\sum_{q=0}^k\sum_{l=0}^{n-1}s^{-l} B_{q,l}(\bar{\xi} p_{q,l})^{(l+1)}(d_t/s)+s^{-n}\Rem_{2,k},
			\end{align}
			where $\norm{h}\le 2$, $B_{q,l}=O(1)$  are  defined in \lemref{lem:(Rb/s)k-exp}, part (b), and $\Rem_{2,k}=O(1)$. 
			Thus $\bar\xi_k=O(1)$ and so the term $C$ in line \eqref{554} is $O(s^{-n})$. 
			By definition \eqref{xiDef}, we have \begin{align}\label{xichirel}
				\bar\xi^{(l)}(\mu)\tilde\chi^{(m)}(\mu)=0\quad \text{ for any integers } l,m\ge0,
			\end{align}see \figref{fig:tildechi} .
			Thus, inserting  \eqref{barxiExp2} to \eqref{554} and using \eqref{xichirel}, we find
			\begin{align}
				A=&s^{-n}{\sum_{k=0}^{n-1}\Rem_{2,k}} \tilde\chi \tilde g
				=O(s^{-n}),\label{AEst1}
				\\	B=&s^{-n} \sum_{k=1}^{n-1}\sum_{l=0}^{n-1}\frac{s^{-k}}{k!}{\Rem_{2,l}}\tilde\chi^{(k)} \ad{k}{d}{\tilde g}
				=O(s^{-n}).\label{BEst1}
			\end{align}
			Thus we have proved \eqref{scl3.1}.
			
			Next, we prove \eqref{scl3.2} by the following lemma:
			\begin{lemma}\label{lemform5}
				For   $k=1, \ldots, n$ and $\Rem_1(k):=	\int (Rb)^kR^E$, 
				\begin{align}\label{form5}
					\Rem_1(k)\tilde g\tilde \chi= O(1).
				\end{align}

			\end{lemma}
			\begin{proof}

				We prove this by induction on $k$. We have by expansion \eqref{Rb-transf'} that $Rb=Rc  +R_2h$. For the base case $k=1$, we write
				\begin{align}\label{R1tg}
					RbR^E  =&  RcR^E +R_2 R^Eh +  R_2  [h, R^E]  \notag\\
					=&RcR^E +R_2R^Eh + s^{-1}R_2R^E\ad{}{d^E}{h}R^E,
				\end{align}	
				where we use the relation (c.f. \eqref{R1h-comm})
				\begin{align}\label{REcomm} 
					[B, R^E]= s^{-1}R^{E} \Ad_{d^E}(B)R^{E},
				\end{align}	
				valid for any operator $B$ allowed by the domain consideration. 
				
				The second term \eqref{R1tg} is a priori large $O(s)$ but it is removed by $\tilde g.$ Indeed, since $h\tilde g=0$ by \eqref{h} and the relation \eqref{tildechi-cond} (c.f. \figref{fig:tildeg}), and $s^{-1}R_2=1+zR$ by \eqref{R2form}, we have 
				\begin{align}
					\Rem_1(1)\,\tilde g =&  \int RcR^E\tilde g+\int s^{-1}R_2R^E\ad{}{d^E}{h}R^E\tilde g\notag
					\\	=& \int RcR^E\tilde g+\int R^E\ad{}{d^E}{h}R^E\tilde g + \int zRR^E\ad{}{d^E}{h}R^E\tilde g. \label{5101}
				\end{align}
				For $f\in C_c^\infty(\Rb)$, the operators $\ad{k}{d^E}{f}$ are $O(1)$	by results from \secref{sec:mult-comm-est}, see \eqref{3.15'} and \cite[eqn.~(B.20)]{HunSig1}. Thus the three integrals in line \eqref{5101} are $O(1)$ by the estimates $\norm{\tilde g}\le 1$, $\|c\|, \norm{\Ad_{d^E}(h)}=O(1)$, $\norm{R},\,\norm{R^E}\le \abs{\Im(z)}^{-1}$, and \corref{corRempEst} with $(p,l)=(1,0),\,(2,1)$. 
				This shows \eqref{form5} with $k=1$.

				Suppose now \eqref{form5} holds with some $k\ge1$, and we prove it for $k\to k+1$.
				First, we note the relation $R^E-R=RbR^E$ and so
				\begin{align}\label{Rem1kform}
					\Rem_1(k )=&\int (Rb)^kR + \int (Rb)^k(R^E-R)\notag
					\\=& \int (Rb)^kR + \int (Rb)^{k+1}R^E
					=s^k\bar\xi_k+\Rem_1(k+1), 
				\end{align}
				where $\bar\xi_k$ is defined by \eqref{barxikDef}. Right-multiplying $\tilde g\tilde \chi$ on both sides of \eqref{Rem1kform} and rearranging, we find 
				\begin{align}
					\Rem_1(k+1)\tilde g\tilde \chi = \Rem_1(k)\tilde g\tilde \chi - s^{k}\bar \xi_k\tilde g\tilde \chi. 
				\end{align}
				The first term on the r.h.s. is $O(1)$ by induction hypothesis. The second term is $O(s^{k-n})$ by \eqref{scl3.1} proved earlier.  Since $k\le n$, this completes the induction and the proof of \lemref{lemform5}.
								\end{proof}
				
				Since $\Rem_1\equiv\Rem_1(n)$ in \lemref{lemform5}, estimate \eqref{form5} implies \eqref{scl3.2}. This, together with \eqref {scl3.1}, implies the claim \eqref{scl3'}.  This completes the proof of \propref{cl:2}.    \end{proof}

			\appendix

	\section*{Acknowledgment}
	The research of M.L.\ is supported in part by the DFG under grant SFB TRR 352.
 The research of  D.O.\ and I.M.S.\ is supported in part by NSERC Grant NA7901. 		
	J. Z.\ is supported by DNRF Grant CPH-GEOTOP-DNRF151, DAHES Fellowship Grant 2076-00006B, DFF Grant 7027-00110B, and the Carlsberg Foundation Grant CF21-0680. His research was also supported in part by NSERC Grant NA7901.  
	Parts of this work were done while he was visiting MIT.

	\section*{Declarations}
	\begin{itemize}
		\item Conflict of interest: The Authors have no conflicts of interest to declare that are relevant to the content of this article.
		\item Data availability: Data sharing is not applicable to this article as no datasets were generated or analysed during the current study.
	\end{itemize}
	
	\appendix


					\section{Existence of unique 
	solution to vNL equation} \label{sec:exist}
In  this section, we prove existence of unique mild solution to \eqref{vNLeq} in the Schatten space $S^1$ of trace-class operators. 
 Throughout the section, we assume \ref{W1}, i.e. $\sum\nolimits_{j\geq 1}W_{j}^{*}W_{j}$ with $W_j$ in \eqref{vNLeq} converges strongly. 

The main mechanism is the following theorem (see e.g. \cite[Theorem 3.1.33]{BR1}):
\begin{theorem}\label{thm:exist-sol}
	Let $U$ be a strongly continuous semigroup on the Banach space $X$ with generator $S$ and let $P$ be a bounded operator on $X$.  Then, $S+P$ generates a strongly continuous semigroup $U^P$.
\end{theorem}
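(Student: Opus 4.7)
The plan is to construct $U^P$ explicitly via the Dyson--Phillips series and then verify strong continuity, the semigroup property, and identification of the generator. Define recursively $V_0(t) := U_t$ and $V_n(t) := \int_0^t U_{t-s} P V_{n-1}(s)\, ds$ for $n \ge 1$, interpreted as a strong operator integral (which makes sense because $s \mapsto U_{t-s} P V_{n-1}(s) x$ is strongly continuous for any $x \in X$). I would then set $U^P_t := \sum_{n=0}^{\infty} V_n(t)$.

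The first step is convergence. Since $U$ is a $C_0$-semigroup, there exist $M \ge 1$ and $\omega \in \mathbb{R}$ with $\|U_t\| \le M e^{\omega t}$. An induction on $n$ then yields $\|V_n(t)\| \le M^{n+1} \|P\|^n t^n e^{\omega t}/n!$, so the series defining $U^P_t$ converges in operator norm, uniformly on compact time intervals, and satisfies $\|U^P_t\| \le M e^{(\omega + M\|P\|)t}$. Strong continuity of $t \mapsto U^P_t x$ is then inherited from strong continuity of each $V_n$ via this uniform bound. Summing the recursive definition also gives the Duhamel integral equation
\begin{equation*}
U^P_t = U_t + \int_0^t U_{t-s} P U^P_s\, ds,
\end{equation*}
from which the semigroup identity $U^P_{t+s} = U^P_t U^P_s$ follows: one checks that both sides, as functions of $t$ with $s$ fixed, satisfy the same Volterra equation with the same initial value $U^P_s$, and the fixed-point argument that produced $U^P$ in the first place guarantees uniqueness.

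It remains to identify the generator, which I denote $S'$. For $x \in \mathcal{D}(S)$, I would differentiate the Duhamel equation at $t=0^+$: the term $t^{-1}(U_t x - x) \to Sx$ by definition of $S$, while $t^{-1}\int_0^t U_{t-s} P U^P_s x\, ds \to Px$ by strong continuity of the integrand at $s=0$. Hence $\mathcal{D}(S) \subset \mathcal{D}(S')$ and $S'x = Sx + Px$. For the reverse inclusion, I would compare resolvents: for $\lambda$ sufficiently large, $(\lambda - S - P)^{-1} = (\lambda - S)^{-1} \sum_{n \ge 0}\bigl(P(\lambda - S)^{-1}\bigr)^n$ exists and has range $\mathcal{D}(S)$, so the closed operator $S+P$ with domain $\mathcal{D}(S)$ is already a generator of a $C_0$-semigroup on $X$; by Laplace-transforming the Duhamel equation one sees that this resolvent equals the resolvent of $S'$, forcing $S' = S+P$.

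The computations above are all standard manipulations; the main subtlety is the interchange of the infinite sum with integration and with the strong limit defining the generator, both of which are justified by the uniform-on-compacts operator-norm convergence established in the first step. Once that bound is in hand, every other step is routine.
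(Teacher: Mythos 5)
Your proof is correct: this is the classical bounded perturbation theorem for $C_0$-semigroups, and the paper does not prove it but simply cites \cite[Theorem 3.1.33]{BR1}, whose proof is exactly the Dyson--Phillips/Duhamel iteration you carry out (norm-convergent series, Volterra uniqueness for the semigroup law, and resolvent comparison to identify the generator). Nothing further is needed.
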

In our case, $X$ is the Schatten space $\cS^1$ with trace-norm $\|\cdot\|_{1}$, the strongly continuous semigroup $U$ is the unitary semigroup generated by $-i[H,\cdot]$ and the perturbation $P$ is the Lindblad operator $G$ (see \eqref{vNLeq}).

In the next lemma, we show that $G$ is norm closed and bounded, so that \thmref{thm:exist-sol} indeed applies.

\begin{lemma}\label{lem:G-bdd-cl}
	The Lindblad operator $G$ defined in \eqref{vNLeq} is bounded on $\cS_1$.
\end{lemma}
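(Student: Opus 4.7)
The plan is to decompose $G$ into its two natural pieces and estimate each in trace norm using Hypothesis \ref{W1}. Expanding the commutators in \eqref{vNLeq}, one obtains
\begin{equation*}
G\rho \;=\; \sum_{j\ge 1} W_j\,\rho\, W_j^* \;-\; \tfrac{1}{2}\bigl\{K,\rho\bigr\},\qquad K \;:=\; \sum_{j\ge 1} W_j^* W_j \,\in\, \cB(\cH),
\end{equation*}
where $K$ is well-defined and bounded by \ref{W1}. The anticommutator part is immediate: since $\|A\rho\|_1, \|\rho A\|_1 \le \|A\|\,\|\rho\|_1$ for $A\in\cB(\cH)$ and $\rho\in\cS_1$, we get $\|\{K,\rho\}\|_1 \le 2\|K\|\,\|\rho\|_1$.

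The main step is to show that the series $G_1\rho:=\sum_{j\ge 1} W_j\rho W_j^*$ converges in $\cS_1$ with $\|G_1\rho\|_1 \le C\|\rho\|_1$. I would first take $\rho\ge 0$. Each partial sum $G_1^N\rho = \sum_{j=1}^N W_j \rho W_j^*$ is then a positive trace-class operator, and cyclicity of the trace gives
\begin{equation*}
\|G_1^N\rho\|_1 \;=\; \Tr\bigl(G_1^N\rho\bigr) \;=\; \Tr\Bigl(\rho \sum_{j=1}^N W_j^* W_j\Bigr) \;\le\; \|K\|\,\|\rho\|_1.
\end{equation*}
The same identity applied to tails yields $\|G_1^N\rho - G_1^M\rho\|_1 = \Tr\bigl(\rho \sum_{j=M+1}^N W_j^* W_j\bigr)$, which tends to $0$ as $M,N\to\infty$, since strong convergence of the partial sums of $W_j^* W_j$ combined with trace-class $\rho$ implies trace-norm convergence of their product (a standard fact proved by approximating $\rho$ by finite-rank operators and then using an $\varepsilon/3$ argument together with uniform boundedness). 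Hence $(G_1^N\rho)_N$ is Cauchy in $\cS_1$, producing a limit $G_1\rho$ with $\|G_1\rho\|_1 \le \|K\|\,\|\rho\|_1$.

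For arbitrary $\rho\in\cS_1$, I would decompose $\rho = \sum_{k=1}^{4} i^{k-1}\rho_k$ with each $\rho_k\ge 0$ and $\|\rho_k\|_1 \le \|\rho\|_1$ to deduce $\|G_1\rho\|_1 \le 4\|K\|\,\|\rho\|_1$. Combined with the anticommutator estimate, this yields $\|G\rho\|_1 \le 5\|K\|\,\|\rho\|_1$, so $G$ is defined on all of $\cS_1$ and bounded (hence automatically norm-closed, since a bounded operator defined on the full Banach space is closed). The only genuinely delicate point is converting the strong convergence of $\sum W_j^* W_j$ on $\cH$ into trace-norm convergence of $\sum W_j\rho W_j^*$ in $\cS_1$; this is precisely what pairing with the trace-class element $\rho$ via the identity above accomplishes.
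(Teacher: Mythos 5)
Your proof is correct and rests on the same core device as the paper's: for positive $\rho$, cyclicity of the trace converts $\|W_j\rho W_j^*\|_1=\Tr(W_j\rho W_j^*)$ into $\Tr(W_j^*W_j\rho)$, which is controlled by \ref{W1}. You differ in three worthwhile details. First, you handle the anticommutator part in one stroke via $\|\{K,\rho\}\|_1\le 2\|K\|\,\|\rho\|_1$ with $K=\sum_{j}W_j^*W_j$, whereas the paper estimates it term by term after asserting that $\{W_j^*W_j,\rho\}\ge 0$ for $\rho\ge 0$ --- an assertion that is false in general (the anticommutator of two positive operators need not be positive), so your route actually sidesteps a small slip in the paper's argument. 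Second, you verify that the partial sums $\sum_{j\le N}W_j\rho W_j^*$ are Cauchy in $\cS_1$, i.e.\ that $G\rho$ is a well-defined trace-class operator, a convergence point the paper leaves implicit; for $\rho\ge0$ this is immediate since the nonnegative series $\sum_j\Tr(W_j^*W_j\rho)$ is bounded by $\|K\|\,\|\rho\|_1$. Third, you make explicit the reduction from general $\rho\in\cS_1$ to positive $\rho$ (four positive parts, at the cost of a factor $4$), which the paper dismisses with ``without loss of generality.'' The resulting conclusion is the same up to constants.
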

\begin{proof}
	
	Without loss of generality, we assume $\rho\in\cS^1$ is positive.  Let $G_j(\cdot)=W_j\del{\cdot} W_j^*-\tfrac{1}{2}\{W_j^*W_j,\del{\cdot}\}$.  For a positive $\rho$, it is clear the operators $W_j\rho W_j^*$ and $\{W_j^*W_j,\rho\}$ are positive for all $j$.  Then, by cyclicity of the trace, we have
	\begin{align}\label{Gj'}
		\|G_j(\rho)\|_1&\leq \|W_j\rho W_j^{*}\|_1+\frac{1}{2}\|\{W_j^*W_j,\rho\}\|_1\nonumber\\
		&\leq \Tr|W_j\rho W_j^*|+\frac{1}{2}\Tr|\{W_j^*W_j,\rho\}|\nonumber\\
		&=\Tr(W_j\rho W_j^*)+\frac{1}{2}\Tr(\{W_j^*W_j,\rho\})\nonumber\\
		&=2\Tr(W_j^*W_j\rho).
	\end{align}
	Thus, 
	\begin{align}
		\|G(\rho)\|_1&=\norm{\sum_{j\geq 1}G_j(\rho)}_1\leq 2\sum_{j\geq 1}\Tr(W_j^*W_j\rho)\leq 2\norm{\sum_{j\geq 1}W_j^*W_j}\|\rho\|_1.
	\end{align}
	Since $\sum\nolimits_{j\geq 1}W_j^*W_j$ is bounded by the uniform boundedness theorem, this proves $G$ is bounded on $\cS_1$, which completes the proof.
\end{proof}
Theorem \ref{thm:exist-sol} shows that \eqref{vNLeq} has a unique strong solution in $\cD(L)$ and a unique mild solution in $\cS_1$.  We denote the semigroup generated by vNL operator $L$ by $\beta_t$ as before.
Note that  since $e^{L_0 t}$ is a group (defined on $\R$), then so is $\beta_t=e^{L t}$.

\medskip

The positivity preserving property of $\beta_t$ follows from		 \cite[Theorem 5.2]{Davies}.  We summarize the key result in the following lemma:
\begin{lemma}\label{lem:pos-vNL}
	The semigroup $\beta_t$ is positive for all $t\geq 0$.
\end{lemma}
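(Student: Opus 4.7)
The plan is to realize $\beta_t=e^{tL}$ as a Dyson series built from two ingredients each of which manifestly preserves positivity: a ``conditional'' semigroup generated by conjugation with a (non-unitary) contraction group, and a completely positive perturbation. Specifically, set
\begin{equation*}
K:=-iH-\tfrac12 \sum_{j\geq 1} W_j^*W_j, \qquad \Phi(\rho):=\sum_{j\geq 1} W_j \rho W_j^*,
\end{equation*}
where the series defining $K$ converges strongly by \ref{W1} and $K$ is closed on $\mathcal{D}(H)$. Since $K+K^*=-\sum_j W_j^*W_j\le 0$, the operator $K$ generates a strongly continuous contraction semigroup $e^{tK}$ on $\mathcal{H}$. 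Define
\begin{equation*}
S_t(\rho):= e^{tK}\rho\, e^{tK^*}, \qquad t\ge 0.
\end{equation*}

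First I would show that $\{S_t\}_{t\ge 0}$ is a strongly continuous, positivity-preserving contraction semigroup on $\mathcal{S}_1$ whose generator $L_1$ acts on the natural domain as
\begin{equation*}
L_1 \rho = -i[H,\rho]-\tfrac12 \big\{\textstyle\sum_{j\geq 1} W_j^*W_j,\, \rho\big\}.
\end{equation*}
Positivity preservation is immediate from the definition (conjugation sends positive operators to positive operators), the semigroup property follows from that of $e^{tK}$, and strong continuity in $\mathcal{S}_1$ is standard once one checks it on rank-one operators and invokes the uniform bound $\|S_t\|\le 1$. Thus $L=L_1+\Phi$ with $\Phi$ bounded on $\mathcal{S}_1$ (the argument is the same as in Lemma \ref{lem:G-bdd-cl}, using $\|\Phi(\rho)\|_1\le \mathrm{Tr}(\sum_j W_j^*W_j\,\rho)\le \|\sum_j W_j^*W_j\|\,\|\rho\|_1$ for $\rho\ge 0$), and $\Phi$ is completely positive since it is a sum of Kraus terms $W_j(\cdot)W_j^*$.

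Next, I would invoke the bounded-perturbation theorem (the same result already used as Theorem \ref{thm:exist-sol}) applied to the generator $L_1$ together with the bounded perturbation $\Phi$, to identify $\beta_t$ with the Dyson series
\begin{equation*}
\beta_t(\rho)=\sum_{n=0}^{\infty}\int_{0\le t_1\le\cdots\le t_n\le t} S_{t-t_n}\,\Phi\, S_{t_n-t_{n-1}}\,\Phi\,\cdots\,\Phi\, S_{t_1}(\rho)\,dt_1\cdots dt_n,
\end{equation*}
the series converging absolutely in $\mathcal{B}(\mathcal{S}_1)$ because $\|S_t\|\le 1$ and $\|\Phi\|<\infty$, giving the majorant $\sum_n (t\|\Phi\|)^n/n!=e^{t\|\Phi\|}$. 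If $\rho\ge 0$, then each factor $S_{t_{k+1}-t_k}$ and each $\Phi$ maps the positive cone of $\mathcal{S}_1$ into itself, so every term of the series is positive; positivity of the $\mathcal{S}_1$-limit then follows from the fact that the positive cone is norm-closed in $\mathcal{S}_1$. This yields $\beta_t(\rho)\ge 0$ for all $t\ge 0$ and all positive $\rho$, as claimed.

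The only step I expect to require care is the identification of the generator of $S_t$ with $L_1$ on a core that is invariant under $\beta_t$, so that the Dyson/variation-of-constants formula applies in $\mathcal{S}_1$ (not merely formally); this is routine given the boundedness of $\Phi$ and the contraction property of $e^{tK}$, and it is precisely the content of the relevant part of \cite[\S 3]{Davies}, so one can alternatively cite that result directly after establishing the decomposition $L=L_1+\Phi$ above.
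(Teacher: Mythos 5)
Your proposal is correct and follows essentially the same route as the paper: you use the identical decomposition of $L$ into the ``conditional'' part generated by $K=-iH-\tfrac12\sum_j W_j^*W_j$, whose semigroup $S_t(\rho)=e^{tK}\rho\,e^{tK^*}$ manifestly preserves positivity, plus the completely positive bounded perturbation $\Phi(\rho)=\sum_j W_j\rho W_j^*$. The only difference is the final limiting device — you sum the Dyson/Duhamel series term by term and use norm-closedness of the positive cone in $\cS_1$, whereas the paper invokes the Trotter--Lie product formula $\beta_t(\rho)=\lim_n (S_{t/n}e^{Ft/n})^n(\rho)$; both are standard and equally valid here since $\Phi$ is bounded and $S_t$ is a positivity-preserving contraction semigroup on $\cS_1$.
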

\begin{proof}
	First, we rewrite the vNL operator $L$ as
	\begin{align}
		L(\rho)&=-iK_{H+iP}(\rho)+F(\rho),
	\end{align}
	where $P=P^*=\tfrac{1}{2}\sum\nolimits_{j\geq 1}W_j^*W_j$, $K_A(\rho)=A\rho-\rho A^*$ and $F(\rho)=\sum\nolimits_{j\geq 1}W_j\rho W_j^*$.  
	
	Let $B_t=e^{-iHt-Pt}$, which is well-defined since $P$ is bounded by assumption.  It is easy to check that the semigroup $S_t$ generated by $-iK_{H+iP}$ is given by
	\begin{align}\label{B-semi}
		S_t(\rho)&=B_t\rho B_t^{*},
	\end{align}
	which obviously defines a positive semigroup.  On the other hand, since $$\sum_{j\geq 1}W_j\rho W_j^*\geq 0$$ for all $\rho\geq 0$, then $F$ generates a positive semigroup $e^{Ft}$.  
	
	Finally, by Trotter-Lie formula, we have
	\begin{align}\label{Davies-def}
		\beta_t(\rho)&=\lim_{n\rightarrow\infty}(S_{t/n}e^{Ft/n})^n(\rho),
	\end{align}
	where the limit is taken in the trace-norm. Hence the semigroup $\beta_t$ is positive.
\end{proof}

Note that \eqref{Davies-def} yields another way to construct the semigroup $\beta_t=e^{Lt}$.

					\section{Remainder estimates}\label{secRemEst}
In this appendix and the next one, we present some estimates and commutator expansions, first derived in \cite{SigSof}  and then improved in \cite{GoJe,HunSig1, HunSigSof, Skib}.  We adapt some of the arguments from \cite{HunSig1} and refer to this paper for details and references.

Throughout this section we fix an integer $\nu\ge0$. For integers $p\ge 0$ 
and smooth functions $f\in C^{\nu+2}(\Rb)$, we define a weighted norm 
\begin{align}\label{pnNorm}
	\cN(f,p):=\sum_{m=0}^{\nu+2}\int_\Rb \br{x}^{m-p-1}\abs{f^{(m)}(x)}\,dx.
\end{align}

Note that
\begin{align}\label{Norder}
	p\le p'\implies \cN(f,p')\le \cN(f,p),
\end{align}
and we have the following property:
\begin{lemma}\label{lemNfin}
	Let $p\ge0$ be an integer. Suppose $f\in C^{\nu+2}$ and there exist  $C_0,\,\rho>0$ such that, for $ m=0,\ldots, \nu+2$, 
	\begin{align}
		\label{fCond}
		\norm{	\br{x}^{m-p+\rho}	f^{(m)}(x)}_{L^\infty}\le C_0 .
	\end{align}  Then  there exists $C>0$ depending only on $\rho,\,C_0,\,\nu$ such that
	\begin{align}\label{Ncond}
		\cN(f,p)\le C.
	\end{align}
\end{lemma}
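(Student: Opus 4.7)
The plan is to bound each summand appearing in the definition \eqref{pnNorm} of $\cN(f,p)$ by a universal integrable function, and then sum. First I would rewrite the hypothesis \eqref{fCond} as the pointwise bound
\begin{equation*}
\abs{f^{(m)}(x)} \le C_0 \br{x}^{p-m-\rho}, \qquad m=0,\ldots,\nu+2,\ x\in\Rb.
\end{equation*}
Inserting this into the weight $\br{x}^{m-p-1}$ in \eqref{pnNorm} produces the pointwise bound
\begin{equation*}
\br{x}^{m-p-1} \abs{f^{(m)}(x)} \le C_0 \br{x}^{-1-\rho},
\end{equation*}
where the exponents $m-p-1$ and $p-m-\rho$ exactly cancel up to the fixed power $-1-\rho$. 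This is the key design feature of the norm $\cN$: the weight is chosen precisely so that the hypothesis \eqref{fCond} yields an $m$-independent integrand.

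The second step is a one-dimensional integrability check. Since $\rho>0$, we have $\int_\Rb \br{x}^{-1-\rho}\, dx =: c(\rho) < \infty$, so integrating the previous pointwise bound and summing over $m=0,\ldots,\nu+2$ gives
\begin{equation*}
\cN(f,p) \le (\nu+3)\, C_0\, c(\rho),
\end{equation*}
which is of the required form with $C$ depending only on $\rho$, $C_0$, and $\nu$, as desired.

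I do not anticipate any real obstacle: the lemma is essentially a routine verification that the weighted $L^1$ norm $\cN(f,p)$ is controlled by the weighted $L^\infty$ norms appearing in \eqref{fCond}, together with the fact that the polynomial decay exponent $1+\rho > 1$ makes the resulting one-dimensional integral convergent. The only mild subtlety is to keep track of the sign conventions for the exponents, which is handled by the direct substitution above.
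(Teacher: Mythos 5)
Your argument is correct and is essentially identical to the paper's own proof: both factor the hypothesis \eqref{fCond} out as an $L^\infty$ bound so that each summand of $\cN(f,p)$ reduces to $C_0\int_\Rb\br{x}^{-1-\rho}\,dx$, which converges since $\rho>0$, and then sum the $\nu+3$ terms. No gaps.
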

\begin{proof}
	We have 
	\begin{align*}
		\cN(f,p) \le&  \sum_{m=0}^{\nu+2}	\norm{	\br{x}^{m-p+\rho}	f^{(m)}(x)} \int_\Rb \br{x}^{-1-\rho}dx
		\\\le& (\nu+3) C_0 \int_\Rb \br{x}^{-1-\rho}dx,
	\end{align*}
	and the integral converges for $\rho>0$. 
\end{proof}
\begin{corollary}\label{lemNfin'} 
	Let $p$ and $l$ be two integers with  $ p > l \ge 0$.
	If $f\in C^{\infty}(\Rb)$ and $f^{(l+1)} $ {has compact support}, 
	then  \eqref{Ncond} holds.
\end{corollary}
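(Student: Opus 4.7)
The plan is to reduce the corollary to \lemref{lemNfin} by verifying its hypothesis \eqref{fCond} with the specific choice $\rho := p - l$. Since $p$ and $l$ are integers with $p > l \ge 0$, this $\rho$ is a positive integer, so what needs to be shown is
\begin{equation*}
  \norm{\br{x}^{m-l} f^{(m)}(x)}_{L^\infty} \le C, \qquad m = 0, 1, \ldots, \nu + 2,
\end{equation*}
for some constant $C$ depending only on $f$, $l$, and $\nu$; note that $m - p + \rho = m - l$ with this choice.

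I would then split the range of $m$ into two regimes. For $m \ge l+1$, the derivative $f^{(m)} = (f^{(l+1)})^{(m-l-1)}$ inherits compact support from $f^{(l+1)}$, so $\br{x}^{m-l}|f^{(m)}(x)|$ is a continuous, compactly supported function and is therefore bounded on $\Rb$, regardless of the (possibly positive) sign of the exponent $m - l$. For $m \le l$, the exponent $m - l \le 0$, so the weight $\br{x}^{m-l}$ is bounded by $1$; the task then reduces to producing a pointwise polynomial growth bound $|f^{(m)}(x)| \le C_m \br{x}^{l-m}$ for each $0 \le m \le l$.

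The growth bound I would obtain by finite downward induction on $m$ starting at $m = l$. Since $f^{(l+1)}$ is compactly supported in some $[-A, A]$, the function $f^{(l)}$ is constant on each of the half-lines $(A, \infty)$ and $(-\infty, -A)$ and hence uniformly bounded on $\Rb$; the fundamental theorem of calculus then yields $|f^{(l-1)}(x)| \le C_0 + C_0 |x| \le C_1 \br{x}$, and iterating this one step per value of $m$ produces $|f^{(m)}(x)| \le C_m \br{x}^{l-m}$ down to $m = 0$. Combined with the case $m \ge l+1$, hypothesis \eqref{fCond} is verified, and invoking \lemref{lemNfin} immediately delivers \eqref{Ncond}. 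No substantial obstacle arises; the delicate aspect is merely the accounting of powers of $\br{x}$, and the sharpness of the hypothesis is reflected in the fact that $\rho = p - l \ge 1 > 0$ forces the strict inequality $p > l$ -- the argument fails exactly at $p = l$.
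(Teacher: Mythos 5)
Your proposal is correct and follows essentially the same route as the paper: both verify hypothesis \eqref{fCond} by splitting into the regimes $m\ge l+1$ (where $f^{(m)}$ is compactly supported) and $m\le l$ (where integrating $f^{(l+1)}$ yields $|f^{(m)}(x)|\le C\br{x}^{l-m}$), and then invoke \lemref{lemNfin}. The only cosmetic difference is your choice $\rho=p-l$ versus the paper's $\rho=1$; both are admissible since the lemma only requires $\rho>0$.
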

\begin{proof}
	It suffices to verify condition \eqref{fCond} for the function $f$, whence \eqref{Ncond} follows from Lemma~\ref{lemNfin}.
	For $m\ge l+1$,   \eqref{fCond} holds since $f^{(m)}\in C_c^\infty$. For $m\le l$,   integrating $f^{(l+1)}$  shows that $\abs{f^{(m)}(x)}\le C \br{x}^{l-m}$. Since $p\ge l+1$, we have	\eqref{fCond} with $\rho=1$.  \end{proof}

Write $z=x+iy\in\Cb$. In what follows, 					as in \cite[Eq. (B.5)]{HunSig1}, for $f\in C^{\nu+2}(\Rb)$, we take $\tilde f(z)$ to be an almost analytic extension of $f$ defined by 	
\begin{equation}\label{tfDef}
	\tilde f (z):=\eta\del{\frac{y}{\br{x}}}\sum_{k=0}^{\nu+1}f^{(k)}(x)\frac{(iy)^k}{k!},
\end{equation}							where $\eta\in C_c^\infty(\Rb)$ is a cutoff function   with  	$\eta(\mu)\equiv1$ for $\abs{\mu}\le1$,  $\eta(\mu)\equiv0$ for $\abs{\mu}\ge2$, and $\abs{\eta'(\mu)}\le1$ for all $\mu$.
This $\tilde f(z)$ induces a measure on $\Cb$ as
\begin{align}\label{measDef}
	d\tilde f(z):=-\frac{1}{2\pi}\di_{\bar z}\tilde f(z)dx\,dy.
\end{align}
In the remainder of this appendix, we derive integral estimate for various functions against the measure \eqref{measDef}.

The next result is obtained by adapting the argument  in \cite[Lem.~B.1]{HunSig1}:
\begin{lemma}[Remainder estimate]\label{lemRemEst}
	Let $0\le p \le \nu$.
	Let $f\in C^{\nu+2}(\Rb)$ satisfy \eqref{Ncond}.
	Then the extension $\tilde f$ from \eqref{tfDef} satisfies the following estimate for some 						  $C=C(f,\nu,p)>0:$
	\begin{align}\label{fRemEst}
		\int \abs{d\widetilde{f}(z)} \abs{\Im(z)}^{-(p+1)} \le C.
	\end{align}
\end{lemma}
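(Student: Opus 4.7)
The plan is to compute $\partial_{\bar z}\tilde{f}$ explicitly from the definition \eqref{tfDef} and then to estimate the integral \eqref{fRemEst} by splitting the integrand into two pieces, according to whether the $\bar\partial$ falls on the polynomial part or on the cutoff $\eta(y/\br{x})$. Applying $\partial_{\bar z} = \tfrac12(\partial_x + i\partial_y)$ to \eqref{tfDef} via the Leibniz rule, the key telescoping identity for the truncated Taylor polynomial — namely that the $x$-differentiation of the $k$-th term cancels the $y$-differentiation of the $(k{+}1)$-st term up to the boundary $k=\nu+1$ — produces
\begin{align*}
\partial_{\bar z}\tilde f(z)
= \tfrac{1}{2}\,\eta\!\Big(\tfrac{y}{\br{x}}\Big)\,\tfrac{(iy)^{\nu+1}}{(\nu+1)!}\,f^{(\nu+2)}(x)
+ \tfrac{1}{2}\,\eta'\!\Big(\tfrac{y}{\br{x}}\Big)\!\Big({-}\tfrac{xy}{\br{x}^3}+\tfrac{i}{\br{x}}\Big)\sum_{k=0}^{\nu+1}f^{(k)}(x)\tfrac{(iy)^k}{k!}.
\end{align*}
Call these two pieces $I_1$ and $I_2$ and estimate them separately in \eqref{fRemEst}.

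For $I_1$, the cutoff $\eta$ restricts to the vertical strip $|y|\le 2\br{x}$, on which $|y|^{\nu+1}|y|^{-(p+1)}=|y|^{\nu-p}$ is locally integrable precisely because $p\le\nu$. Integrating in $y$ produces a factor bounded by $C\br{x}^{\nu-p+1}=C\br{x}^{(\nu+2)-p-1}$, so the $I_1$ contribution is controlled by $C\int\br{x}^{(\nu+2)-p-1}|f^{(\nu+2)}(x)|\,dx$, which is exactly the $m=\nu+2$ term of $\cN(f,p)$. This is the only place where the hypothesis $p\le\nu$ is used.

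For $I_2$, observe that on $\supp\eta'$ one has $|y|\sim\br{x}$, so the bracket $-xy/\br{x}^3+i/\br{x}$ is uniformly $O(\br{x}^{-1})$, the annulus has $y$-length $\sim\br{x}$, and pointwise $|y|^{k}|y|^{-(p+1)}\le C\br{x}^{k-p-1}$ for every $0\le k\le\nu+1$ regardless of the sign of $k-p-1$. Summing over $k$ then bounds the $I_2$ contribution by $C\sum_{k=0}^{\nu+1}\int \br{x}^{k-p-1}|f^{(k)}(x)|\,dx$. Combining both pieces yields $\int|d\tilde f(z)|\,|\Im z|^{-(p+1)}\le C\cN(f,p)$, and assumption \eqref{Ncond} closes the estimate.

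I don't anticipate any real obstacle. The telescoping computation of $\partial_{\bar z}\tilde f$ is the standard structural input for almost analytic extensions, and once it is performed the rest is bookkeeping of the weights on the dyadic annulus $|y|\sim\br{x}$ where $\eta'$ is supported. The one subtle check is to make sure that the factor $|y|^{-(p+1)}$ is integrable near the real axis in the $I_1$ piece — this is exactly why the condition $p\le \nu$ is assumed.
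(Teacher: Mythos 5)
Your proposal is correct and follows essentially the same route as the paper's proof: the identical splitting of $\partial_{\bar z}\tilde f$ into the telescoped $\eta\cdot f^{(\nu+2)}$ term and the $\eta'$ terms supported on the annulus $|y|\sim\br{x}$, with the two pieces matching the $m=\nu+2$ and $m\le\nu+1$ contributions to $\cN(f,p)$ respectively. The only (immaterial) difference is that the paper makes the $y$-integration rigorous via a truncation and dominated convergence, whereas you integrate directly — harmless here since the integrand is nonnegative.
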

\begin{proof}

	Differentiating formula \eqref{tfDef}, we obtain the estimate
	\begin{align}\label{dzfEst}
		\abs{\di_{\bar z} \tilde f(z)}\le
		\eta\del{\frac{y}{\br{x}}}\frac{\abs{y}^{\nu+1}}{(\nu+1)!}\abs{f^{(\nu+2)}(x)}+\sum_{k=0}^{\nu+1}\rho\del{\frac{y}{\br x}}\frac{\abs{y}^k}{k!}\abs{\frac{1}{\br{x}}f^{(k)}(x)},
	\end{align}
	where \begin{align}\label{rhodef}
		\rho(\mu):=\abs{\eta'(\mu)}\br{\mu}
	\end{align} is supported on $1<\abs{\mu}<2$. 
	
	For each fixed $x$, we define
	\begin{align}\label{Gxdef}
		G(x):=p.v.\int \abs{\di_{\bar z}f(z)}\abs{y}^{-(p+1)}\,dy
	\end{align}
	by integrating \eqref{dzfEst} against $\abs{y}^{-(p+1)}$.  Using that 	$\eta(y/\br x)\equiv 0$ for $\abs{y}>\br{x}$ and $\rho(y/\br{x})\equiv 0$ for $\abs{y}\le \br{x}$ or $\abs{y}\ge2\br{x}$, we find
	\begin{align}
		G(x)\le&\int_{\abs{y}\le \br{x}} \frac{\abs{y}^{\nu-p}}{(\nu+1)!}\eta\del{\frac{y}{\br{x}}}\,dy\abs{f^{(\nu+2)}(x)}\label{549}
		\\&+\sum_{k=0}^{\nu+1}\int_{\br{x}<\abs{y}<2 \br{x}} \rho\del{\frac{y}{\br x}}\frac{\abs{y}^{k-p-1}}{k!}\,dy\abs{\frac{1}{\br{x}}f^{(k)}(x)}.\label{549'}
	\end{align}
	Since $0\le \eta(\mu)\le1$ and $\nu\ge p$, the integral in line \eqref{549} converges and can be bounded as
	\begin{align}
		\int_{\abs{y}\le \br{x}} \frac{\abs{y}^{\nu-p}}{(p+1)!}\eta\del{\frac{y}{\br{x}}}\,dy\abs{f^{(p+2)}(x)}\le\frac{2\br{x}^{\nu-p+1}}{(p+1)!}\abs{f^{(p+2)}(x)}.\label{RemEst1}
	\end{align}
	
	To bound line \eqref{549'}, we use that $\rho(y/\br{x})< \sqrt{5}$ and $\abs{y}^{k-p-1}\le \br{x}^{k-p-1}$ for $\br{x}<\abs{y}<2\br{x}$,  $0\le k\le p+1$ (see \eqref{rhodef}). Thus  each integral in line \eqref{549'} can be bounded as
	\begin{align}
		&\sum_{k=0}^{\nu+1}\int_{\br{x}<\abs{y}<2 \br{x}} \rho\del{\frac{y}{\br x}}\frac{\abs{y}^{k-p-1}}{k!}\,dy\abs{\frac{1}{\br{x}}f^{(k)}(x)}\notag\\\le&{\sum_{k=0}^{p+1}  \frac{4\sqrt{5} \br{x}^{k-p-1}}{k!}\abs{f^{(k)}(x)}
			+\sum_{k=p+1}^{\nu+1}  \frac{\sqrt{5}\cdot 2^{k-p+1}\br{x}^{k-p-1}}{k!}\abs{f^{(k)}(x)}}
		.\label{RemEst2}
	\end{align}
	Combining \eqref{RemEst1}--\eqref{RemEst2} in \eqref{549'}, we conclude that  
	\begin{align}
		\abs{G(x)}\le C F(x),\quad F(x):= \sum_{m=0}^{\nu+2}  \br{x}^{m-p-1}\abs{f^{(m)}(x)}.
	\end{align}

	Let $G_\lambda(x):=\one_{[-\lambda,\lambda]} G(x)$ with $\lambda>0$. Then   $G_\lambda\in L^1$ and $\abs{G_\lambda(x)}\le CF(x)$ for any $\lambda$. By assumption \eqref{Ncond} and definition\eqref{pnNorm}, we have $\norm{F}_{L^1}=\cN(f,p)<\infty$ and so $F\in L^1$. Therefore, sending $\lambda\to\infty$ and using the dominated convergence theorem   yields $G\in L^1$ with \begin{align}\label{GFest}
		\norm{G}_{L^1}\le C\norm{F}_{L^1}.
	\end{align} Recalling definition \eqref{Gxdef}, we find $(2\pi)^{-1}\norm{G}_{L^1}=$l.h.s. of \eqref{fRemEst}. 
	Thus we conclude   \eqref{fRemEst} from \eqref{GFest}.
\end{proof}

\lemref{lemRemEst} and	\corref{lemNfin'} together imply the following results:
\begin{corollary}\label{corRempEst'} 
	Let $p$ and $l$ be two integers with  $\nu \ge p > l\ge0$.
	If $f\in C^{\infty}(\Rb)$ and $f^{(l+1)} $ {has compact support}, 
	then  there exists $C>0$ such that the extension $\tilde f$ from \eqref{tfDef} satisfies the remainder estimate \eqref{fRemEst}.
\end{corollary}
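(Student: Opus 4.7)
The statement is a direct combination of the two preceding results, so the plan is essentially to observe that the hypotheses of Corollary~\ref{lemNfin'} are met by $f$, and that the conclusion of Corollary~\ref{lemNfin'} (i.e. condition \eqref{Ncond}) is exactly the hypothesis needed to apply Lemma~\ref{lemRemEst}, whose conclusion is the desired bound \eqref{fRemEst}.

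More concretely, I would proceed as follows. First, the integers $p, l$ satisfy $p > l \ge 0$ by hypothesis, and $f \in C^\infty(\mathbb{R})$ with $f^{(l+1)}$ of compact support; these are exactly the conditions of Corollary~\ref{lemNfin'}. Invoking that corollary therefore yields a constant $C_1 > 0$ (depending on $f, \nu, p, l$) such that $\mathcal{N}(f, p) \le C_1$, which is \eqref{Ncond}.

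Next, the remaining hypothesis of Lemma~\ref{lemRemEst} is $0 \le p \le \nu$, which holds by assumption ($\nu \ge p \ge l+1 \ge 1$). Since $f \in C^\infty \subset C^{\nu+2}$ and \eqref{Ncond} has just been verified, Lemma~\ref{lemRemEst} applies to $f$ and produces a constant $C > 0$ such that
\begin{equation*}
\int \abs{d\widetilde{f}(z)}\, \abs{\Im(z)}^{-(p+1)} \le C,
\end{equation*}
which is precisely \eqref{fRemEst}.

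There is no real obstacle here — the corollary is a packaging result, and the only thing to check is the numerology of the indices, namely that the hypothesis $\nu \ge p > l \ge 0$ simultaneously feeds into the ranges required by Lemma~\ref{lemRemEst} ($0 \le p \le \nu$) and by Corollary~\ref{lemNfin'} ($p > l \ge 0$). Both are immediate, so the proof is essentially a two-line citation.
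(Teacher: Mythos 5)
Your proposal is correct and matches the paper's own argument exactly: the paper introduces this corollary with the phrase that Lemma~\ref{lemRemEst} and Corollary~\ref{lemNfin'} together imply it, which is precisely the two-step citation you give (Corollary~\ref{lemNfin'} supplies \eqref{Ncond}, and Lemma~\ref{lemRemEst} then yields \eqref{fRemEst}). Your check of the index conditions $\nu \ge p > l \ge 0$ against the hypotheses of both cited results is the only content needed, and it is done correctly.
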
 

\begin{corollary}\label{corRempEst} 
	Let $p$ and $l$ be two integers with  $\nu \ge p > l \ge 0$.
	Let $P_l(x)$ be a polynomial   with $\deg \le l$. Let $f\in C^{\infty}(\Rb)$ have compactly supported derivatives. 
	Then there exists $C>0$ such that the extension $\tilde f$ from \eqref{tfDef} satisfies
	\begin{align}\label{rempEst}
		\int \abs{d\widetilde f (z)P_l(z)} \abs{\Im(z)}^{-(p+1)} \le C.
	\end{align}
\end{corollary}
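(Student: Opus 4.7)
The plan is to reduce \eqref{rempEst} to Lemma~\ref{lemRemEst} by absorbing the polynomial factor $P_l(z)$ into a shift of the weight exponent from $p$ to $p-l$.

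First, I would exploit the structure of the almost analytic extension \eqref{tfDef}: its defining cutoff $\eta(y/\langle x\rangle)$ is supported on $|y|\le 2\langle x\rangle$, so on $\supp d\tilde f$ one has $|z|^2 = x^2+y^2 \le 5\langle x\rangle^2$, which gives $|P_l(z)|\le C_{P_l}\langle x\rangle^l$ for a constant depending only on the coefficients of $P_l$. This bounds the left-hand side of \eqref{rempEst} by
\[
C_{P_l}\int |d\tilde f(z)|\,\langle x\rangle^l|\Im(z)|^{-(p+1)}.
\]

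Second, I would follow the proof of Lemma~\ref{lemRemEst} line by line, carrying the additional factor $\langle x\rangle^l$ throughout. The pointwise derivation of $G(x)\le CF(x)$ with $F(x)=\sum_{m=0}^{\nu+2}\langle x\rangle^{m-p-1}|f^{(m)}(x)|$ is unaffected by this factor, and after integrating in $x$ the estimate becomes
\[
\int |d\tilde f(z)|\langle x\rangle^l|\Im(z)|^{-(p+1)} \le C\int \langle x\rangle^l F(x)\,dx = C\,\cN(f,p-l),
\]
by the definition \eqref{pnNorm}. Equivalently, one may view this as applying Lemma~\ref{lemRemEst} with $p$ replaced by $p-l$; this is admissible because $\nu\ge p\ge p-l$, and since $p>l$ we have $p-l\ge 1\ge 0$.

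Third, it remains to show $\cN(f,p-l)<\infty$, which I would obtain from Lemma~\ref{lemNfin}. The hypothesis that $f\in C^\infty(\Rb)$ has compactly supported derivatives entails that $f^{(m)}$ is compactly supported for every $m\ge 1$ and, since $f'$ has compact support, that $f$ itself is bounded. Checking condition \eqref{fCond} with $p$ replaced by $p-l$: for $m\ge 1$ the compact support of $f^{(m)}$ gives the required polynomial decay trivially; for $m=0$ the boundedness of $f$ combined with the admissible choice $0<\rho\le p-l$ yields $\langle x\rangle^{-(p-l)+\rho}|f(x)|\le \|f\|_\infty$. Lemma~\ref{lemNfin} then delivers $\cN(f,p-l)<\infty$, completing the proof.

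The only subtle point is that $f$ need not be integrable or compactly supported, so the $m=0$ contribution to $\cN(f,p-l)$ must be handled via decay of the weight $\langle x\rangle^{-(p-l)-1}$; the strict inequality $p>l$ is precisely what makes this term integrable. No other step presents a substantive obstacle.
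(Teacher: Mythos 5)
Your argument is correct, but it takes a genuinely different route from the paper's. The paper absorbs the polynomial into the function: it sets $f_l:=P_lf$, uses $\di_{\bar z}P_l=0$ to identify $P_l(z)\,d\tilde f(z)$ with the measure attached to $f_l$, checks via Leibniz that $f_l^{(l+1)}$ is compactly supported (the top term $P_l^{(l+1)}f$ dies because $\deg P_l\le l$, and all other terms contain some $f^{(q)}$, $q\ge1$), and then invokes Corollary~\ref{corRempEst'}. You instead keep $f$ fixed and absorb the polynomial into the weight: on $\supp d\tilde f$ the cutoff $\eta(y/\br{x})$ forces $|y|\le 2\br{x}$, hence $|P_l(z)|\le C\br{x}^l$, and rerunning the proof of Lemma~\ref{lemRemEst} with this extra $x$-dependent factor (which passes untouched through the $y$-integration) replaces $\cN(f,p)$ by $\cN(f,p-l)$, finite by Lemma~\ref{lemNfin} precisely because $p>l$ makes $\br{x}^{-(p-l)-1}$ integrable against the bounded function $f$. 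Both proofs use the strict inequality $p>l$ at the same conceptual point (one derivative, respectively one power of decay, must survive the polynomial). One small caution: your parenthetical ``equivalently, apply Lemma~\ref{lemRemEst} with $p$ replaced by $p-l$'' is not literally correct, since that would bound $\int|d\tilde f|\,|\Im z|^{-(p-l+1)}$ rather than the weighted integral you need; the valid step is the line-by-line rerun you describe, for which the relevant hypothesis is still $\nu\ge p$ (needed for the $|y|^{\nu-p}$ integral near $y=0$), which holds. A side benefit of your route is that it avoids the slightly delicate identification $P_l(z)\,d\tilde f(z)=d\widetilde{f_l}(z)$, which, if $\widetilde{f_l}$ is taken literally from formula \eqref{tfDef}, holds only up to a harmless top-order correction term.
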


\begin{proof}
	
	Let  $ f_l(x):=P_l(x)\chi(x)$.   Observe that  since $\di_{\bar z}P_l(z)=0$, we have by \eqref{measDef} that
	\begin{align}\label{measForm}
		P_l(z) d\widetilde f(z)= \,d\widetilde {f_l}(z).
	\end{align}
	We compute
	\begin{align}
		f_l^{(l+1)}=P_l^{(l+1)}f+ \sum_{k=0}^l \binom{l+1}{k}P_l^{(k)}f^{(l+1-k)}.
	\end{align}
	The term leading term on the r.h.s. vanishes since $\deg p\le l$. Each term in the sum lies in $C_c^\infty$ since $f^{(q)}\in C_c^\infty $ for $q\ge1$. Thus $f_l$ verifies the condition of \corref{corRempEst'} and so \eqref{rempEst} follows.
\end{proof}

					\section{Commutator expansions}\label{4.sec:commut}

In this appendix, we take 
$\tilde f(z)$,  $d\widetilde f(z)$ to be as in \eqref{tfDef}--\eqref{measDef}.

We frequently use the following result, taken from \cite[Lemma B.2]{HunSig1}:
\begin{lemma}
	\label{lemHSj-rep}
	Let $f\in C^{\nu+2}(\Rb)$ satisfy \eqref{Ncond} for some $p\ge0$. 
	Then for any self-adjoint operator $A$ on $\cH$, 
	\begin{align}
		\label{HSj-rep}
		\frac{1}{p!}f^{(p)}(A)=\int_\Cb d\tilde f(z)(z-A)^{-(p+1)},
	\end{align}
	where the integral   converges absolutely in operator norm and is uniformly bounded  in $A$.  
\end{lemma}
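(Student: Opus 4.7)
The plan is to establish the identity in two stages: first prove absolute convergence and uniform boundedness, then verify the identity by reducing it to a scalar equation via the spectral theorem and applying Stokes' theorem.

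For the convergence claim, I would use that since $A$ is self-adjoint, $(z-A)^{-(p+1)}$ is bounded in operator norm by $|\mathrm{Im}(z)|^{-(p+1)}$ for $\mathrm{Im}(z)\neq 0$. Combining this with the remainder estimate of \lemref{lemRemEst} (valid because $f$ satisfies \eqref{Ncond}), we get
\begin{equation*}
\int_\Cb \abs{d\tilde f(z)}\,\norm{(z-A)^{-(p+1)}} \le \int_\Cb \abs{d\tilde f(z)}\,\abs{\mathrm{Im}(z)}^{-(p+1)} \le C,
\end{equation*}
with $C$ depending only on $f$, $\nu$, $p$ (in particular, independent of $A$). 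This gives absolute norm-convergence and the uniform-in-$A$ bound simultaneously.

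For the identity, by the spectral theorem for self-adjoint $A$, it suffices to check the scalar version: for every $\lambda\in\R$,
\begin{equation*}
\frac{1}{p!}f^{(p)}(\lambda) = \int_\Cb d\tilde f(z)\,(z-\lambda)^{-(p+1)}.
\end{equation*}
Fix $\lambda\in\R$ and apply Stokes' theorem (equivalently, the generalized Cauchy–Pompeiu formula) on the domain $\Omega_{\e,R}=\{z:\e<|z-\lambda|<R,\,|\mathrm{Im}(z)|>\e'\}$, combined with the fact that the almost analytic extension $\tilde f$ from \eqref{tfDef} is compactly supported in $z$ (thanks to the cutoff $\eta(y/\br{x})$), so that the boundary at infinity contributes nothing once $R$ is large enough. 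The $\di_{\bar z}$ derivative inside $d\tilde f$ matches the $-(2\pi)^{-1}$ in \eqref{measDef}, so the bulk term reproduces the r.h.s. of the scalar identity. The only nontrivial boundary contribution is the small circle around $\lambda$: a standard residue computation (using that $\tilde f(z)$ agrees with the Taylor polynomial $\sum_{k=0}^{\nu+1}f^{(k)}(\lambda)(iy)^k/k!$ up to $O(|z-\lambda|^{\nu+2})$ near $\lambda$) picks out precisely $\frac{1}{p!}f^{(p)}(\lambda)$ as $\e\downarrow 0$, while the small-$\e'$ limit of the slit along the real axis is controlled because the singularity $(z-\lambda)^{-(p+1)}$ is integrable against $|\di_{\bar z}\tilde f|$ thanks again to \lemref{lemRemEst}.

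The main obstacle is the careful residue/limit analysis around $z=\lambda$: one must verify that the boundary integrals from shrinking the circle give the expected $f^{(p)}(\lambda)/p!$, which relies on the fact that the almost-analytic extension kills $(z-\lambda)^{-(p+1)}$ through order $\nu+1$, leaving exactly the $p$-th Taylor coefficient. All other steps (Stokes' theorem on $\Omega_{\e,R}$, passing to the limit $R\to\infty$ and $\e'\downarrow 0$, and transferring the scalar identity to operators via the spectral theorem) are standard. The uniformity in $A$ in the final bound is automatic once convergence is established uniformly in $\lambda$, which follows from the $A$-independent bound in \lemref{lemRemEst}.
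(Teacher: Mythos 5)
The paper does not prove this lemma at all: it is quoted verbatim from \cite[Lemma B.2]{HunSig1}, so the relevant comparison is with the standard Helffer--Sj\"ostrand argument in that reference, which is exactly the route you take (absolute convergence from \lemref{lemRemEst} plus the resolvent bound $\|(z-A)^{-(p+1)}\|\le|\Im(z)|^{-(p+1)}$, reduction to a scalar identity via the spectral theorem, and a Green/Stokes computation with a residue at $z=\lambda$). Your first part and the spectral-theorem reduction are fine. However, two steps in the residue argument are misstated in a way that would break the computation as written.

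First, the local expansion of $\tilde f$ near $\lambda$ is not $\sum_{k=0}^{\nu+1}f^{(k)}(\lambda)(iy)^k/k!+O(|z-\lambda|^{\nu+2})$; the difference between $\tilde f(z)$ and that expression is only $O(|x-\lambda|)$, and integrating $(iy)^k(z-\lambda)^{-(p+1)}$ over the shrinking circle produces $2^{-p}f^{(p)}(\lambda)/p!$ for the $k=p$ term (together with contributions from the neglected $O(|x-\lambda|)$ error that do not vanish for $p\ge 1$), not $f^{(p)}(\lambda)/p!$. The correct statement, which is what makes the residue come out right, is that $\tilde f(z)=\sum_{m=0}^{\nu+1}\frac{f^{(m)}(\lambda)}{m!}(z-\lambda)^m+O(|z-\lambda|^{\nu+2})$: one must Taylor-expand each coefficient $f^{(k)}(x)$ about $x=\lambda$ (using $f\in C^{\nu+2}$) and recombine via $\sum_{j+k=m}\frac{(x-\lambda)^j}{j!}\frac{(iy)^k}{k!}=\frac{(z-\lambda)^m}{m!}$. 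Only then does $(z-\lambda)^m(z-\lambda)^{-(p+1)}$ have residue $\delta_{mp}$, the error term contributes $O(\varepsilon^{\nu+1-p})\to0$ (using $\nu\ge p$), and one gets exactly $f^{(p)}(\lambda)/p!$. Second, $\tilde f$ is \emph{not} compactly supported under the hypotheses of the lemma: the cutoff $\eta(y/\br{x})$ only confines the support to $\{|y|\le 2\br{x}\}$, and $f$ itself is only assumed to satisfy \eqref{Ncond}, not to have compact support. So you cannot discard the boundary term at $|z|=R$ by saying $\tilde f$ vanishes there; you need either a pointwise decay argument or, as the paper itself does later in the proof of \lemref{lemA.2}, a truncation $f^\lambda=\eta^\lambda f$ followed by a limit $\lambda\to\infty$ justified by the uniform bound from \lemref{lemRemEst}. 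Both repairs are routine, but as stated these two steps are the ones your argument leans on, and neither goes through literally as written.
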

\begin{remark}
	Note that \eqref{Ncond} ensures $f^{(p)}$ is bounded independent of $A$ and  the remainder estimate in \lemref{lemRemEst} ensures the norm convergence of the r.h.s. of \eqref{HSj-rep}.
\end{remark}

We call Equation \eqref{HSj-rep} the \textit{Helffer-Sj\"ostrand (HS) representation}. 
	It is possible to obtain stronger results with less regularity assumption on $f$ using some technical estimates from \cite[Sec.~5]{ABG}. We do not pursue this generality here, as the assumption \eqref{Ncond} already suffices for our purposes. 

The HS representation \eqref{HSj-rep}, together with the remainder estimate \eqref{fRemEst}, implies the following commutator expansion:

\begin{lemma}\label{lemA.2} 
	Let $n\ge1$.
	Let $f\in C^{n+3}(\Rb)$ satisfy \eqref{Ncond} with $p=1$. 
	Let $A$ be an operator on $\cH$. Let $\Phi$ be a lower semi-bounded self-adjoint operator on $\cH$. {Let $f_s:=f(s^{-1}(\Phi-\al))$ for some fixed $\al$ and all $s>0$.}
	Suppose there exists $c\ge0$ such that
	\begin{equation}\label{phi-dom-cond}
		(\Phi+c)^{-1}\cD(A)\subset \cD(A),
	\end{equation}
	and 
	\begin{equation}\label{A.4}
		B_k:=		\ad{k}{\Phi}{A}\in\cB(\cH)\quad (1\le k\le n+1).
	\end{equation}
	Then $[A, f_s]\in\cB(\cH)$, and we have the expansion
	\				\begin{align}\label{4.Hcomm-exp} [A, f_s]= &- \sum_{k=1}^n{s^{- k}\over{k!}}B_kf^{(k)}_s -s^{-(n+1)}\Rem_{\rm left}(s)\\
		=&
		\sum_{k=1}^n(-1)^k{s^{- k}\over{k!}}f^{(k)}_s B_k +(-1)^{n+1}s^{-(n+1)}\Rem_{\rm right}(s),\label{4.Hcomm-exp-right} 
	\end{align}
	where 
	the remainders are defined by these relations and given explicitly by \eqref{left-rem}--\eqref{right-rem}.
	Moreover,  	 there exists $c>0$ depending only on $n$ and  $\cN(f,n+1)$, such that
	\begin{align}
		\norm{\Rem_{\rm left}(s)}_{\rm op}+\norm{\Rem_{\rm right}(s)}_{\rm op}\le& c\norm{B_{n+1}}.\label{4.A.21}
	\end{align}

\end{lemma}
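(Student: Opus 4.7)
The plan rests on three ingredients: the Helffer--Sj\"ostrand representation \eqref{HSj-rep}, the second resolvent identity, and the remainder estimate of Lemma~\ref{lemRemEst}. First, I would apply \eqref{HSj-rep} to the self-adjoint operator $s^{-1}(\Phi-\alpha)$ to write
\begin{equation*}
f_s = \int d\tilde f(z)\, R(z), \qquad R(z) := (z - s^{-1}(\Phi-\alpha))^{-1},
\end{equation*}
with the integral converging absolutely in operator norm by Lemma~\ref{lemRemEst}. Using the domain hypothesis \eqref{phi-dom-cond} to propagate $\cD(A)$-invariance from $(\Phi+c)^{-1}$ to $R(z)$, the bracket $[A,f_s]$ can then be computed on $\cD(A)$ as $\int d\tilde f(z)\,[A, R(z)]$.

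Next, I would compute $[A, R(z)]$ via the second resolvent identity $[A, R] = -R[A, R^{-1}]R$ and the observation that $[A, R(z)^{-1}] = -s^{-1}[A,\Phi] = s^{-1}B_1$, obtaining $[A, R(z)] = -s^{-1}R(z) B_1 R(z)$. To shift $B_1$ through the outer resolvent, I use the same identity applied to $B_k$: since $[B_k,\Phi] = -B_{k+1}$, we have $[B_k, R(z)] = -s^{-1} R(z) B_{k+1} R(z)$. Iterating this bookkeeping $n$ times, by induction on $n$, yields the two mirror-image expansions
\begin{align*}
[A, R(z)] &= -\sum_{k=1}^n s^{-k}\,B_k\,R(z)^{k+1} \;-\; s^{-(n+1)}\,R(z)\,B_{n+1}\,R(z)^{n+1}, \\
[A, R(z)] &= \sum_{k=1}^n (-1)^k s^{-k}\,R(z)^{k+1}\,B_k \;+\; (-1)^{n+1}s^{-(n+1)}\,R(z)^{n+1}\,B_{n+1}\,R(z),
\end{align*}
corresponding to commuting each $B_k$ all the way to the left, respectively all the way to the right.

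Integrating these identities against $d\tilde f(z)$ and invoking \eqref{HSj-rep} in the form $\int d\tilde f(z)\,R(z)^{k+1} = \tfrac{1}{k!}f_s^{(k)}$ produces \eqref{4.Hcomm-exp} and \eqref{4.Hcomm-exp-right}, with explicit remainders $\Rem_{\rm left}(s) = \int d\tilde f(z)\,R(z)\,B_{n+1}\,R(z)^{n+1}$ and $\Rem_{\rm right}(s) = \int d\tilde f(z)\,R(z)^{n+1}\,B_{n+1}\,R(z)$. The norm bound \eqref{4.A.21} then follows from $\|R(z)\|\le|\Im z|^{-1}$, which gives $\|R(z) B_{n+1} R(z)^{n+1}\|\le\|B_{n+1}\||\Im z|^{-(n+2)}$, combined with Lemma~\ref{lemRemEst} applied at $p = n+1$ (admissible since $\cN(f,n+1)\le\cN(f,1)<\infty$ by \eqref{Norder}). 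Since every term on the right-hand side of the expansions is a bounded operator, the commutator $[A,f_s]$ extends uniquely from $\cD(A)$ to a bounded operator on all of $\cH$, proving $[A,f_s]\in\cB(\cH)$.

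The main technical obstacle will be the inductive bookkeeping in the middle step, which must track signs, factors of $s^{-1}$, and the migration of each $B_k$ past successive resolvents without losing control of the remainder structure; the subsidiary point, namely verifying that the single-point hypothesis \eqref{phi-dom-cond} suffices to justify the pointwise manipulations on $\cD(A)$, is more delicate but ultimately routine, since the final identity is between bounded operators and admits unique continuous extension from any core.
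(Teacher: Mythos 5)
Your core computation --- the resolvent commutator identity $[A,R(z)]=-s^{-1}R(z)B_1R(z)$, the iteration via $[B_k,R(z)]=-s^{-1}R(z)B_{k+1}R(z)$ to produce the left- and right-ordered expansions, the integration against $d\tilde f$ using \eqref{HSj-rep}, and the remainder bound via $\|R(z)\|\le|\Im(z)|^{-1}$ and Lemma~\ref{lemRemEst} at $p=n+1$ --- is exactly the paper's argument, signs and all. But there is a genuine gap at your very first step: you write $f_s=\int d\tilde f(z)\,R(z)$ and claim absolute convergence from Lemma~\ref{lemRemEst}. That representation is \eqref{HSj-rep} with $p=0$, and both it and the corresponding remainder estimate \eqref{fRemEst} require $\cN(f,0)<\infty$, whereas the hypothesis only gives $\cN(f,1)<\infty$. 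For the functions the lemma is actually applied to (the ASTLO cutoffs $\chi\in\cX$, which tend to $1$ at $+\infty$), the $m=0$ term of $\cN(f,0)$ is $\int\langle x\rangle^{-1}|f(x)|\,dx=\infty$, so the integral $\int d\tilde f(z)\,R(z)$ need not converge absolutely and cannot be used to represent $f_s$. Consequently the identity $[A,f_s]=\int d\tilde f(z)[A,R(z)]$, on which everything else rests, is not justified as written --- even though the \emph{right-hand} side converges (the expanded integrand decays like $|\Im(z)|^{-2}$), you have no valid representation of the left-hand side to equate it with.

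The paper flags precisely this obstruction and repairs it by a truncation: set $f^\lambda=\eta^\lambda f$ with $\eta^\lambda$ a compactly supported cutoff, so that $f^\lambda$ satisfies \eqref{Ncond} for all $p\ge0$ and the $p=0$ representation applies to $f^\lambda_s$; derive your two expansions for $[A,f^\lambda_s]$; observe that all terms appearing there (the $f^{(k)}_s$ with $k\ge1$ and the remainders) are controlled uniformly in $\lambda$ by $\cN(f,p)$ with $p\ge1$; then let $\lambda\to\infty$. Your proof needs this regularization step (or an equivalent one) inserted before the integration; with it, the rest of your argument goes through unchanged.
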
 
\begin{proof}
	Within this proof we write $R=(z-x_s)^{-1}$ with  $x_s=s^{-1}(\Phi-\al)$.			
	Hypothesis \eqref{phi-dom-cond} shows that  $$R=(\Phi+c)^{-1}(z (\Phi+c)^{-1}-x_s(\Phi+c)^{-1})^{-1}$$ maps $\mathcal{D}(A)$ into itself for $z$ with large $|\Im(z)|$ and therefore for all $z$ with $\Im(z)\neq 0$.  			

	It follows that 
	\begin{equation}\label{4.A.26}
		\big[ A , R \big ]=-s^{-1}R\Ad_\Phi(A)R
	\end{equation}
	holds in the sense of quadratic forms on $\mathcal{D}(A)$. Since $R$ is bounded and $\Ad_\Phi(A)$ is bounded by assumption, the r.h.s. of \eqref{4.A.26} is bounded and so $[A,R]$ extends to a bounded operator on $\cH$.
	
	Using \eqref{4.A.26}, we   proceed by commuting successively the commutators $B_k:=\ad{k}{\Phi}{A}$ to left and right, respectively. This way we  obtain
	\begin{align}
		&[A , R]\notag
		\\=&-\sum_{k=1}^n s^{-k}B_k R^{k+1} -s^{-(n+1)}RB_{n+1}R^{n+1} \label{B25}
		\\=&\sum_{k=1}^n (-1)^ks^{-k}R^{k+1}B_k +(-1)^{n+1}s^{-(n+1)}R^{n+1}B_{n+1}R,\label{B26}
	\end{align}					which hold on all of $\cH$ since  $B_k$'s are bounded operators by assumption \eqref{A.4}.

	Since $f$ may not decay at $\infty$, we cannot directly express $f_s=f(s^{-1}(\Phi-\al))$ using the HS representation \ref{HSj-rep}. We therefore introduce a cutoff as follows. Let $\eta^\lambda\in C_c^\infty(\Rb)$, $\lambda>0$ be cutoff functions   with  	$\eta^\lambda(x)\equiv1$ for $\abs{x}\le\lambda$,  $\eta(x)\equiv0$ for $\abs{\mu}\ge\lambda+1$, and $\norm{\eta^\lambda}_{C^{n+3}}\le C$ for all $\lambda$.
	Set $f^\lambda:= \eta^\lambda f$.
	Since $f^\lambda\in C_c^{n+3}$, it satisfies \eqref{Ncond} for all $p\ge0$. Thus the HS representation  \ref{HSj-rep} holds with $p=0$ and so 
	\begin{align}\label{4.HSj-rep7}
		[A, f_s^\lambda]=\int d\widetilde {f^\lambda}(z) \big[ A , R \big ],
	\end{align}
	which holds a priori on $\cD(A)$. 
	
	Plugging expansions \eqref{B25}--\eqref{B26} into \eqref{4.HSj-rep7} yields
	\begin{align}
		&\quad[A, f_s^\lambda]\notag\\&=-\sum_{k=1}^n{s^{- k}\over{k!}} B_k\int d\widetilde {f^\lambda}(z) R^{k+1} -s^{-(n+1)}\Rem_{\rm left}^\lambda(s),\label{4.HSj-rep9}\\
		&=\sum_{k=1}^n(-1)^k{s^{- k}\over{k!}}\int d\widetilde {f^\lambda}(z) R^{k+1} B_k+(-1)^{n+1}s^{-(n+1)}	\Rem_{\rm right}^\lambda(s)\label{4.A.20},
	\end{align}
	where
	\begin{align}
		\Rem_{\rm left}^\lambda(s)&=\int d\widetilde {f^\lambda}(z)RB_{n+1}R^{({n+1})},\label{left-rem}\\
		\Rem_{\rm right}^\lambda(s)&=\int d\widetilde {f^\lambda}(z)R^{({n+1})}B_{n+1}R\label{right-rem}. 
	\end{align}    
	Since the operator $B_{n+1}$ is bounded independent of $\lambda,\,z$, and $\norm{R}\le \abs{\Im(z)}^{-1}$, we have
	\begin{align}\label{4.A.23}
		\notag  &\norm{\Rem_{\rm left}^\lambda(s)}_{\rm op}+\norm{\Rem_{\rm right}^\lambda(s)}_{\rm op}\\\le& 2 \| B_{n+1}\|  \int |d\widetilde {f^\lambda}(z)| R^{n+2}\notag\\
		\le &2 \| B_{n+1}\|  \int |d\widetilde {f^\lambda}(z)| |\Im(z)|^{-(n+2)}.
	\end{align}
	Similarly we could bound the sums in \eqref{4.HSj-rep9}--\eqref{4.A.20}. Thus we see $[A,f^\lambda_s]$ extends to a bounded operator on $\cH$ for each $\lambda$.
	
	By \eqref{Norder} and the assumption $\cN(f,1)\le C$, $f$ satisfies condition \eqref{Ncond} with $p=1,\ldots, n+1$. Hence,  sending $\lambda\to \infty$ in \eqref{4.HSj-rep9}--\eqref{right-rem} and using \eqref{HSj-rep} for $p=1,\ldots, n$ and the remainder estimate \eqref{fRemEst} for $p=n+1$, we conclude that $[A,f_s]\in \cB(\cH)$ and  expansions 
	\eqref{4.Hcomm-exp}--\eqref{4.Hcomm-exp-right} and  estimate   \eqref{4.A.21} hold.\end{proof}    

	\bibliographystyle{plainnat}

\end{document}